\theoremstyle{plain}
\newtheorem{theorem}{Theorem}
\newtheorem{corollary}[theorem]{Corollary}
\newtheorem{proposition}[theorem]{Proposition}
\newtheorem{lemma}[theorem]{Lemma}
\newtheorem{example}{Example}
\theoremstyle{definition}
\newtheorem{definition}[theorem]{Definition}
\newtheorem{note}[theorem]{Note}
\newcommand{\partd}[1]{\frac{\partial}{\partial {#1}}}
\newcommand{\Hil}{\mathcal{H}}
\newcommand{\C}{\mathbb{C}}
\renewcommand{\P}{\mathcal{P}}
\renewcommand{\d}{{\rm d}}
\numberwithin{equation}{section}
\numberwithin{theorem}{section}
\numberwithin{example}{section}
\journal{J. Differential Equations}
\begin{document}

\begin{frontmatter}



\title{Lie systems and Schr\"odinger equations}


\author{J.F. Cari\~nena}

\address{Faculty of Sciences and IUMA, University of Zaragoza \\
c. Pedro Cerbuna 12, 50.009 Zaragoza, Spain.}

\author{J. Clemente-Gallardo and J.A. Jover-Galtier}
\address{Instituto de Biocomputaci\'on y F\'isica de Sistemas Complejos (BIFI)\\
c. Mariano Esquillor (Edificio I+D), 50.018 Zaragoza, Spain.}

\author{J. de Lucas}
\address{Department of Mathematical Methods in Physics, University of Warsaw, \\ ul. Pasteura 5, 02-093, Warszawa, Poland}

\begin{abstract}
We prove that $t$-dependent Schr\"odinger equations on finite-dimensional Hilbert spaces determined by $t$-dependent Hermitian Hamiltonian operators 
 can be described through Lie systems admitting a Vessiot--Guldberg Lie algebra of K\"ahler vector fields. This result is extended to other related Schr\"odinger equations, e.g. projective ones, and their properties are studied through Poisson, presymplectic and K\"ahler structures.
This leads to derive nonlinear superposition rules for them depending in a lower (or equal) number of solutions than standard linear ones. 
Special attention is paid to applications in  $n$-qubit systems.

\end{abstract}

%
%

\begin{keyword}
Hamiltonian vector field \sep K\"ahler structure \sep Lie system \sep Poisson structure \sep projective Schr\"odinger equation  \sep superposition rule \sep symplectic structure \sep $t$-dependent Schr\"odinger equation \sep Vessiot--Guldberg Lie algebra.
	
\MSC 34A26 (Primary) \sep 17B81 \sep 34A34 \sep 53Z05  (Secundary)

\end{keyword}

\end{frontmatter}

\section{Introduction}
It is undoubtable that geometric techniques, e.g. Lie symmetries or jet bundles, have become a standard tool in the study of differential equations and related problems  \cite{CRC94,Be84,Olver}.
In particular, this work focuses on the geometric analysis of Lie systems appearing in quantum mechanics
\cite{CGM00,CGM07,Ib99,LS,PW}. A {\it Lie system} is a non-autonomous system of first-order ordinary differential
equations  whose general solution can be written in terms of a generic finite
family of particular solutions and a set of constants via a (generally) nonlinear function, a so-called {\it superposition rule} \cite{CGM00,CGM07,LS,PW}.

Lie systems occur in the research on the integrability of quantum systems \cite{ADR12}, $t$-dependent
Schr\"odinger equations \cite{CLR09},  $t$-dependent frequency Smorodinsky--Winternitz oscillators \cite{BCHLS13}, several types of Ermakov systems and Milne--Pinney equations \cite{CL,CS15,CLR08}, string theory \cite{CV14}, deformation of mechanical systems \cite{Fi14},  control systems \cite{MT14}, etcetera (see \cite{Dissertations}). 

Lie systems of physical or mathematical relevance can be studied via symplectic \cite{ADR12,CGM00}, Poisson \cite{CLS13}, $k$-symplectic \cite{LV15}, Jacobi \cite{HLS15AIMS}, Dirac \cite{CGLS13} and Nambu structures \cite{LS16}. This allows one to use geometric techniques to analyse their properties, e.g. their solutions \cite{ADR12}, constants of motion \cite{BCHLS13}, Lie symmetries \cite{CLS13}, and other features \cite{CGM00,CGM07,FLV10,Ib99,PW}. This has also led to develop new mathematical tools so as to investigate Lie systems \cite{LV15}.

Although Lie systems have already been applied in quantum mechanical systems \cite{BHLS15,CLR09,CR03,CR04}, there still exist many open problems. In particular, this article addresses the application of Lie systems in $t$-dependent Schr\"odinger equations on finite-dimensional Hilbert spaces and their projections/restrictions to relevant spaces, e.g. projective Schr\"odinger equations \cite{BH08}. A special role is played by the use of geometric structures, e.g. K\"ahler structures, which enables us to calculate superposition rules through the distributional approach devised in \cite{CGM07} and its refinement for Lie systems with compatible geometric structures \cite{BCHLS13}.
It is worth noting that K\"ahler structures naturally appear as a consequence of the quantum nature of the problems under study.

In geometric terms, the {\it Lie--Scheffers Theorem} \cite{LS,Dissertations,CGM07} states that a Lie
system amounts to a $t$-dependent vector field taking
values in a finite-dimensional Lie algebra of vector fields: a {\it
	Vessiot--Guldberg Lie algebra} (VG--Lie algebra) \cite{Dissertations,Ib00,Ib09}.
	
A particular branch of the research on Lie systems is devoted to the study of Lie
systems admitting a VG--Lie algebra
of Hamiltonian vector fields with respect to a geometric
structure.
In the pioneering work \cite{CGM00},
the authors briefly analysed Lie systems with a VG--Lie algebra of
Hamiltonian vector fields with respect to a symplectic structure.
The study of Lie--Hamilton systems, i.e. Lie systems with a
VG--Lie algebra of Hamiltonian vector fields relative to a Poisson structure,
was initiated in \cite{CLS13}. This gave rise to  new methods to
investigate such Lie systems
\cite{Ru08,Ru10,BBHLS13}. 
Lie systems admitting a VG--Lie algebra of Hamiltonian
vector fields relative to Dirac and $k$-symplectic
structures were studied in \cite{LV15,CGLS13}.  Recently, Lie systems with a VG--Lie algebra of Hamiltonian
vector fields relative to a Nambu structure have been investigated in \cite{LS16}.

The first aim of this work is to show that $t$-dependent Schr\"odinger
equations on a finite-dimensional Hilbert space $\mathcal{H}:=\mathbb{C}^n$ related to a $t$-dependent Hermitian Hamiltonian operator can be
studied through Lie systems admitting a VG--Lie algebra $V_{\mathcal{M}_{2n}} \simeq\mathfrak{u}(n)$
of K\"ahler vector fields  with respect to the K\"ahler structure
induced by the natural Hermitian product on $\mathbb{C}^n$ \cite{BH01,CCM07}. 
We prove that $t$-dependent Schr\"odinger equations on $\mathbb{C}^n$
related to $t$-dependent traceless Hermitian Hamiltonian operators 
admit 
nonlinear superposition rules depending on $n-1$ particular solutions. Thus, such quantum systems can be endowed with 
a simple,  generally nonlinear, superposition rule  allowing us to recover their general solutions by means of a lower number of particular solutions than by standard linear superposition rules.

The Lie groups $U(1)$ and $\mathbb{R}_+$ act freely on $\mathbb{C}_0^n:=\mathbb{C}^n\backslash\{(0,\ldots,0)\}$, by multiplication. The corresponding spaces of orbits are denoted by $\mathbb{C}_0^n/U(1)$ and $\mathbb{C}^n_0/\mathbb{R}_+$.  To highlight that previous spaces can be considered as real  manifolds,  they will be denoted  $\mathcal{R}_n$ and $\mathcal{S}_n$, respectively. Likewise, the spaces $\mathbb{C}^n$, $\mathbb{C}_0^n$ and $\mathbb{C}_0^n/\mathbb{C}_0$ will be represented by $\mathcal{M}_{2n}$, $\mathcal{M}^\times_{2n}$ and $\mathcal{P}_n$, respectively.
We prove that the restriction to $\mathcal{M}_{2n}^\times$ of the $t$-dependent Schr\"odinger
equations referred in the preceding paragraph can be projected onto $\mathcal{R}_n$ and restricted to the unit sphere $\mathcal{S}_n$ giving rise to Lie systems admitting VG--Lie algebras of Hamiltonian vector fields relative to different geometric structures, e.g. Dirac and Poisson structures. 

Subsequently, the solutions of the
referred to as {\it projective Schr\"odinger equations} \cite{BH01} are recovered
through the projection onto the projective space
$\mathcal{P}_{n}$ of the $t$-dependent 
Schr\"odinger equations on $\mathcal{M}^\times_{2n}$. This allows us to understand geometrically standard projective Schr\"odin-ger
equations as Lie systems admitting a
VG--Lie algebra of K\"ahler vector fields relative to the K\"ahler structure induced by the Study--Fubini metric on
$\mathcal{P}_{n}$ \cite{BH01}.

Above findings suggest us to define a new type of Lie systems
possessing a Lie algebra of K\"ahler vector fields with respect to
a K\"ahler structure, the {\it K\"ahler--Lie systems}, and to use techniques from Riemannian and symplectic
geometry to study them. 


Using our results we derive geometrically superposition rules for $t$-dependent Schr\"odinger equations on $\mathcal{M}_{2n}$ related to $t$-dependent traceless Hermitian Hamiltonian operators. This allows us to obtain  superposition rules without the integration of vector fields or PDEs as in standard methods \cite{CGM07,PW}. Similarly, we study and calculate superposition rules for the projections of the previous Schr\"odinger equations on certain spaces $\mathcal{S}_n$, $\mathcal{R}_n$ and $\mathcal{P}_{n}$. Schr\"odinger equations on $\mathcal{M}_4,\mathcal{R}_2,\mathcal{P}_2$ are analysed in detail.  Most relevant results concerning the preceeding equations and superposition rules are summarised in Table \ref{table1}. Their interest is due to its occurrence in the research on qubits.

\!\!\!\!\!
\begin{table}[h!]
	\label{table1}
	\caption{{\small {The following diagram illustrates the geometric structures and natural inclusions employed to study the Lie systems induced by the projection on each space of $t$-dependent Schr\"odinger equations on $\mathcal{M}_{2n}^\times$ related to traceless Hermitian Hamiltonian operators. The number $m$ stands for the number of particular solutions of their superposition rules. The right  column shows some known diffeomorphisms used in our work.
			}}}
			\begin{center}
%
\begin{minipage}{8cm}
					\xymatrix@C=2em@R=1.8em{
						&\underset{\underset{\text{\rm K\"ahler}}{m=n-1}}{\mathcal{M}_{2n}^\times}
						\ar[dd]_{\pi_{\mathcal{MP}}}
						\ar[dr]_{\pi_{\mathcal{MS}}}
						\ar[ld]_{\pi_{\mathcal{MR}}}&\\
						\underset{\underset{\text{Symplectic/Riemann}}{m=n}}{\mathcal{R}_n}
						\ar[dr]^{\pi_{\mathcal{RP}}}&&
						\underset{\underset{\text{Presymplectic}}{m=n-1}}{\mathcal{S}_n}
						\ar@/_{5mm}/[ul]_{\iota_{\mathcal{S}}}
						\ar[dl]^{\pi_{\mathcal{SP}}}\\
						&\underset{\underset{\text{\rm K\"ahler}}{m=n}}{\mathcal{P}_{n}}
						\ar@/^{5mm}/[ul]^{\iota_{\mathcal{P}}}&}
				\end{minipage}
				\begin{minipage}{6cm}
					$$
					\qquad\mathcal{M}_{2n}^\times \simeq \mathcal{S}_n \times \mathbb{R}_+,\\
					$$
					$$
					\qquad\mathcal{S}_n \simeq S^{2n-1}\simeq U(n)/U(n-1),\\
					$$
					$$
					\qquad\mathcal{R}_n \simeq \mathcal{P}_{n}\times \mathbb{R}_+. \\
					$$
				\end{minipage}
			\end{center}
		\end{table}\!\!\!\!\!\!\!\!\!\!
		The structure of the paper goes as follows. Section 2 concerns the description
		of the notions and conventions about $t$-dependent vector
		fields, Lie algebras, Lie systems and geometric structures to be used hereafter. Section 3 focuses  on the formalism of geometric quantum mechanics.
		Section 4 addresses the description of $t$-dependent Schr\"odinger equations on an $n$-dimensional Hilbert space through a Lie system admitting a VG--Lie algebra of K\"ahler vector fields on $\mathbb{C}^n$. The projection of  $t$-dependent Schr\"odinger equations onto $\mathcal{R}_n$ and their restrictions on $\mathcal{S}_n$ are analysed in Section 5. A Lie systems approach to projective $t$-dependent Schr\"odinger equations is given in Section 6.  In Section 7 we study superposition rules for $t$-dependent Schr\"odinger equations and their projections to previous spaces.  Section 8 is devoted to provide superposition rules for one-qubit systems and their projections onto the above mentioned spaces. The cases of $n$-qubit systems and other $t$-dependent Schr\"odinger equations and their projections are analysed in Section 9. Our results and future work are summarised in Section 10.

		\section{Fundamentals}\label{LSLS}
		
		If not otherwise stated, we assume mathematical objects to be real, smooth,
		and globally defined to omit 
		minor technical problems and to highlight main results. Systems of differential equations are assumed to be non-autonomous systems of ordinary differential equations.
		
		Let $(V,[\cdot,\cdot])$ be a Lie algebra with Lie bracket $[\cdot\,,\cdot]:V\times V\rightarrow V$. For the sake of simplicity, we will denote the Lie algebra by $V$ if $[\cdot,\cdot]$ is known from context. Given subsets
		$\mathcal{A}, \mathcal{B}\subset V$, we write
		$[\mathcal{A},\mathcal{B}]$ for the linear subspace of $V$ spanned by the
		Lie brackets between elements
		of $\mathcal{A}$ and $\mathcal{B}$, and we define ${\rm
			Lie}(\mathcal{B},[\cdot,\cdot])$ to be  the smallest Lie subalgebra
		of $V$ containing $\mathcal{B}$. We will simply write ${\rm Lie}(\mathcal{B})$ if it is clear what we mean.
		
		A {\it generalised distribution} $\mathcal{D}$ on a manifold $N$ is a function mapping each $x\in N$ to  a linear
		subspace $\mathcal{D}_x\subset T_xN$. We say that $\mathcal{D}$ is 
		{\it regular at} $x'\in N$ if ${\rm r}:x\in N\mapsto \dim\mathcal{D}_x\in \mathbb{N}\cup\{0\}$  is locally constant
		around $x'$. Similarly, $\mathcal{D}$ is said to be {\it regular on an open $U\subset
			N$} when ${\rm r}$ is constant on $U$. Finally,
		a vector field  $Y$ on $N$ takes values in $\mathcal{D}$, in short $Y\in\mathcal{D}$, if $Y_x\in\mathcal{D}_x$ for all $x\in N$.

		A {\it $t$-dependent vector field} $X$ on $N$ is a map $X:(t,x)\in\mathbb{R}\times
		N\mapsto X(t,x)\in TN$
		such that $\tau_N\circ X=\pi_2$, where $\pi_2:(t,x)\in\mathbb{R}\times N\mapsto
		x\in N$ and $\tau_N$ is the canonical projection of the tangent bundle on $N$
		. A $t$-dependent vector field $X$ on $N$ amounts to
		a family of vector fields $\{X_t\}_{t\in\mathbb{R}}$ on  $N$, where $X_t:x\in N\mapsto
		X(t,x)\in TN$ for all $t\in\mathbb{R}$ \cite{Dissertations}. A $t$-dependent vector field $X$ is {\it projectable} relative to a map $\pi:N\rightarrow M$  when
		$X_t$ is projectable with respect to $\pi$ for each $t\in \mathbb{R}$.
		
		The  {\it smallest Lie algebra} of $X$ is the smallest real Lie subalgebra,
		$V^X$, containing   $\{X_t\}_{t\in\mathbb{R}}$, namely $V^X={\rm
			Lie}(\{X_t\}_{t\in\mathbb{R}})$. Every Lie algebra $V$ of vector fields on $N$ induces an integrable generalised distribution  $\mathcal{D}^V:=\{X(x)|X\in V,x\in N\}\subset TN$ on $N$.
		
		An {\it integral curve} of $X$ is an integral curve
		$\gamma:\mathbb{R}\mapsto \mathbb{R}\times N$ of the {\it suspension} of $X$,
		i.e. the vector field $X(t,x)+\partial/\partial t$ on $\mathbb{R}\times N$ \cite{FM}. The curve $\gamma$ always admits a reparametrisation $\bar
		t=\bar t(t)$ such that
		$$
		\frac{{\rm d}(\pi_2 \circ \gamma)}{{\rm d}\bar t}(\bar t)=(X\circ \gamma)(\bar t).
		$$
		This system is referred to as the {\it associated system} of $X$. Conversely,
		a system of first-order differential equations in normal form is always the associated system of a unique $t$-dependent vector field. This induces  a
		bijection between $t$-dependent vector fields and systems of first-order 
		differential equations in normal form. This justifies to denote by $X$ both a $t$-dependent vector field and its
		associated system.

		\begin{definition} A {\it superposition rule} depending on $m$ particular
			solutions for a nonautonomous system $X$ on $N$
			is a map $\Phi: (u_{(1)}, \ldots,u_{(m)};k) \in N^{m}\times N \mapsto \Phi(u_{(1)}, \ldots,u_{(m)};k)\in N$ such that the general
			solution, $x(t)$, of $X$ can be written as
			$x(t)=\Phi(x_{(1)}(t), \ldots,x_{(m)}(t);k),$
			where $x_{(1)}(t),\ldots,x_{(m)}(t)$ is a generic set of
			particular solutions to $X$, and $k\in N$.
		\end{definition}
		
		\begin{example} It is known that a Riccati equation, namely
			\begin{equation}\label{Ricc}
			\frac{{\rm d}x}{{\rm d}t}=a_0(t)+a_1(t)x+a_2(t)x^2,\qquad x\in \mathbb{R},
			\end{equation}
			where $a_0(t),a_1(t),a_2(t)$ are $t$-dependent real functions satisfying $a_0(t)a_2(t)\neq 0$, is such that its general solution can be brought into the form $x(t)=\Phi(x_{(1)}(t),x_{(2)}(t),x_{(3)}(t); k)$, with $\Phi: \mathbb{R}^3\times\mathbb{R} \rightarrow \mathbb{R}$ defined by
			$$ 
			\Phi (u_{(1)},u_{(2)},u_{(3)};k) :=\frac{u_{(1)}(u_{(3)}-u_{(2)})+ku_{(2)}(u_{(1)}-u_{(3)})}{u_{(3)}-u_{(2)}+k(u_{(1)}-u_{(3)})},
			$$
			where $x_{(1)}(t),x_{(2)}(t),x_{(3)}(t)$ are different particular solutions to (\ref{Ricc}).
		\end{example}

		\begin{theorem} {\bf (The Lie--Scheffers Theorem  \cite{CGM07,LS})} A system $X$ on $N$ admits a superposition rule if and only if $X={{\sum_{\alpha=1}^r}}b_\alpha(t)X_\alpha$
			for a family $b_1(t),\ldots,b_r(t)$  of $t$-dependent functions and a
			basis $X_1,\ldots,X_r$ of a real Lie algebra of vector fields on $N$.
		\end{theorem}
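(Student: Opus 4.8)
The plan is to recast both implications in terms of \emph{diagonal prolongations} of vector fields to Cartesian powers of $N$. Recall that the diagonal prolongation of a vector field $W$ on $N$ to $N^{m+1}$ is the field $\widetilde W^{[m+1]}$ whose value at $(p_0,\ldots,p_m)$ is $(W(p_0),\ldots,W(p_m))$; this assignment is $\R$-linear and satisfies $\widetilde{[W_1,W_2]}^{[m+1]}=[\widetilde W_1^{[m+1]},\widetilde W_2^{[m+1]}]$, so it is a Lie algebra monomorphism from the vector fields on $N$ to those on $N^{m+1}$, and it carries a $t$-dependent vector field $X$ to a $t$-dependent vector field $\widetilde X^{[m+1]}$ on $N^{m+1}$ whose integral curves are precisely the $(m+1)$-tuples of integral curves of $X$. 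I would then use Frobenius' theorem for the ``if'' direction and a transversality argument at the prolonged level for the ``only if'' direction.

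For the sufficiency, assume $X_t=\sum_{\alpha=1}^{r}b_\alpha(t)X_\alpha$ with $V:=\langle X_1,\ldots,X_r\rangle$ a finite-dimensional Lie algebra. First I would observe that the generic rank of the distribution $\mathcal D^{[k]}$ spanned on $N^{k}$ by $\widetilde X_1^{[k]},\ldots,\widetilde X_r^{[k]}$ is non-decreasing in $k$ and bounded by $\dim V$, hence stabilises; let $m$ be the first value with $\operatorname{rank}\mathcal D^{[m+1]}=\operatorname{rank}\mathcal D^{[m]}=:\rho$. Near a generic point of $N^{m+1}$ the distribution $\mathcal D^{[m+1]}$ is regular and, being spanned by a finite-dimensional Lie algebra of vector fields, involutive, so by Frobenius it admits $(m+1)n-\rho$ functionally independent common first integrals of all $\widetilde X_\alpha^{[m+1]}$. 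The equality of ranks forces the first-$m$-factor projection $N^{m+1}\to N^{m}$ to restrict to a local diffeomorphism on the leaves, hence the directions tangent to the last copy of $N$ meet $\mathcal D^{[m+1]}$ only at zero; consequently $n$ of those first integrals, say $I_1,\ldots,I_n$, may be chosen with $\partial(I_1,\ldots,I_n)/\partial x$ of rank $n$, where $x$ is the coordinate of the last copy. Since $\widetilde X^{[m+1]}=\sum_\alpha b_\alpha(t)\widetilde X_\alpha^{[m+1]}$, the $I_j$ are constant along every $(m+1)$-tuple $(x_{(1)}(t),\ldots,x_{(m)}(t),x(t))$ of solutions of $X$; solving $I_j(x_{(1)},\ldots,x_{(m)};x)=k_j$ for $x$ via the implicit function theorem produces a map $\Phi(x_{(1)},\ldots,x_{(m)};k)$ which, by the preceding remark, is a superposition rule depending on $m$ particular solutions.

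For the necessity, suppose $\Phi\colon N^{m}\times N\to N$ is a superposition rule; for a generic tuple of particular solutions the map $k\mapsto\Phi(x_{(1)},\ldots,x_{(m)};k)$ is necessarily a local diffeomorphism onto an open subset of $N$ (otherwise $\Phi$ could not reproduce the $n$-parameter general solution through each point), so $\Psi\colon(x_{(1)},\ldots,x_{(m)};k)\mapsto(x_{(1)},\ldots,x_{(m)};\Phi(x_{(1)},\ldots,x_{(m)};k))$ is a local diffeomorphism of $N^{m+1}=N^m\times N$. For each fixed $k$ the curve $t\mapsto(x_{(1)}(t),\ldots,x_{(m)}(t);k)$ built from arbitrary solutions $x_{(i)}(t)$ is an integral curve of the field $Y_t$ on $N^{m+1}$ that restricts to $\widetilde X_t^{[m]}$ on the first factor and vanishes on the second, and $\Psi$ sends it to a curve all of whose components solve $X$, hence an integral curve of $\widetilde X_t^{[m+1]}$; therefore $\Psi_*Y_t=\widetilde X_t^{[m+1]}$ for every $t$, so every $\widetilde X_t^{[m+1]}$ takes values in the involutive, rank-$mn$ distribution $\ker \dd\kappa$, where $\kappa:=\pi_{m+1}\circ\Psi^{-1}\colon N^{m+1}\to N$ recovers $k$ from $(x_{(1)},\ldots,x_{(m)},x)$ and is functionally independent with respect to the last copy of $N$. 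Because diagonal prolongation is a Lie algebra morphism and annihilation of the components of $\kappa$ is preserved by Lie brackets and linear combinations, every field in the prolongation of $V^X=\operatorname{Lie}(\{X_t\})$ annihilates them too. Evaluating at a generic configuration $(q_{(1)},\ldots,q_{(m)};q)$, the relation $\dd\kappa(\widetilde W^{[m+1]})=0$ together with the invertibility of $\partial\kappa/\partial x$ shows that $W(q_{(1)})=\cdots=W(q_{(m)})=0$ forces $W(q)=0$ for a dense set of points $q$, whence $W=0$; thus the evaluation $W\mapsto(W(q_{(1)}),\ldots,W(q_{(m)}))$ is injective on $V^X$ and $\dim V^X\le mn<\infty$. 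Choosing a basis $X_1,\ldots,X_r$ of $V^X$ and writing $X_t=\sum_\alpha b_\alpha(t)X_\alpha$, with the $b_\alpha$ inheriting the regularity of $X$ in $t$, then finishes the argument.

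The step I expect to be the main obstacle is establishing $\dim V^X<\infty$ in the necessity direction: it requires combining the Lie-algebra-morphism property of prolongation, the involutivity of $\ker\dd\kappa$, and the transversality hidden in the functional independence of $\kappa$, and it forces one to be careful about the open dense sets on which ``generic'' is to be read; a milder version of the same care is needed in the sufficiency direction for the stabilisation of $\operatorname{rank}\mathcal D^{[k]}$ and for the clean projection of leaves onto $N^m$. Under the paper's standing conventions that everything is smooth and globally defined, these genericity bookkeeping points are the only delicate part.
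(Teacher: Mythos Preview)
The paper does not supply a proof of the Lie--Scheffers Theorem: it is stated with citations to \cite{CGM07,LS} and then used as background. There is therefore no in-paper proof to compare against; your proposal is an attempt to reconstruct the argument from the cited literature.

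That said, your outline follows closely the modern proof in \cite{CGM07}, which is exactly the reference the paper invokes. Both directions hinge on diagonal prolongations to $N^{m+1}$: for sufficiency, the involutive distribution spanned by the prolonged $X_\alpha$ is integrated via Frobenius once the rank has stabilised, and the level sets of the resulting first integrals are solved for the zeroth copy of $N$; for necessity, the superposition map $\Phi$ is used to build the foliation-defining submersion $\kappa$ whose annihilator contains all prolongations of $V^X$, from which finite-dimensionality of $V^X$ follows by the transversality of $\kappa$ in the last factor. Your identification of the delicate point---controlling $\dim V^X$ through the injectivity of evaluation at a generic $m$-tuple---is accurate and matches how the argument in \cite{CGM07} is organised. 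The only refinement worth flagging is that the rank stabilisation and the clean projection of leaves onto $N^m$ are genuinely local statements, so the superposition rule one obtains in the sufficiency direction is a priori local; the paper's standing convention that objects are ``globally defined'' glosses over this, and the cited reference handles it by working on the open dense set where the rank is maximal.
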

		If $X$ possesses a superposition rule, then $X$ is called a {\it Lie system}. The associated real Lie algebra of vector fields
		$\langle X_1,\ldots,X_r\rangle$ is called a {\it VG--Lie algebra} of $X$. The Lie--Scheffers theorem amounts to saying that $X$ is a Lie system if and only if $V^X$ is finite-dimensional. This fact is the keystone of the theory of Lie systems. When $V^X$ consists of Hamiltonian vector fields relative to some geometric structure, much more powerful methods can be used to study Lie systems \cite{Ru08,Ru10,BBHLS13,LV15,CGLS13,LS16}.
		
		\begin{definition} A system $X$ on $N$ is a {\it Lie--Hamilton system} if
			$V^X$ is a VG--Lie algebra of Hamiltonian vector fields relative to some Poisson bivector field on $N$.
		\end{definition}
		\begin{note} A vector field $X$ is {\it Hamiltonian} relative to a Poisson bivector $\Lambda$ with Hamiltonian function $h$ if $X=-\widehat \Lambda ({\rm d}h)$ for $\widehat \Lambda:T^*N\to TN$,  given by $\widehat \Lambda:\theta \in T^*N\mapsto \Lambda(\theta,\cdot)\in TN$. This is the standard convention in geometric mechanics, while the definition $X=\widehat \Lambda({\rm d}h)$ is usually chosen in Poisson geometry \cite{IV}.
		
		\end{note}
		\begin{definition} A {\it Lie--Hamiltonian structure} is a triple
			$(N,\Lambda,h)$, where $\Lambda$ is a Poisson bivector on $N$ and  $h:
			(t,x)\in \mathbb{R}\times N\mapsto h_t(x):= h(t,x)\in  \mathbb{R}$ is such that
			${\rm Lie}(\{h_t\}_{t\in\mathbb{R}},\{\cdot,\cdot\}_\Lambda)$, where  $\{\cdot,\cdot\}_\Lambda$ is the Lie bracket induced by $\Lambda$ \cite{IV}, is
			finite-dimensional.
		\end{definition}
		
		\begin{theorem}\label{FunHam} {\bf (Characterisation of Lie--Hamilton systems \cite{CLS13})} A system $X$ on $N$ is a {\it Lie--Hamilton system} if and only
			if there exists a Lie--Hamiltonian structure $(N,\Lambda,h)$ such that $X_t$ is a Hamiltonian vector field for the function
			$h_t$ for each $t\in\mathbb{R}$. We say that ${\rm Lie}(\{h_t\}_{t\in\mathbb{R}},\{\cdot,\cdot\}_\Lambda)$ is a {\it Lie--Hamilton} algebra of $X$.
		\end{theorem}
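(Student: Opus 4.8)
\noindent\emph{Proof proposal.} The plan is to prove the two implications separately. The ``if'' direction will be short, resting on the functorial properties of the assignment of Hamiltonian vector fields together with the Lie--Scheffers Theorem; the ``only if'' direction is where the work lies, since it requires manufacturing a $t$-dependent Hamiltonian out of a Vessiot--Guldberg Lie algebra of Hamiltonian vector fields. Throughout I would use the following structural fact: for a Poisson bivector $\Lambda$ on $N$, the map $\rho\colon f\in C^\infty(N)\mapsto -\widehat\Lambda({\rm d}f)\in\mathfrak{X}(N)$ is linear, its image is exactly the space of Hamiltonian vector fields relative to $\Lambda$, its kernel is the space of Casimir functions of $\Lambda$ (those Poisson-commuting with all of $C^\infty(N)$, equivalently constant along the symplectic leaves), and, since $[\Lambda,\Lambda]=0$, it intertwines the bracket $\{\cdot,\cdot\}_\Lambda$ with the commutator of vector fields up to the overall sign fixed by the Note above. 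In particular $\rho$ sends finite-dimensional Lie subalgebras of $(C^\infty(N),\{\cdot,\cdot\}_\Lambda)$ onto finite-dimensional Lie algebras of Hamiltonian vector fields.

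For sufficiency, suppose $(N,\Lambda,h)$ is a Lie--Hamiltonian structure with $X_t=-\widehat\Lambda({\rm d}h_t)=\rho(h_t)$ for all $t\in\mathbb{R}$. Then every $X_t$ belongs to $\rho(W)$, where $W:={\rm Lie}(\{h_t\}_{t\in\mathbb{R}},\{\cdot,\cdot\}_\Lambda)$ is finite-dimensional; hence $V^X={\rm Lie}(\{X_t\}_{t\in\mathbb{R}})\subseteq\rho(W)$ is a finite-dimensional Lie algebra of Hamiltonian vector fields relative to $\Lambda$. Thus $V^X$ is a Vessiot--Guldberg Lie algebra of $X$ made of Hamiltonian vector fields, so $X$ is a Lie--Hamilton system (and, by the Lie--Scheffers Theorem, a Lie system in the first place).

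For necessity, suppose $X$ is a Lie--Hamilton system, so $V^X$ is finite-dimensional, consists of Hamiltonian vector fields relative to a Poisson bivector $\Lambda$, and contains all the $X_t$. I would fix a basis $X_1,\dots,X_r$ of $V^X$, with $[X_\alpha,X_\beta]=\sum_\gamma c_{\alpha\beta}^\gamma X_\gamma$, and Hamiltonian functions $h_1,\dots,h_r$ with $X_\alpha=\rho(h_\alpha)$. By linearity of $\rho$ and its bracket-intertwining property, both $\sum_\gamma c_{\alpha\beta}^\gamma h_\gamma$ and (up to sign) $\{h_\alpha,h_\beta\}_\Lambda$ are Hamiltonian functions of $[X_\alpha,X_\beta]$, whence
\begin{equation*}
\xi_{\alpha\beta}:=\{h_\alpha,h_\beta\}_\Lambda+\sum_{\gamma}c_{\alpha\beta}^\gamma h_\gamma\in\ker\rho,
\end{equation*}
i.e.\ each $\xi_{\alpha\beta}$ is a Casimir of $\Lambda$ (the sign being an artifact of the conventions fixed in the Note). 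Since Casimirs Poisson-commute with everything, the finite-dimensional linear space $\mathcal{W}:=\langle h_1,\dots,h_r\rangle+\langle\xi_{\alpha\beta}\colon 1\le\alpha<\beta\le r\rangle$ is closed under $\{\cdot,\cdot\}_\Lambda$, hence a Lie algebra, and it contains ${\rm Lie}(\{h_1,\dots,h_r\},\{\cdot,\cdot\}_\Lambda)$. Finally, writing $X_t=\sum_{\alpha=1}^r b_\alpha(t)X_\alpha$ — with $b_\alpha(t)$ smooth, since they are recovered from the $X_t$ by evaluation at finitely many points of $N$ detecting the linear independence of $X_1,\dots,X_r$ — and defining $h\colon(t,x)\in\mathbb{R}\times N\mapsto h_t(x):=\sum_{\alpha=1}^r b_\alpha(t)h_\alpha(x)$, I would check that $h$ is smooth, that $X_t=-\widehat\Lambda({\rm d}h_t)$ for every $t$, and that $\{h_t\}_{t\in\mathbb{R}}\subset\langle h_1,\dots,h_r\rangle\subseteq\mathcal{W}$, so that ${\rm Lie}(\{h_t\}_{t\in\mathbb{R}},\{\cdot,\cdot\}_\Lambda)\subseteq\mathcal{W}$ is finite-dimensional and $(N,\Lambda,h)$ is the sought Lie--Hamiltonian structure with $X_t$ Hamiltonian for $h_t$.

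The main obstacle is precisely the closure step in the necessity direction: a finite-dimensional Lie algebra of Hamiltonian vector fields need not lift verbatim to a finite-dimensional Lie algebra of Hamiltonian functions, because the Poisson brackets of arbitrarily chosen Hamiltonian functions $h_\alpha$ may fall outside their linear span. One must introduce the Casimir ``defects'' $\xi_{\alpha\beta}$, prove that they genuinely are Casimirs (this is exactly where $[\Lambda,\Lambda]=0$ and the identification $\ker\rho=\{\text{Casimirs}\}$ are used), and verify that the enlarged span $\mathcal{W}$ is closed. Subsidiary points to keep honest are the smoothness of the coefficients $b_\alpha(t)$ and, when $N$ is disconnected, the fact that Casimirs are only leaf-wise locally constant rather than globally constant.
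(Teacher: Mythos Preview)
The paper does not actually prove this theorem: it is stated as a result from the cited reference \cite{CLS13} and used thereafter without argument. So there is no ``paper's own proof'' to compare against.

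That said, your proposal is a correct and essentially standard proof. The sufficiency direction is immediate, and in the necessity direction your key observation---that the defects $\xi_{\alpha\beta}=\{h_\alpha,h_\beta\}_\Lambda+\sum_\gamma c_{\alpha\beta}^\gamma h_\gamma$ lie in $\ker\rho$, hence are Casimirs and Poisson-central---is exactly what makes the enlarged span $\mathcal{W}=\langle h_1,\dots,h_r\rangle+\langle\xi_{\alpha\beta}\rangle$ close under $\{\cdot,\cdot\}_\Lambda$ while remaining finite-dimensional. The sign bookkeeping is consistent with the convention fixed in the Note (so that $[X_f,X_g]=-X_{\{f,g\}_\Lambda}$), and your remarks on the smoothness of the $b_\alpha(t)$ and on Casimirs being merely leafwise constant on disconnected $N$ are the right caveats. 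This is the argument one finds in \cite{CLS13}, so your approach coincides with the original source even though the present paper omits it.
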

		
		Lie--Hamilton algebras can be employed to find superposition rules and constants of motion for Lie--Hamilton systems in a more easy way than by standard methods \cite{BCHLS13}.
		
		\begin{example} A complex Riccati equation with $t$-dependent real coefficients \cite{BHLS15,CSR13} can be brought into 
			\begin{equation}
			\label{CRE}
			\left\{\begin{aligned}
			\frac{{\rm d}x}{{\rm d}t}&=a_1(t)+a_2(t)x+a_3(t)(x^2-y^2),\\
			\frac{{\rm d}y}{{\rm d}t}&=a_2(t)y+a_3(t)2xy,
			\end{aligned}\right.\qquad (x,y)\in \mathbb{R}^2,
			\end{equation}
			where $a_1(t),a_2(t)$ and $a_3(t)$ are arbitrary $t$-dependent real functions. Let us prove that (\ref{CRE}) is a Lie--Hamilton system on $\mathbb{R}^2_{y\neq 0}:= \mathbb{R}^2\backslash\{(x,0)\,|\,x\in\mathbb{R}\}$. The system (\ref{CRE}) is associated with the $t$-dependent vector field $X=\sum_{\alpha=1}^3a_\alpha(t)X_\alpha$, where
			$$
			X_1:=\frac{\partial }{\partial x},\qquad X_2:=x\frac{\partial }{\partial x}+y\frac{\partial }{\partial y},\qquad
			X_3:=(x^2-y^2)\frac{\partial }{\partial x}+2xy\frac{\partial }{\partial y}.
			$$
			Hence, $X$ takes values in the Lie algebra $V:=\langle X_1,X_2,X_3\rangle$. The
			 vector fields $X_1,X_2,X_3$ are Hamiltonian relative to the Poisson bivector $
			\Lambda:=y^{2}{\partial/\partial x \wedge \partial/\partial y}
			$ on $\mathbb{R}^2_{y\neq 0}$			
			with Hamiltonian functions
			\begin{equation}
			h_1=-\frac 1y,\qquad h_2=-\frac xy,\qquad h_3=-\frac{x^2+y^2}{y},
			\label{ab}
			\end{equation}
			respectively. 
			If $\{\cdot,\cdot\}_\Lambda:C^\infty(\mathbb{R}^2_{y\neq 0})\times C^\infty(\mathbb{R}^2_{y\neq 0})\rightarrow C^\infty(\mathbb{R}^2_{
				y\neq 0})$ stands for the Poisson bracket induced by $\Lambda$ (see \cite{IV}), then
			\begin{equation}
			\label{sl2Rh}
			\{h_1,h_2\}_\Lambda=-h_1,\qquad \{h_1,h_3\}_\Lambda=-2h_2,\qquad \{h_2,h_3\}_\Lambda=-h_3.
			\end{equation}
			Thus, $\left(\mathbb{R}^2_{y\neq 0},\Lambda,h:=a_0(t)h_1+a_1(t)h_2+a_2(t)h_3\right)$ is a Lie--Hamiltonian structure for $X$.
			If $V^X\simeq \mathfrak{sl}(2)$, then $(\mathcal{H}_\Lambda,\{\cdot,\cdot\}_\Lambda):= (\langle h_1,h_2,h_3\rangle,\{\cdot,\cdot\}_\Lambda)$ is a Lie--Hamilton algebra for $X$ isomorphic to $\mathfrak{sl}(2)$.
		\end{example}

		One of the main properties of Lie systems is the existence of  superposition rules. There are several methods to obtain them \cite{CGM07,Dissertations,PW}. We here choose a procedure that can be improved by  using Lie--Hamilton structures. This method is based upon the so-called {diagonal prolongations} of the vector fields \cite{CGM07}. Given a vector field $X$ on $N$ with local coordinate expression
		$$
		X=\sum_{j=1}^n X_j(x)\frac{\partial}{\partial x_j},\qquad x\in N,\quad n:=\dim N,
		$$
		its {\it diagonal prolongation} to $N^m:=N\times\ldots\times N$ ($m$-times) is the vector field on $N^m$:
		$$
		X^{[m]}(x^{(0)},\ldots,x^{(m-1)}):=\sum_{a=0}^{m-1} \sum_{j=1}^nX_j (x^{(a)}) \frac{\partial}{\partial x_j^{(a)}} = \sum_{a = 0}^{m-1} X^{(a)}(x^{(a)}), \quad (x^{(0)},\ldots,x^{(m-1)})\in N^m,
		$$ 
		where $x_j^{(a)}(x^{(0)},\ldots,x^{(m-1)}):=x_j (x^{(a)})$ for $j\in \overline{1,n}$ and $X^{(a)}(x^{(a)})=X(x^{(a)})$ stands for $X$ on the $a$-th copy of $N$ within $N^m$.
		
		To calculate a superposition rule for a Lie system on $N$ with a VG--Lie algebra $V$, we find the smallest $m\in 
		\mathbb{N}$ so that the diagonal prolongations of elements of $V$ to $N^m$ span an integral distribution of rank $\dim V$ at a generic point. Then, $m$ becomes the number of particular solutions involved in the superposition rule. The superposition rule can be obtained by deriving $n$ common first-integrals $I_1,\ldots,I_n$ for the diagonal prolongations to $N^{m+1}$ of the vector fields of $V$ satisfying that $\det(\partial(I_1,\ldots,I_n)/\partial (x_1^{(0)},\ldots,x_n^{(0)})) \neq 0$. This gives the superposition rule by assuming $I_1=k_1,\ldots,I_n=k_n$ and writing $x_1^{(0)},\ldots,x_n^{(0)}$ in terms of the remaining variables $x_i^{(a)}$, with $1<a\leq m$   and $k_1,\ldots,k_n$ (see \cite{CGM07,Dissertations} for details and examples).
		
		When a Lie system admits a VG--Lie algebra of Hamiltonian vector fields relative to some geometric structure, e.g. a symplectic or K\"ahler structure, there exist geometric and algebraic methods to obtain $I_1,\ldots,I_n$ and to simplify the description of the superposition rule \cite{BCHLS13}. This requires to prolongate geometric structures according to the following construction. Let $(E,N,\tau:E\to N)$ be a vector bundle. Its {\it diagonal prolongation} to  $N^m$ is a vector bundle $(E^{[m]},N^m,\tau^{[m]}:E^{[m]}\rightarrow N^m)$, where  $E^{[m]}:=E\times\cdots\times E$ ($m$-times) and $\tau^{[m]}$ is the only map satisfying that $\pi_{N,j}\circ \tau^{[m]}=\tau\circ \pi_{E,j}$ for $j=\overline{1,m}$, where $\pi_{E,j}:E^{[m]}\rightarrow E$ and $\pi_{N,j}:N^m\rightarrow N$ are the natural projections of $E^{[m]}$ and $N^m$ onto the $j$-th copy of $E$ and $N$ within $E^{[m]}$ and $N^m$, respectively.
		Every section $e:N\to E$ of $(E,N,\tau)$ has a natural {\it diagonal prolongation} to a section $e^{[m]}$ of $(E^{[m]},N^m,\tau^{[m]})$:
		\begin{equation*}
		e^{[m]}(x^{(0)},\dots,x^{(m-1)}):=e(x^{(0)})+\cdots +e(x^{(m-1)})\,.
		\end{equation*}
This is the only section of $(E^{[m]},N^m,\tau^{[m]})$ satisfying that $\pi_{E,j}\circ e^{[m]}=e\circ \pi_{N,j}$ for $j=\overline{1,m}$.
		
		Also of interest is the tensor field that transports vector fields from one copy of $N$ to another one within $N^m$. More specifically, let $T^{(1,1)} N$ denote the (1,1)-tensor bundle of the manifold $N$. For $r,s\in \overline{0, m-1}$, we define the $S_{rs}$ to be the sections of $T^{(1,1)} N^m$ over $N^m$ of the following form:
				\begin{equation}
		\label{eq:SrsDef}
		\textrm{If}\quad X^{[s]}:=\sum_{j=1}^nX_j(x^{(s)})\frac{\partial}{\partial x_j^{[s]}} \quad \textrm{then}\quad S_{rs} (X^{[s]}) := \sum_{j=1}^n X_j (x^{(s)}) \frac{\partial}{\partial x_j^{(r)}}, \quad \forall X \in \mathfrak{X} (N).
		\end{equation}
		In coordinates,  these tensor fields read
		\begin{equation}
		\label{eq:Srs}
		S_{rs} = \sum_{j=1}^n \d x_j^{(s)} \otimes \frac{\partial}{\partial x_j^{(r)}}, \quad r,s \in \overline{0, m-1}.
		\end{equation}
		These objects will play a key role in the computation of constants of motion and superposition rules.
		
		Finally, the {\it diagonal prolongation} of $f:N\rightarrow \mathbb{R}$ to $N^m$ is the function ${f}^{[m]}:N^m\rightarrow \mathbb{R}$ given by
		${f}^{[m]}(x^{(0)},\ldots,x^{(m-1)}):= f(x^{(0)})+\ldots+f(x^{(m-1)})$.

		\section{The geometrical description of quantum mechanics}
		We briefly present  the geometrical formulation of quantum mechanics
		which has been developed during the last forty years (see \cite{AS99,CCM07,CM08} for
		details).

		\subsection{The linear, complex and Hermitian structure}
		To investigate quantum mechanics in a differential geometric way and to identify its similarities with the geometric formalism of classical mechanics, the Hilbert space $\Hil$ must be understood 
		as a real 
		Banach manifold and its algebraic structures as real differential
		geometric objects in such manifolds. In particular, if $\Hil$ is of a complex dimension $n$, $\Hil$ should be identified with a real $2n$-dimensional differentiable manifold $\mathcal{M}_{2n}$. Each point $\psi\in \mathcal{M}_{2n}$ represents an element of $\Hil$. Any Hilbert basis in the Hilbert space $\mathcal{H}$ defines a real global chart on $\mathcal{H}$ which determines its differentiable structure. Let $\{e_j\}_{j\in\overline{1,n}}$ be an orthonormal basis of $\Hil$; the functions $q_j,p_j:\mathcal{M}_{2n}\rightarrow \mathbb{R}$ given by  
		\begin{equation}\label{CoorH}
		\langle e_j,\psi\rangle=q_j(\psi)+{\rm i}p_j(\psi),\qquad j=\overline{1,n},\quad \forall \psi \in \mathcal{M}_{2n},
		\end{equation}
		define a real global chart of $\mathcal{M}_{2n}$ and $T_\psi \mathcal{M}_{2n}=\langle \partial/\partial q_1,\partial/\partial p_1,\ldots,\partial/\partial q_n,\partial/\partial p_n\rangle$  at every $\psi\in \mathcal{M}_{2n}$.
		
		The complex structure on the $n$-dimensional Hilbert space $\mathcal{H}$, represented by the multiplication by the imaginary unit ${\rm i}$, can be  encoded in a (1,1)-tensor field $J$ on $\mathcal{M}_{2n}$ satisfying $J^{2}=-\mathbb{I}$ with $\mathbb{I}$ being the $(1,1)$-tensor field given by the identity $\mathbb{I}:T_\psi \mathcal{M}_{2n}\rightarrow T_\psi \mathcal{M}_{2n}$ at every $\psi \in \mathcal{M}_{2n}$. This leads to a distribution ${\rm Im}(T\mathcal{M}_{2n})$ on $\mathcal{M}_{2n}$, which is integrable. In the coordinate system (\ref{CoorH}), the  complex structure $J$ reads
		\begin{equation}
		\label{eq:J}
		J=\sum_{j=1}^n\left({\rm d}q_j\otimes \frac{\partial}{\partial
			p_j}-{\rm d}p_j\otimes \frac{\partial }{\partial q_j}\right).
		\end{equation}
		
		Another important element of $\Hil$ is its Hermitian product $\langle\cdot,\cdot\rangle :{\cal H}\times {\cal H}\rightarrow \mathbb{C}$. 
		As $\Hil\simeq \mathbb{R}^{2n}$ as $\mathbb{R}$-linear spaces,  there exists at each $\widehat \psi\in \mathcal{M}_{2n}$ an $\mathbb{R}$-linear isomorphism $\psi \in \mathcal{M}_{2n} \mapsto \,\psi_{\widehat \psi}\in T_{\widehat{\psi}} \mathcal{M}_{2n}$, where the tangent vector $\psi_{\widehat \psi}$ acts as a derivation 
		\begin{equation}
		\psi_{\widehat \psi} f := \left. \frac{\rm d}{{\rm d} t} \right|_{t=0}f\left(\widehat{\psi} + t \psi\right) , \qquad \forall f \in C^\infty (\mathcal{M}_{2n}).
		\end{equation}
		This identification and the Hermitian product on $\mathcal{H}$ allow us to define a pair of tensor fields $g, \omega$ on $\mathcal{M}_{2n}$ satisfying: 
		\begin{equation}\label{DefGW}
		g_\psi ({\psi_1}_\psi, {\psi_2}_\psi):=\mathfrak{Re}\langle \psi_1 , \psi_2 \rangle \qquad
		\omega_\psi ({\psi_1}_\psi,{\psi_2}_\psi):=\mathfrak{Im}\langle \psi_1 , \psi_2 \rangle, \qquad \forall \psi, \psi_1,\psi_2 \in \mathcal{M}_{2n}.
		\end{equation}
		These  tensor fields  encode in geometrical terms the Hermitian product existing in the Hilbert space. In coordinates, these tensor fields read
		\begin{equation}\label{SRStructure}
		g= \sum_{j=1}^n\left({\rm d}q_j \otimes {\rm d}q_j + {\rm d}p_j \otimes {\rm d}p_j\right), \quad
		\omega = \sum_{j=1}^n\left({\rm d}q_j \otimes  {\rm d}p_j - {\rm d}p_j \otimes  {\rm d}q_j\right)=\sum_{j=1}^n \d q_j \wedge \d p_j.
		\end{equation}
		The tensor field $g$ becomes a Euclidean metric on $\mathbb{R}^{2n}$ while $\omega$ becomes a symplectic structure on $\mathbb{R}^{2n}$ with Darboux coordinates $\{q_j,p_j\}_{j\in\overline{1,n}}$. The tensor fields $g$ and $\omega$ satisfy some relations with the complex structure $J$:
		\begin{equation*}
		g(JX,JY)=g(X,Y), \,\,
		\omega(JX,JY)=\omega(X,Y), \,\,
		\omega(X,Y)=g(JX,Y),\,\, \forall X,Y\in \mathfrak{X}(\mathcal{M}_{2n}).
		\end{equation*}
		Thus, the Hermitian product on the complex Hilbert space ${\cal H}$ leads to a K\"ahler structure on $\mathcal{M}_{2n}$, which is typical of quantum models and richer than the standard symplectic one typical appearing in classical mechanics.
		
		The metric tensor field $g$ induces a bundle isomorphism $\mathcal{G}: \psi_{\widehat{\psi}}\in T\mathcal{M}_{2n}\mapsto g( \psi_{\widehat{\psi}},\cdot)\in T^*\mathcal{M}_{2n}$. This can be used to transform $g$ and $\omega$ into two 2-contravariant tensor fields, i.e. $G(\alpha,\beta):=g(\mathcal{G}^{-1}\alpha,\mathcal{G}^{-1}\beta)$ and $\Lambda(\alpha,\beta):=\omega(\mathcal{G}^{-1}\alpha,\mathcal{G}^{-1}\beta)$. Their expressions in 
		local coordinates are
		\begin{equation}
		\label{eq:G}
		G=\sum_{j=1}^n \left( \frac{\partial}{\partial q_j}\otimes \frac{\partial}{\partial
			q_j}+\frac{\partial}{\partial p_j}\otimes \frac{\partial}{\partial
			p_j} \right),\qquad
		\Lambda=
			\sum_{j=1}^n\partd{q_j} \wedge \partd{p_j}.
		\end{equation}
		These tensor fields define a Poisson bracket and a commutative bracket on $C^\infty(\mathcal{M}_{2n})$, respectively:
		\begin{equation}\label{Dpsi}
		\{ f, g\} := \Lambda({\rm d}f, {\rm d}g), \qquad
		\{f, g\}_{+} := G({\rm d}f, {\rm d}g), \qquad
		\forall f,g \in C^\infty(\mathcal{M}_{2n}).
		\end{equation}
		
		A third element in the description of the Hilbert space structure of $\Hil$ is its $\mathbb{R}$-linear structure. Geometrically, it is induced by the so-called {\it dilation vector field} defined by $\Delta:\psi\in \mathcal{M}_{2n}\mapsto  \psi_\psi \in T \mathcal{M}_{2n}$. Meanwhile, the {\it phase-change vector field} takes the form $\Gamma:\psi\in \mathcal{M}_{2n} \mapsto J\psi_{\psi}\in T_\psi \mathcal{M}_{2n}$.		In local coordinates
		\begin{equation}
		\label{eq:delta}
		\Delta =\sum_{j=1}^n\left( q_j\frac{\partial}{\partial q_j}+p_j\frac{\partial}{\partial p_j}\right), \qquad \Gamma =\sum_{j=1}^n\left( q_j \frac{\partial}{\partial p_j} - p_j \frac{\partial}{\partial q_j}\right).
		\end{equation}
		Both vector fields satisfy the relation $\Gamma=J(\Delta)$.
		
		Finally, let us define two $n$-forms, $\Omega_R$ and $\Omega_I$, on $\mathcal{M}_{2n}$ satisfying
		$$
		\Omega_R |_\psi ({\psi_1}_\psi,\ldots,{\psi_n}_\psi):={\mathfrak Re} (\det(\psi_1,\ldots,\psi_n)), \quad
		\Omega_I |_\psi ({\psi_1}_\psi,\ldots,{\psi_n}_\psi):={\mathfrak Im} (\det(\psi_1,\ldots,\psi_n)),
		$$
		for all $ \psi, \psi_1,\ldots,\psi_n\in \mathcal{M}_{2n}.$
		They satisfy the relation
		$$
		\Omega_R(JX_1,\ldots,X_n)=-\Omega_I(X_1,\ldots,X_n), \quad \forall X_1, \ldots, X_n \in \mathfrak{X}(\mathcal{M}_{2n}).
		$$
		It is simple to prove that they are non-degenerate and closed. 

		\subsection{Observables: Hamiltonian dynamics and Killing vector fields}\label{ODG}

		The real vector space ${\rm Herm}(\mathcal{H})$ of physical observables on $\mathcal{H}$, i.e. Hermitian operators on $\Hil$, can also be given a tensor description. Every observable $A\in {\rm Herm}(\Hil)$ gives rise to a function on
		$\mathcal{M}_{2n}$ of the form
		\begin{equation}
		\label{eq:14}
		f_A(\psi):= \frac 12 \langle \psi, A\psi \rangle,\qquad \psi\in \mathcal{M}_{2n}.
		\end{equation}

		It is worth noting that
		\begin{equation*}
		\label{eq:19}
		\{ f_{A}, f_{B}\} = \Lambda(\d f_{A}, \d f_{B}) = f_{[\![ A,B ]\!] },\qquad \qquad
		\{ f_{A}, f_{B}\}_{+} = G(\d f_{A}, \d f_{B}) = f_{[A,B]_{+}},
		\end{equation*}
		for 
		\begin{equation}
		\label{eq:18}
		[\![ A,B ]\!]:=-{\rm i} (AB-BA),	 \qquad [A,B]_{+}:= AB+BA.
		\end{equation}
		Observe that the skew-symmetric operation has an extra factor with respect to the commutator of operators. This factor is needed to obtain an inner composition law in the space of Hermitian operators.
		Given a linear operator $A\in \mathrm{Herm}({\cal H})$, we can define the Hamiltonian vector field:
		\begin{equation}
		\label{eq:10}
		X_A:=-{\Lambda}({\rm d}f_{A}, \cdot)=\{ \cdot,f_{A}\}.
		\end{equation}
		One remarkable property of these Hamiltonian vector fields is
		\begin{equation*}
		\label{eq:20}
		[X_A, X_B]=-X_{[\![ A,B ]\!]}. 
		\end{equation*}
		The vector field $\Gamma$ defined
		in \eqref{eq:delta} is the Hamiltonian vector field associated, up to a sign, with the identity operator on $\Hil$, i.e. $\Gamma= -X_I$.
		
		
%
		The integral curves of the Hamiltonian vector 
		field, $X_H$, associated with the quadratic form $f_{H}(\psi):= \frac 12 \langle \psi , H \psi
		\rangle$ correspond to the solutions of the Schr\"odinger equation 
		$$
		{\rm i}\frac{\d \psi}{\d t}=H\psi,
		$$
		where we assumed, as hereafter, $\hbar=1$. The evolution operator $t\mapsto U_t$ of this equation is such that each $U_t:\mathcal{H}\rightarrow\mathcal{H}$ is an isometry of the Hermitian product on $\mathcal{H}$. Hence, each $U_t$ leaves invariant its real and imaginary parts. Since each $U_t$ is $\mathbb{C}$-linear, it also leaves invariant $\omega$, $J$, and $g$. Therefore, $X_H$ is also a Killing vector field relative to $g$ giving rise to a K\"ahler vector field.
%

		\subsection{Projective Hilbert spaces as K\"ahler manifolds}
		
		From a physical point of view, the
		probabilistic interpretation requires the set of states of a quantum
		system to be a complex projective space. Our aim 
		in this section is to introduce the geometrical structures arising in this case. 
		To simplify the notation, $B^\times$ will stand for the restriction of a structure $B$ on $\mathcal{M}_{2n}$ to $\mathcal{M}_{2n}^\times$, e.g. $G^\times$ is the tensor field on $\mathcal{M}_{2n}^\times$ obtained by restricting the tensor $G$ on $\mathcal{M}_{2n}$ given in (\ref{eq:G}) to $\mathcal{M}_{2n}^\times$. 

		The equivalence relation on {$\mathcal{H}_0:=\mathcal{H}\backslash\{0\}$ defining its} projective space $\mathbb{CP}^{n-1}$, namely
		\begin{equation}
		\label{eq:11}
		\psi_{1},\psi_{2} \in \Hil_0 := \Hil\backslash\{0\}, \qquad \psi_{1}\sim
		\psi_{2}\Leftrightarrow \psi_{2}=\lambda \psi_{1}, \qquad \lambda\in \C\backslash\{0\},
		\end{equation}
		can be encoded at the level of our geometrical description by means of
		the vector fields $\Delta^\times$ and $\Gamma^\times$. Indeed, it follows from (\ref{eq:delta}) that both vector fields commute and define a regular integrable distribution on $\mathcal{M}^\times_{2n}$. Its space of leaves can be given a differentiable manifold structure becoming a differentiable manifold ${\cal P}_n$ and inducing a differentiable projection
		$\pi_{\mathcal{MP}}:\mathcal{M}^\times_{2n}\to {\cal P}_n$.  Each leaf of the foliation induced by $\Delta^\times$ and $\Gamma^\times$ contains the set of
		equivalent points in $\mathcal{H}_0$ relative to the equivalence relation
		(\ref{eq:11}). Therefore, it is natural to consider
		${\cal P}_n$ as the geometrical representation of the complex projective
		space $\mathbb{CP}^{n-1}$.
		
		In what regards the tensor structures, it is well known \cite{Helg78} that the complex projective space is a Hermitian
		symmetric space \cite{Wo64} and therefore admits  a canonical K\"ahler structure enconded in the Fubiny-Study
		metric. 
		Hence, there will exist a Riemannian tensor, a symplectic tensor and a complex structure on $\mathcal{P}_n$.  Our goal now
		is to relate these tensorial objects to those of
		$\mathcal{M}^\times_{2n}$.
		We will
		find suitable tensor fields on $\mathcal{M}^\times_{2n}$ projecting under $\pi_{\mathcal{MP}}:\mathcal{M}^\times_{2n}\to {\cal P}_n$ onto the
		canonical K\"ahler structure on ${\cal P}_n$ induced by that of $\mathbb{CP}^{n-1}$. 
		
		The tensor fields $G^\times$ and $\Lambda^\times$ on $\mathcal{M}_{2n}^\times$ induced by \eqref{eq:G} cannot be projected onto ${\cal P}_n$ since, although they are invariant under $\Gamma^\times$,  they are
		not invariant under $\Delta^\times$. Since
		${\cal L}_{\Delta^\times}G^\times=-2G^\times$ and ${\cal L}_{\Delta^\times}\Lambda^\times=-2\Lambda^\times$,
		we can define two new tensor fields by multiplication:
		$$
		\widetilde G:=2f_I(\psi)G,\qquad \qquad \widetilde \Lambda:=2f_I(\psi)\Lambda.
		$$
		The tensor fields $\widetilde G^\times$ and $\widetilde \Lambda^\times$ on $\mathcal{M}^\times_{2n}$ are homogeneous of degree 0 and therefore invariant
		under dilations. They are also invariant under $\Gamma^\times$ and hence {projectable}  onto $\mathcal{P}_n$. Nevertheless, we would like to define projectable tensor fields $G_\mathcal{P}$ and $\Lambda_\mathcal{P}$ on $\mathcal{M}^\times_{2n}$ satisfying that 
		\begin{align*}
		G_\mathcal{P}(\d \pi^*f_1,\d \pi^*f_2):=\{\pi_{\mathcal{MP}}^* (f_1),\pi_{\mathcal{MP}}^* (f_2) \}_+,\qquad
		\Lambda_\mathcal{P} (\d \pi^*f_1,\d \pi^*f_2):=\{\pi_{\mathcal{MP}}^* (f_1),\pi_{\mathcal{MP}}^* (f_2)\}, 
		\end{align*}
		for every $f_1, f_2 \in C^\infty(\P_n).$	To ensure this, we define
		\begin{equation}
		\label{eq:13}
		G_{{\cal P}}:=2 f^\times_I(\psi) G^\times-\left (\Delta^\times\otimes
		\Delta^\times + \Gamma^\times\otimes \Gamma^\times \right ), \quad 
		\Lambda_{{\cal P}}:= 2f^\times_I(\psi)\Lambda^\times-\left (\Delta^\times\otimes
		\Gamma^\times - \Gamma^\times\otimes \Delta^\times \right ).
		\end{equation}
		
		Observables must also be represented on $\mathcal{P}_n$ by tensorial objects which are projections of tensor objects on $\mathcal{M}_{2n}^\times$ that are invariant by the vector fields $\Delta^\times$ and $\Gamma^\times$, and this is
		clearly not the case for the quadratic functions $f^\times_{A}$ defined in (\ref{eq:14}). 
		Instead, we consider the set of  {\it expectation value functions}, i.e. the
		functions related to observables $A$ of the form
		\begin{equation*}
		e_A(\psi):=\frac 12\frac{\langle \psi, A\psi \rangle}{\langle \psi,\psi\rangle},\qquad \forall \psi\in \mathcal{M}_{2n}^\times.
		\label{eq:5}
		\end{equation*}
		These functions are first-integrals on $\mathcal{M}_{2n}^\times$ of  both vector fields, $\Delta^\times$ and $\Gamma^\times$, and as they are projectable, they correspond  to 
		pullbacks of functions on ${\cal P}_n$. Furthermore
		they represent, up to a proportional constant, the physical magnitude known as \textit{expectation
			value} of the observable $A$.
		
		Finally, we can combine the expectation value functions and the
		tensor $\Lambda_{{\cal P}}$ to define Hamiltonian vector fields on ${\cal P}_n$. Indeed, given a Hermitian
		operator $A$, we can define the vector field on $\mathcal{M}^\times_{2n}$:
		\begin{equation*}
		\label{eq:15}
		{X}_{A}:=- \Lambda_{{\cal P}}({\rm d}e_{A}, \cdot).
		\end{equation*}
		The projections of these vector fields under $\pi_{\mathcal{MP}*}$ give rise to Hamiltonian
		vector fields associated with the canonical K\"ahler
		structure on ${\cal P}_n$. 
		
		Additionally, there exists a natural action of the unitary group $U(n)$ on ${\cal P}_n$ of the form
		\begin{equation}
		\label{eq:17}
		\varphi_{\mathcal{P}_n}:U({n})\times {\cal P}_n \to {\cal P}_n,  \qquad (U,
		[\psi]_{\mathcal{P}_n})\mapsto [U\psi]_{\mathcal{P}_n}, 
		\end{equation}
		where $[\psi]_{\mathcal{P}} := \pi_{\mathcal{MP}}(\psi)$ denotes the equivalence class in $\mathcal{P}_n$ of the element $\psi\in \mathcal{M}_{2n}^\times$.

		\section{Quantum Lie systems and K\"ahler--Lie systems}
		
		In this section we apply the theory of Lie and Lie--Hamilton systems to a $t$-dependent
		Hamiltonian operator $H(t)$ that can be written as a linear combination, with some $t$-dependent
		real coefficients $b_1(t),\ldots,b_r(t)$, of some Hermitian operators,
		\begin{equation}\label{LieHamiltonian}
		H(t)=\sum_{k=1}^rb_k(t)H_k\,,
		\end{equation}
		where the
		$H_k$ form a basis of a real finite-dimensional Lie algebra $V$ relative to the Lie bracket of observables, i.e.
		$[\![ H_j,H_k ]\!]=\sum_{l=1}^rc_{jkl} H_l$, with  $c_{jkl}\in\mathbb{R}$ and $j,k,l=\overline{1,r}$. The $t$-dependent operator $H(t)$, a so-called {\it quantum Lie system} \cite{CLR09}, becomes a curve in a Lie algebra of operators: {\it quantum VG--Lie algebra} of $H(t)$. In particular, we prove that a very general class of these systems leads to define Lie systems admitting a VG--Lie algebra of K\"ahler vector fields with respect to a K\"ahler structure. In turn, this suggests us to define a new type of Lie systems: the K\"ahler--Lie systems.

		
		In particular, a quantum Lie system $H(t)$ determines a $t$-dependent Schr\"{o}dinger equation
		\begin{equation}\label{SchLie}
		\frac{{\rm d}\psi}{{\rm d}t}=-{\rm i}H(t)\psi=-{\rm i}\sum_{k=1}^rb_k(t)H_k\psi.
		\end{equation}
		The isomorphism $(\psi,\phi)\in \mathcal{M}_{2n} \oplus \mathcal{M}_{2n} \mapsto \phi_\psi \in {\rm T}_\psi \mathcal{M}_{2n}$ allows us to identify the operators $-{\rm i}H_k$ with the vector fields
		$X_k:\psi\in \mathcal{M}_{2n} \mapsto (\psi,-{\rm i}H_k \psi) \in
		\mathcal{M}_{2n} \oplus \mathcal{M}_{2n} \simeq T\mathcal{M}_{2n}$.
		In turn, the $t$-dependent 
		Schr\"odinger equation (\ref{SchLie}) becomes the associated system
		of the $t$-dependent vector field $X =\sum_{k=1}^r b_k(t)X_k$ on $\mathcal{M}_{2n}$. 
		
		It was stated in Section \ref{ODG} that  $\mathcal{M}_{2n}$
		admits a natural symplectic structure turning the vector fields $X_k$
		into Hamiltonian admitting  real Hamiltonian functions
		$h_k(\psi)=\frac 12 \langle \psi,H_k\psi\rangle$. From \eqref{eq:20}, the commutators of these Hamiltonian vector fields are
		\begin{equation}
		[X_j,X_k]=-X_{[\![ H_j,H_k ]\!]}=-\sum_{l=1}^rc_{jkl}X_l,\qquad j,k=\overline{1,r}.
		\end{equation}
		Hence, $t$-dependent  Schr\"odinger
		equations (\ref{SchLie}) are Lie--Hamilton systems.  Summing up, we
		have this first theorem. 
		
		\begin{theorem} Every $t$-dependent Schr\"odinger equation on $\mathcal{M}_{2n}$ determined by a quantum Lie system
			$H(t)$ is a Lie--Hamilton system. 
		\end{theorem}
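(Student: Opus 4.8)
The plan is to produce a concrete Lie--Hamiltonian structure $(\mathcal{M}_{2n},\Lambda,h)$ for the $t$-dependent vector field $X=\sum_{k=1}^r b_k(t)X_k$ associated with the Schr\"odinger equation (\ref{SchLie}), and then to invoke Theorem \ref{FunHam}. First I would recall, from Section \ref{ODG}, that the Poisson bivector $\Lambda$ obtained from the natural symplectic form on $\mathcal{M}_{2n}$ (both globally defined in the Darboux coordinates $\{q_j,p_j\}$) is such that the vector field $X_k$ identified with $-{\rm i}H_k$ coincides with the Hamiltonian vector field $X_{H_k}=-\Lambda({\rm d}f_{H_k},\cdot)$ of the quadratic function $h_k:=f_{H_k}$, i.e. $h_k(\psi)=\tfrac12\langle\psi,H_k\psi\rangle$, cf. (\ref{eq:10}) and (\ref{eq:14}). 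Moreover the commutation relations $[X_j,X_k]=-X_{[\![ H_j,H_k ]\!]}=-\sum_{l=1}^r c_{jkl}X_l$ show that $\langle X_1,\dots,X_r\rangle$ is a finite-dimensional Lie algebra of vector fields, so $X$ is already a Lie system with this VG--Lie algebra.

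Next I would assemble the Hamiltonian side. Since being Hamiltonian relative to $\Lambda$ depends $\mathbb{R}$-linearly on the Hamiltonian function, the vector field $X_t=\sum_{k=1}^r b_k(t)X_k$ is, for each fixed $t$, the Hamiltonian vector field of $h_t:=\sum_{k=1}^r b_k(t)h_k$. Hence, defining $h:(t,\psi)\in\mathbb{R}\times\mathcal{M}_{2n}\mapsto h_t(\psi)\in\mathbb{R}$, the pair $(X_t,h_t)$ satisfies the compatibility hypothesis required in Theorem \ref{FunHam}. What remains is to verify that $(\mathcal{M}_{2n},\Lambda,h)$ is genuinely a Lie--Hamiltonian structure, that is, that $\mathrm{Lie}(\{h_t\}_{t\in\mathbb{R}},\{\cdot,\cdot\}_\Lambda)$ is finite-dimensional.

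This last point is exactly where the quantum Lie system hypothesis $[\![ H_j,H_k ]\!]=\sum_{l=1}^r c_{jkl}H_l$ is used: the homomorphism identity $\{f_A,f_B\}_\Lambda=f_{[\![ A,B ]\!]}$ established in Section \ref{ODG} gives $\{h_j,h_k\}_\Lambda=f_{[\![ H_j,H_k ]\!]}=\sum_{l=1}^r c_{jkl}h_l$. Therefore the finite-dimensional span $\mathcal{H}_\Lambda:=\langle h_1,\dots,h_r\rangle$ is closed under $\{\cdot,\cdot\}_\Lambda$; it is a finite-dimensional Lie algebra with the same structure constants as $V$ (indeed $A\mapsto f_A$ is injective on Hermitian operators, so $\mathcal{H}_\Lambda\simeq V$). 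Since $\{h_t\}_{t\in\mathbb{R}}\subset\mathcal{H}_\Lambda$, we get $\mathrm{Lie}(\{h_t\}_{t\in\mathbb{R}},\{\cdot,\cdot\}_\Lambda)\subseteq\mathcal{H}_\Lambda$, which is finite-dimensional. Applying Theorem \ref{FunHam} then yields that $X$, and hence the Schr\"odinger equation (\ref{SchLie}), is a Lie--Hamilton system, with $\mathcal{H}_\Lambda$ as a Lie--Hamilton algebra.

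There is essentially no deep obstacle: every ingredient — the global symplectic (hence Poisson) structure on $\mathcal{M}_{2n}$, the Hamiltonian character of the $X_k$ with Hamiltonians $f_{H_k}$, and the bracket homomorphism $\{f_A,f_B\}_\Lambda=f_{[\![ A,B ]\!]}$ — has been set up in Section \ref{ODG}. The only place calling for a moment of care is checking that the Poisson-bracket closure produces a \emph{finite-dimensional} Lie algebra without having to adjoin constant functions; this is immediate because each $h_k$ is a quadratic form, so it vanishes together with its differential at $0$ and the bracket of two such forms is again a quadratic form with no constant term, matching $f_{[\![ H_j,H_k ]\!]}$ exactly.
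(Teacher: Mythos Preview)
Your proposal is correct and follows essentially the same approach as the paper: the paper's argument, given in the paragraph immediately preceding the theorem, simply notes that the $X_k$ are Hamiltonian with Hamiltonian functions $h_k(\psi)=\tfrac12\langle\psi,H_k\psi\rangle$ and that $[X_j,X_k]=-\sum_l c_{jkl}X_l$, concluding directly that the system is Lie--Hamilton. Your version is a more careful and explicit execution of the same idea, in particular spelling out the verification via Theorem~\ref{FunHam} that $\mathrm{Lie}(\{h_t\},\{\cdot,\cdot\}_\Lambda)\subset\langle h_1,\dots,h_r\rangle$ is finite-dimensional, which the paper leaves implicit.
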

		
		\begin{note}
			From now on we will only consider Schr\"odinger equations related to a quantum--Lie system.
		\end{note}
		We exemplify the above result by studying  a two-level quantum
		system. Its possible states are described by elements $\psi\in
		\mathbb{C}^2$. 
		Unitary evolution is described by the canonical action of the unitary Lie group
		$U(2)$ on $\mathbb{C}^2$. In consequence, the
		evolution of every particular solution $\psi(t)$ in $\mathbb{C}^2$ of
		the corresponding $t$-dependent Schr\"odinger equation is determined
		by a curve $t\mapsto U_t$ within $U(2)$ which in turn gives rise to a curve
		$-{\rm i}H(t):=\dot U_tU_t^{-1}$ in the Lie algebra $\mathfrak{u}(2)$ of
		$U(2)$. More specifically, 
		\begin{equation}
		\label{Schr}
		\frac{{\rm d}\psi}{{\rm d}t} = -{\rm i} H(t) \psi, \qquad -{\rm i}H(t) \in \mathfrak{u}(2),\qquad \forall t\in\mathbb{R}.
		\end{equation}
		As $\mathfrak{u}(2)$ is the space of skew-Hermitian operators on
		$\mathbb{C}^2$, each $H(t)$ is Hermitian.
		
		Consider a basis for ${\rm Herm} (2)$ given by the $2\times2$ identity matrix $I_0$ and the traceless matrices $\{S_j := \frac 12 \sigma_j\}_{j=1,2,3}$,
		where $\sigma_1,\sigma_2,\sigma_3$ are the Pauli matrices, namely
		\begin{equation*}I_0 =\begin{pmatrix}
		1 & 0 \\
		0 & 1
		\end{pmatrix},\quad
		\sigma_1 =  \begin{pmatrix}
		0 & 1 \\
		1 & 0
		\end{pmatrix}, \quad
		\sigma_2 = \begin{pmatrix}
		0 & -i \\
		i & 0
		\end{pmatrix}, \quad
		\sigma_3 = \begin{pmatrix}
		1 & 0 \\
		0 & -1
		\end{pmatrix}.
		\end{equation*}
		
		If we consider the commutator defined in \eqref{eq:18}, then
		we find that $[\![ I_0,\cdot ]\!]=0$ and 
		\begin{equation}
		\label{LiePSigma}
		[\![ S_j, S_k ]\!] =\sum_{l=1}^3 \epsilon_{jkl} S_l,\qquad j,k=1,2,3.
		\end{equation}
		
		Every Hamiltonian in ${\rm Herm}(2)$ can be brought into the form:
		\begin{equation}\label{Hamiltonian}
		H = B_0I_0+\sum_{j=1}^3 B_j S_j = B_0I_0+{\bf B} \cdot {\bf S} , \,\,\, {\bf S}  = (\sigma_1, \sigma_2, \sigma_3)/2, \,\,\, {\bf B}:= (B_1, B_2, B_3) \in \mathbb{R}^3,\,\,\, B_0\in\mathbb{R}.
		\end{equation}
		
		In a physical system, {${\bf B} $} is identified with the magnetic field
		applied to a 1/2-spin particle. To obtain a $t$-dependent Hamiltonian,
		the magnetic field must be $t$-dependent: 
		\begin{equation}
		\label{eq:Ht}
		H (t) :=B_0(t)I_0+ {\bf B}  (t) \cdot {\bf S} .
		\end{equation}
		The $t$-dependent Hamiltonian $H(t)$ is therefore a quantum Lie system. It determines a $t$-dependent Schr\"odinger equation of the form \eqref{Schr} in $\mathbb{C}^2$  \cite{CGM01}. 
		
		Consider now the geometric formalism presented in the previous section. The Hilbert space $\mathbb{C}^2$ is replaced by a real manifold $\mathcal{M}_4$ with coordinates $(q_1, p_1, q_2, p_2)$. The coordinate expression of the $t$-dependent Schr\"odinger equation \eqref{Schr} with the quantum Lie system defined by \eqref{eq:Ht} is
		\begin{equation}\label{C2}
		\frac{\rm d}{{\rm d}t}\!\left[\begin{array}{c}q_1\\p_1\\q_2\\p_2\end{array}\right]\!=\!
		\frac 12\!\left[\begin{array}{cccc}0&2B_0(t)+B_3(t)&-B_2(t)&B_1(t)\\
		-2B_0(t)-B_3(t)&0&-B_{1}(t)&-B_2(t)\\
		B_{2}(t)&B_{1}(t)&0&2B_0(t)-B_3(t)\\
		-B_{1}(t)&B_2(t)&B_3(t)-2B_0(t)&0
		\end{array}\right]\!\left[\begin{array}{c}q_1\\p_1\\q_2\\p_2\end{array}\right].
		\end{equation}
		This is the associated system of the $t$-dependent vector field $X =\sum_{\alpha=0}^3 B_\alpha(t)X_\alpha$, with 
		{\small\begin{equation}
		\label{2levelXj} 
		\begin{gathered}
		X_0 = -\Gamma={p_1}\frac{\partial}{\partial
			{q_1}}-{q_1}\frac{\partial}{\partial
			{p_1}}+{p_2}\frac{\partial}{\partial
			{q_2}}-{q_2}\frac{\partial}{\partial {p_2}},\quad 
		X_1=\frac 12\left({p_2}\frac{\partial}{\partial
			{q_1}}-{q_2}\frac{\partial}{\partial
			{p_1}}+{p_1}\frac{\partial}{\partial
			{q_2}}-{q_1}\frac{\partial}{\partial {p_2}}\right), \\ 
		X_2 = \frac 12 \left( -{q_2}\frac{\partial}{\partial {q_1}}
		-{p_2}\frac{\partial}{\partial {p_1}} +{q_1}\frac{\partial}{\partial
			{q_2}} + {p_1}\frac{\partial}{\partial {p_2}}\right),\quad  
		X_3=\frac 12\left({p_1}\frac{\partial}{\partial
			{q_1}}-{q_1}\frac{\partial}{\partial
			{p_1}}-{p_2}\frac{\partial}{\partial
			{q_2}}+{q_2}\frac{\partial}{\partial {p_2}}\right), 
		\end{gathered}
		\end{equation}}
		spanning a Lie algebra of vector fields isomorphic to $\mathfrak{u}(2)$:
		$$
		[X_0,\cdot]=0,\qquad [X_1,X_2]=-X_3,\qquad [X_2,X_3]=-X_1,\qquad [X_3,X_1]=-X_2.
		$$
		
		Recall that $\mathcal{M}_4$ admits a K\"ahler structure with symplectic and Riemannian tensor fields given by 
		$$
		\omega=\sum_{j=1}^2{\rm d}{q_j}\wedge {\rm d}{p_j},\qquad g=\sum_{j=1}^2({\rm d}{q_j}\otimes {\rm d}{q_j}+{\rm d}{p_j}\otimes {\rm d}{p_j}).
		$$
		The vector fields $X_0,X_1,X_2,X_3$ are Hamiltonian with respect to $\omega$. Their Hamiltonian functions are
		\begin{equation}
		\label{eq:hj}
		\begin{gathered}
		h_0 (\psi) = \frac 12 \langle \psi,\psi\rangle = \frac 12 (q_1^2+p_1^2+q_2^2+p_2^2),\quad
		h_1 (\psi) = \frac 12 \langle \psi,S_1\psi\rangle = \frac 12 (q_1 q_2 + p_1 p_2), \\
		h_2 (\psi) = \frac 12 \langle \psi,S_2\psi\rangle = \frac 12 (q_1 p_2 - p_1 q_2), \quad
		h_3 (\psi) = \frac 12 \langle \psi,S_3\psi\rangle = \frac 14 (q_1^2 + p_1^2 -q_2^2 -p_2^2),
		\end{gathered}
		\end{equation}
		with $\iota_{X_\alpha}\omega= \d h_\alpha$ for $\alpha = 0,1,2,3$.
	 
	 	The Hamiltonian functions span a Lie algebra isomorphic to $\mathfrak{u}(2)$:
		$$
		\{h_0,\cdot\}=0,\qquad \{h_1,h_2\}=h_3,\qquad \{h_2,h_3\}=h_1,\qquad \{h_3,h_1\}=h_2.
		$$
		It will be useful to note that $h_1,h_2,h_3$
		are functionally independent, but $h_0^2 = 4(h_1^2+h_2^2+h_3^2)$.
		
		The $t$-dependent Schr\"odinger equation (\ref{Schr}) enjoys an additional
		property: $X_0, X_1,X_2$ and $X_3$ are Killing vector fields with respect to
		$g$, namely $\mathcal{L}_{X_\alpha}g=0$ for $\alpha=0,1,2,3$. Using
		this, we can easily prove in an intrinsic geometric way that 
		$$
		I_1=g(X_0,X_0),\qquad I_2= g(X_1,X_1)+g(X_2,X_2)+g(X_3,X_3),\qquad I_3=h_1^2+h_2^2+h_3^2,\qquad I_4=h_0
		$$
		are constants of the motion for $X$. This example is relevant because it illustrates how to define the above constants of the motion geometrically in terms of $g$ and the Hamiltonian functions due to $\omega$. 
		
		Note also that our {real} system comes from a linear complex differential equation. This gives rise to a symmetry $(q_1,p_1,q_2,p_2) \in \mathcal{M}_4 \mapsto (-p_1,q_1,-p_2,q_2) \in \mathcal{M}_4$ of system (\ref{C2}), which is the counterpart of the multiplication by the imaginary unit in $\mathbb{C}^2$. Therefore, the Lie system  preserves the complex structure $J$ in $\mathcal{M}_4$.
		
		Finally, $\mathcal{M}_{4}$ admits the following symplectic forms 
		$$
		\Omega_R:={\rm d}q_1\wedge {\rm d}q_2 - {\rm d}p_1 \wedge {\rm d}p_2,\qquad \Omega_I:={\rm d}q_1 \wedge {\rm d}p_2 + {\rm d}p_1 \wedge {\rm d}q_2
		$$
		turning $X_1,X_2,X_3$ into Hamiltonian vector fields. However, these symplectic forms are not invariant under $X_0$:
		\begin{equation}
		\mathcal{L}_{X_0} \Omega_R = \Omega_I, \quad
		\mathcal{L}_{X_0} \Omega_I = - \Omega_R.
		\end{equation}
		Recall that $X_0 = -\Gamma$. This result proves that $\Omega_R$ and $\Omega_I$ are invariant under the canonical $SU(2)$-action on $\mathcal{M}_4$, but not under the $U(2)$-action.
		
		The results here presented can be used to obtain a superposition rule for the initial system. This topic will be studied in following sections. Let us generalise the above example. 
		
		\begin{theorem}\label{KahH} Every Schr\"odinger equation on $\mathcal{M}_{2n}$
	admits a VG--Lie algebra
			$V_{\mathcal{M}_{2n}}\simeq\mathfrak{u}(n)$ of K\"ahler vector fields relative to the
			K\"ahler structure $(g,\omega, J)$ on $\mathcal{M}_{2n}$.   
		\end{theorem}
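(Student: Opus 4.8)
The plan is to exhibit one fixed real Lie algebra of vector fields on $\mathcal{M}_{2n}$, isomorphic to $\mathfrak{u}(n)$, that serves as a VG--Lie algebra for \emph{every} Schr\"odinger equation \eqref{SchLie}, and then to check that all of its elements preserve the full K\"ahler structure $(g,\omega,J)$.

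First I would set $V_{\mathcal{M}_{2n}}:=\{X_A\mid A\in{\rm Herm}(\mathcal{H})\}$, with $X_A:=-\Lambda(\d f_A,\cdot)$ as in \eqref{eq:10} and $f_A$ the quadratic function \eqref{eq:14}. Since $f_A$ is a non-constant quadratic form whenever $A\neq 0$, the assignment $A\mapsto X_A$ is $\mathbb{R}$-linear and injective; combined with the identity $[X_A,X_B]=-X_{[\![ A,B ]\!]}$ recalled in Section \ref{ODG}, it follows (after correcting by a sign, i.e.\ using $A\mapsto -X_A$) that $({\rm Herm}(\mathcal{H}),[\![\cdot,\cdot]\!])$ is isomorphic as a Lie algebra to $V_{\mathcal{M}_{2n}}$. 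Since $[\![\cdot,\cdot]\!]$ is given by \eqref{eq:18}, the map $A\mapsto -{\rm i}A$ identifies $({\rm Herm}(\mathcal{H}),[\![\cdot,\cdot]\!])$ with the Lie algebra $\mathfrak{u}(n)$ of skew-Hermitian operators, so $V_{\mathcal{M}_{2n}}\simeq\mathfrak{u}(n)$. Finally, because any Schr\"odinger equation \eqref{SchLie} arises from a quantum Lie system $H(t)=\sum_k b_k(t)H_k$ whose generators $H_k$ span a Lie subalgebra of $({\rm Herm}(\mathcal{H}),[\![\cdot,\cdot]\!])$, its associated $t$-dependent vector field $X=\sum_k b_k(t)X_{H_k}$ takes values in $V_{\mathcal{M}_{2n}}$; hence $V_{\mathcal{M}_{2n}}$ is a VG--Lie algebra of $X$.

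It then remains to verify that each $X_A\in V_{\mathcal{M}_{2n}}$ is a K\"ahler vector field. By construction $X_A$ is Hamiltonian with respect to $\omega$ with Hamiltonian function $f_A$, so $\mathcal{L}_{X_A}\omega=0$. For the metric I would reuse the argument already given in Section \ref{ODG}: the flow of $X_A$ is $\psi\mapsto e^{-{\rm i}tA}\psi$, a one-parameter group of unitary operators, which preserves $\langle\cdot,\cdot\rangle$ and hence its real part $g$; thus $\mathcal{L}_{X_A}g=0$ and $X_A$ is Killing. (Equivalently, in the chart \eqref{CoorH} the vector field $X_A$ is linear with constant matrix lying in the image of $\mathfrak{u}(n)\hookrightarrow\mathfrak{o}(2n)\cap\mathfrak{sp}(2n,\mathbb{R})$, from which invariance of $g$ and $\omega$ is immediate.) Since $J$ is determined by $g$ and $\omega$ through $\omega(X,Y)=g(JX,Y)$, invariance of the latter two forces $\mathcal{L}_{X_A}J=0$, so $X_A$ preserves the whole K\"ahler structure. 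This makes $V_{\mathcal{M}_{2n}}$ a finite-dimensional Lie algebra of K\"ahler vector fields isomorphic to $\mathfrak{u}(n)$ and a VG--Lie algebra of every Schr\"odinger equation on $\mathcal{M}_{2n}$, as claimed. No step is a genuine obstacle here: the theorem is essentially a reorganisation of the geometric setup, the only mildly non-formal ingredient being the Killing property of the $X_A$, which the unitarity of the flow already supplies; the injectivity of $A\mapsto X_A$ and the matching of the three relevant brackets (the operator bracket $[\![\cdot,\cdot]\!]$, the Lie bracket of vector fields, and that of $\mathfrak{u}(n)$) are routine verifications.
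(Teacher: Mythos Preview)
Your proof is correct and follows essentially the same approach as the paper's. The paper phrases the argument in terms of the natural action $\varphi_{\mathcal{M}_{2n}}:U(n)\times\mathcal{M}_{2n}\to\mathcal{M}_{2n}$ and its fundamental vector fields, while you work directly with the Hamiltonian vector fields $X_A$ for $A\in{\rm Herm}(\mathcal{H})$; these are the same objects, and both arguments derive the K\"ahler property from the fact that the unitary flow preserves the Hermitian product (and hence $g$ and $\omega$).
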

		\begin{proof} Let $\varphi_{\mathcal{M}_{2n}}:U(n)\times \mathcal{M}_{2n}\rightarrow \mathcal{M}_{2n}$ be
			the natural action of the unitary group by unitary matrices on $\mathbb{C}^n$ understood in the natural way as a
			real manifold $\mathcal{M}_{2n}$.
			
			Every element $h\in U(n)$ induces a diffeomorphism on
			$\mathcal{M}_{2n}$  leaving invariant the Hermitian product on $\mathcal{H}$. Hence, it leaves invariant its real and imaginary parts.
			In view of expressions (\ref{DefGW}), it also follows that $h_t^*\omega=\omega$ and $h_t^*g=g$ for every $t\in\mathbb{R}$ and every curve $h_t$ in $U(n)$. As a
			consequence, every fundamental vector field $Y$ of the Lie group
			action $\varphi_{\mathcal{M}_{2n}}$ satisfies that $\mathcal{L}_Yg=0$ and
			$\mathcal{L}_Y\omega=0$. Hence, the fundamental vector fields
			of $\varphi_{\mathcal{M}_{2n}}$ are K\"ahler vector fields relative to $(g,\omega,J)$. 
			
			The evolution of the Schr\"odinger equation   
			\begin{equation}\label{SchUni}
			\frac{{\rm d}\psi}{{\rm d}t}=-{\rm i}H(t)\psi,\qquad -{\rm i} H(t)\in \mathfrak{u}(n),\qquad \psi\in \mathcal{M}_{2n}, 
			\end{equation}
			takes the form $\psi(t)=\varphi_{\mathcal{M}_{2n}}(h_t,\psi(0))$, with $h_0={\rm Id}_{\mathcal{M}_{2n}}$, and a certain curve $h: t\in \mathbb{R}\mapsto h_t\in U(n)$. 
			Therefore, (\ref{SchUni}) is determined by a 
			$t$-dependent vector field $X$ taking values in the Lie algebra of fundamental vector fields of $\varphi_{\mathcal{M}_{2n}}$, namely
			$			X =\sum_{\alpha=1}^r b_\alpha(t)X_\alpha,
			$
			where $X_1,\ldots,X_r$ is a basis of the Lie algebra of fundamental vector fields of $\varphi_{\mathcal{M}_{2n}}$.  Then $\langle X_1,\ldots,X_r\rangle$ is a Lie algebra isomorphic to $\mathfrak{u}(n)$ and becomes a VG--Lie algebra of K\"ahler vector fields for $X_{\mathcal{M}_{2n}}$.
		\end{proof}
		
		We hereafter write $V_{\mathcal{M}_{2n}}$ and $V_{\mathcal{M}_{2n}^\times}$ for VG--Lie algebras of K\"ahler vector fields relative to the standard K\"ahler structure on $\mathcal{M}_{2n}$ and its natural restriction to $\mathcal{M}_{2n}^\times$, respectively.
		The above examples motivate to introduce the following definition.
		
		\begin{definition} We call {\it K\"ahler--Lie system} a Lie system admitting a VG--Lie algebra of K\"ahler vector fields with respect to a K\"ahler structure.
		\end{definition}
		
		It is therefore simple to prove the proposition below.
		\begin{proposition} The space $I_X$ of $t$-independent constants of motion for a K\"ahler--Lie system $X$ is a Poisson algebra with respect to the Poisson bracket of the  K\"ahler structure and the commutative algebra relative to the bracket induced by its Riemannian structure.
		\end{proposition}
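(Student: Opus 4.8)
\emph{Proof strategy.} The plan is to derive both algebraic structures on $I_X$ from a single fact: every element of $V^X$ acts as a derivation of the pointwise product and of both the Poisson bracket $\{\cdot,\cdot\}$ associated with $\omega$ and the symmetric bracket $\{\cdot,\cdot\}_+$ associated with $g$. Since $I_X$ will then be the common kernel of these derivations, it is automatically closed under all three operations, and the Poisson/commutative-algebra axioms follow by restricting to $I_X$ the ones already holding on $C^\infty(N)$.

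First I would identify $I_X$ with the space of common first integrals of $V^X$. A $t$-independent constant of motion $f$ of $X$ satisfies $X_tf=0$ for all $t\in\mathbb{R}$; since the $X_t$ generate $V^X={\rm Lie}(\{X_t\}_{t\in\mathbb{R}})$, an induction on bracket length using $[Y,Z]f=Y(Zf)-Z(Yf)$ shows $Yf=0$ for every $Y\in V^X$, and the converse is immediate. Because $X$ is a K\"ahler--Lie system, $V^X$ is contained in a VG--Lie algebra of K\"ahler vector fields, so every $Y\in V^X$ is a K\"ahler vector field; in particular $\mathcal{L}_Yg=0$ and $\mathcal{L}_Y\omega=0$.

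The key step is the derivation property. The local flow $\{\phi_s\}$ of a K\"ahler vector field $Y$ consists of transformations preserving both $g$ and $\omega$, hence of automorphisms of the induced brackets: $\phi_s^*\{f,h\}=\{\phi_s^*f,\phi_s^*h\}$ and $\phi_s^*\{f,h\}_+=\{\phi_s^*f,\phi_s^*h\}_+$ for all $f,h\in C^\infty(N)$. Differentiating at $s=0$ gives
\begin{equation*}
Y\{f,h\}=\{Yf,h\}+\{f,Yh\},\qquad Y\{f,h\}_+=\{Yf,h\}_++\{f,Yh\}_+,
\end{equation*}
and trivially $Y(fh)=(Yf)h+f(Yh)$. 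Hence, if $f,h\in I_X$ then $Yf=Yh=0$ for every $Y\in V^X$, so $Y\{f,h\}=Y\{f,h\}_+=Y(fh)=0$, i.e. $\{f,h\},\{f,h\}_+,fh\in I_X$; together with $\mathbb{R}$-linearity this makes $I_X$ a commutative associative subalgebra of $C^\infty(N)$ stable under both brackets. Restricting to $I_X$ the identities valid on $C^\infty(N)$ — antisymmetry, the Jacobi identity and the Leibniz rule for $\{\cdot,\cdot\}$, and symmetry and the Leibniz rule for $\{\cdot,\cdot\}_+$ — then yields that $(I_X,\,\cdot\,,\{\cdot,\cdot\})$ is a Poisson algebra and that $(I_X,\,\cdot\,,\{\cdot,\cdot\}_+)$ is the claimed commutative algebra relative to the bracket induced by the Riemannian structure.

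I do not anticipate a genuine obstacle here: the whole argument rests on the elementary derivation identity displayed above. The two points that deserve a line of justification are the passage from ``$f$ is annihilated by all the $X_t$'' to ``$f$ is annihilated by all of $V^X$'', and the fact that the flow of a K\"ahler vector field respects both the symplectic and the metric brackets; both are routine.
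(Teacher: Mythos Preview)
Your argument is correct and is precisely the routine verification the paper has in mind; the paper does not actually give a proof of this proposition, merely stating that ``it is therefore simple to prove'' before the statement. Your write-up supplies exactly the expected details: identifying $I_X$ with the joint kernel of $V^X$, using $V^X\subset V$ (the VG--Lie algebra of K\"ahler vector fields) to get $\mathcal{L}_Y\omega=\mathcal{L}_Yg=0$ for all $Y\in V^X$, and then invoking the derivation property of $Y$ on both brackets to obtain closure of $I_X$.
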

		In following sections, it will be shown that K\"ahler structures allow us to devise techniques to obtain constants of the motion and superposition rules for K\"ahler--Lie systems, e.g. if vector fields $Y_1$, $Y_2$ commute with all the elements of the VG--Lie algebra of K\"ahler vector fields for a K\"ahler--Lie system $X$, then $g(Y_1,Y_1)$, $g(Y_2, Y_2)$, $g(Y_1, Y_2)$ and $\omega (Y_1, Y_2)$ are constants of motion for $X$.

		\section{Lie systems and Schr\"odinger equations on $\mathcal{R}_n$ and $\mathcal{S}_{n}$}
		
		The unit sphere $\mathcal{S}_n$ in $\mathbb{C}^n$ admits a natural structure as a real $(2n-1)$-dimensional manifold. Since we study $t$-dependent Schr\"odinger equations with a unitary evolution and their evolution leave $\mathcal{S}_n$ invariant, it seems natural at first to restrict them to the unity sphere $\mathcal{S}_n$. Nevertheless, as shown next, their restriction is generally no longer neither a K\"ahler--Lie  system nor a Lie--Hamilton one. That is why we now introduce an alternative Schr\"odinger equation which possesses more useful properties to describe its superposition rules and the superposition rules for other related Schr\"odinger equations.
		
		
		\begin{proposition}\label{ResSch} A Schr\"odinger equation on $\mathcal{M}_{2n}$ can be restricted to the unity sphere $\mathcal{S}_{n}$ giving rise to a Lie system $X_{\mathcal{S}_{n}}$ possessing a VG--Lie algebra $V_{\mathcal{S}_{n}}$ of Hamiltonian vector fields with respect to the presymplectic form $\iota_\mathcal{S}^*\omega$ with $\iota_\mathcal{S}:\mathcal{S}_n\hookrightarrow \mathcal{M}_{2n}$. If $V^{X_{\mathcal{S}_n}}=V_{\mathcal{S}_{n}}$, then $X_{\mathcal{S}_n}$  is not a Lie--Hamilton system.
		\end{proposition}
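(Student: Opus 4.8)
The plan is to realise $X_{\mathcal{S}_n}$ as the restriction to $\mathcal{S}_n$ of the vector field furnished by Theorem~\ref{KahH}, to check by direct computation that it is a Lie system with the claimed presymplectic Hamiltonian structure, and finally to obstruct the Lie--Hamilton property by a parity argument; the last step is the only one requiring an idea.

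\emph{Restriction and Lie system structure.} By Theorem~\ref{KahH}, the Schr\"odinger equation on $\mathcal{M}_{2n}$ is the associated system of a $t$-dependent vector field $X=\sum_{\alpha=1}^r b_\alpha(t)X_\alpha$, where $X_1,\dots,X_r$ span the Lie algebra of fundamental vector fields of the $U(n)$--action $\varphi_{\mathcal{M}_{2n}}$ and each $X_\alpha$ is Hamiltonian relative to $\omega$ with $\iota_{X_\alpha}\omega=\d h_\alpha$, $h_\alpha(\psi)=\frac12\langle\psi,H_\alpha\psi\rangle$. Every $U\in U(n)$ preserves the Hermitian product, hence preserves $f_I(\psi)=\frac12\langle\psi,\psi\rangle$ and thus the regular level set $\mathcal{S}_n=f_I^{-1}(\frac12)$; equivalently, $X_\alpha f_I=\{f_I,h_\alpha\}=f_{[\![ I,H_\alpha ]\!]}=0$. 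So each $X_\alpha$ is tangent to $\mathcal{S}_n$, and $X$ restricts to a $t$-dependent vector field $X_{\mathcal{S}_n}$ on $\mathcal{S}_n$ taking values in $V_{\mathcal{S}_n}:=\langle X_1|_{\mathcal{S}_n},\dots,X_r|_{\mathcal{S}_n}\rangle$. Restriction of vector fields tangent to a submanifold is a Lie algebra morphism, and it is injective here because the $U(n)$--action on $\mathcal{S}_n\simeq S^{2n-1}$ is effective; hence $V_{\mathcal{S}_n}\simeq\mathfrak{u}(n)$ is finite-dimensional and $X_{\mathcal{S}_n}$ is a Lie system with VG--Lie algebra $V_{\mathcal{S}_n}$ by the Lie--Scheffers Theorem.

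\emph{The presymplectic Hamiltonian structure.} The form $\omega_{\mathcal{S}}:=\iota_{\mathcal{S}}^*\omega$ is closed, being a pull-back. Since $\omega$ is nondegenerate, the $\omega$--orthogonal complement of the hyperplane $T_\psi\mathcal{S}_n=\ker\d f_I$ is one-dimensional and spanned by $\Gamma_\psi$ (because $\iota_\Gamma\omega=-\d f_I$, as $\Gamma=-X_I$ and $\iota_{X_I}\omega=\d f_I$); moreover $\Gamma$ is tangent to $\mathcal{S}_n$ and nowhere zero there. Hence the radical of $\omega_{\mathcal{S}}$ is the rank-one distribution $\langle\Gamma\rangle$, so $\omega_{\mathcal{S}}$ has constant rank $2n-2$ and is genuinely presymplectic. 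Pulling back $\iota_{X_\alpha}\omega=\d h_\alpha$ by $\iota_{\mathcal{S}}$ and using that $X_\alpha$ is tangent to $\mathcal{S}_n$ gives
$$\iota_{X_\alpha|_{\mathcal{S}_n}}\omega_{\mathcal{S}}=\iota_{\mathcal{S}}^*(\iota_{X_\alpha}\omega)=\d(\iota_{\mathcal{S}}^*h_\alpha),\qquad \alpha=\overline{1,r},$$
so $V_{\mathcal{S}_n}$ consists of Hamiltonian vector fields relative to $\omega_{\mathcal{S}}$, as claimed.

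\emph{$X_{\mathcal{S}_n}$ is not a Lie--Hamilton system.} Assume $V^{X_{\mathcal{S}_n}}=V_{\mathcal{S}_n}\simeq\mathfrak{u}(n)$ and, for a contradiction, that there is a Poisson bivector $\Lambda$ on $\mathcal{S}_n$ making every $Y\in V_{\mathcal{S}_n}$ Hamiltonian, $Y=-\widehat\Lambda(\d h_Y)$. Then $Y(\psi)\in\mathrm{Im}\,\widehat\Lambda_\psi$ for every $Y\in V_{\mathcal{S}_n}$ and every $\psi\in\mathcal{S}_n$, so $\mathcal{D}^{V_{\mathcal{S}_n}}_\psi\subseteq\mathrm{Im}\,\widehat\Lambda_\psi$. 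But $U(n)$ acts transitively on $\mathcal{S}_n\simeq U(n)/U(n-1)\simeq S^{2n-1}$, so its fundamental vector fields span $T_\psi\mathcal{S}_n$; hence $\mathrm{Im}\,\widehat\Lambda_\psi=T_\psi\mathcal{S}_n$, which has dimension $2n-1$. This is impossible, since $\rank\widehat\Lambda_\psi$ equals the rank of the skew-symmetric bilinear form $\Lambda_\psi$ on $T^*_\psi\mathcal{S}_n$ and is therefore even. Thus no such $\Lambda$ exists and $X_{\mathcal{S}_n}$ is not a Lie--Hamilton system. The main obstacle is precisely this last step: recognising that the transitivity of the $U(n)$--action on the odd-dimensional sphere forces any would-be Poisson bivector to be everywhere nondegenerate, which cannot happen; the first two steps are routine verifications.
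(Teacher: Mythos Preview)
Your proof is correct and follows the same route as the paper's: invariance of $f_I$ under $U(n)$ gives tangency and the restricted Lie system, pull-back along $\iota_{\mathcal{S}}$ gives the presymplectic Hamiltonian structure, and transitivity of the $U(n)$--action on the odd-dimensional sphere obstructs any Poisson bivector. The only difference is that where the paper invokes an external ``no-go'' result (Proposition~5.1 of \cite{CGLS13}) for the last step, you unpack that result inline via the parity argument $\dim\mathrm{Im}\,\widehat\Lambda_\psi$ even versus $\dim T_\psi\mathcal{S}_n=2n-1$; your explicit identification of the radical $\langle\Gamma\rangle$ of $\omega_{\mathcal{S}}$ is also absent from the paper but is a welcome detail.
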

		\begin{proof} 
			The VG--Lie algebra $V_{\mathcal{M}_{2n}}$ of (\ref{SchUni}) is the Lie algebra of fundamental vector fields of the unitary action of $\varphi_{\mathcal{M}_{2n}}:U(n)\times \mathcal{M}_{2n}\rightarrow \mathcal{M}_{2n}$. Therefore, $\langle \varphi_{\mathcal{M}_{2n}}(g,\psi),\varphi_{\mathcal{M}_{2n}}(g,\psi)\rangle=\langle \psi,\psi\rangle$ for every $\psi\in \mathcal{M}_{2n}$ and $g\in U(n)$. Hence, $f(\psi):=\langle \psi,\psi\rangle$ is invariant under $\varphi_{\mathcal{M}_{2n}}$  and, in consequence, a first-integral of its fundamental vector fields, namely $V_{\mathcal{M}_{2n}}$. The restrictions of the elements of $V_{\mathcal{M}_{2n}}$ to $\mathcal{S}_{n}$ become tangent to $\mathcal{S}_{n}$ and they therefore span a finite-dimensional Lie algebra of vector fields $V_{\mathcal{S}_n}$ on $\mathcal{S}_n$. In the light of Theorem \ref{KahH}, the system (\ref{SchUni}) is related to a $t$-dependent vector field $X_{\mathcal{M}_{2n}}$ taking values in $V_{{\mathcal{M}_{2n}}}$. Therefore, system (\ref{SchUni}) can be restricted to a system $X_{\mathcal{S}_n}$ on $\mathcal{S}_{n}$ admitting a VG--Lie algebra $V_{\mathcal{S}_n}$.
			
			The embedding $\iota_{\mathcal{S}} :\mathcal{S}_{n}\hookrightarrow \mathcal{M}_{2n}$ gives rise to a presymplectic structure $\iota_{\mathcal{S}}^*\omega$ on $\mathcal{S}_{n}$, where $\omega$ is the natural symplectic structure (\ref{SRStructure}) on $\mathcal{M}_{2n}$. Since the elements of $V_{\mathcal{M}_{2n}}$ are Hamiltonian vector fields on $\mathcal{M}_{2n}$ with Hamiltonian functions $h_H(\psi) = \frac 12 \langle \psi,H\psi\rangle$ with $-{\rm i}H\in \mathfrak{u}(n)$,  their restrictions to $\mathcal{S}_n$ are tangent to $\mathcal{S}_{n}$ and Hamiltonian relative to the presymplectic form $\iota_{\mathcal{S}}^*\omega$ with Hamiltonian functions $\iota_{\mathcal{S}}^* h_H$.  Therefore,
they span a VG--Lie algebra $V_{\mathcal{S}_n}$ on $\mathcal{S}_n$ of Hamiltonian vector fields relative to the presymplectic structure $\omega_S$. 
			
			As $\mathcal{S}_{n}$ is an orbit of $\varphi_{\mathcal{M}_{2n}}$, then $T\mathcal{S}_{n}=\mathcal{D}^{V_{\mathcal{M}_{2n}}}|_{\mathcal{S}_{n}}$, which is an odd $(2n-1)$-dimensional distribution on $\mathcal{S}_n$. From assumption $V^{X_{\mathcal{S}_n}}=V_{\mathcal{S}_n}$ and, hence, $\mathcal{D}^{X_{\mathcal{S}_n}}=\mathcal{D}^{V_{\mathcal{S}_n}}=\mathcal{D}^{V_{\mathcal{M}_{2n}}}|_{\mathcal{S}_{n}}=T\mathcal{S}_n$. The so-called no-go Theorem for Lie--Hamilton systems (see \cite[Proposition 5.1]{CGLS13}) states that previous conditions are enough to ensure that $X_{\mathcal{S}_n}$ is not a Lie--Hamilton system.
\end{proof}
		A Dirac structure is a generalisation of presymplectic and Poisson manifolds. In fact, presymplectic and Poisson manifolds can be naturally attached to Dirac structures whose Hamiltonian vector fields are the Hamilton vector fields of the structures originating them (see  \cite{CGLS13} for details). This fact enables us to prove the following.
		\begin{corollary} The Schr\"odinger equation (\ref{SchUni}) on ${\mathcal{S}_{n}}$ is a Dirac--Lie system with respect to the Dirac structure induced by $\iota_\mathcal{S}^*\omega$. 
		\end{corollary}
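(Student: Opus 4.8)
The plan is to deduce the statement from Proposition~\ref{ResSch} together with the general correspondence, recalled just above, between presymplectic manifolds and Dirac structures. Write $\omega_{\mathcal{S}}:=\iota_\mathcal{S}^*\omega$ for the presymplectic form on $\mathcal{S}_n$. First I would make explicit the Dirac structure it induces, namely the subbundle $L_{\omega_\mathcal{S}}\subset T\mathcal{S}_n\oplus T^*\mathcal{S}_n$ whose fibre at $x\in\mathcal{S}_n$ consists of the pairs $(v_x,(\iota_{v_x}\omega_\mathcal{S})_x)$ with $v_x\in T_x\mathcal{S}_n$. Since $\omega$ is closed on $\mathcal{M}_{2n}$ and $\iota_\mathcal{S}^*$ commutes with ${\rm d}$, the form $\omega_\mathcal{S}$ is closed, so $L_{\omega_\mathcal{S}}$ is an integrable Dirac structure; and by construction a vector field $Y$ on $\mathcal{S}_n$ pairs with ${\rm d}f$ inside $L_{\omega_\mathcal{S}}$ exactly when $\iota_Y\omega_\mathcal{S}={\rm d}f$, i.e. when $Y$ is Hamiltonian with Hamiltonian function $f$ in the presymplectic sense used in Proposition~\ref{ResSch}.

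Next I would invoke Proposition~\ref{ResSch}: the restriction $X_{\mathcal{S}_n}$ of the Schr\"odinger equation (\ref{SchUni}) to $\mathcal{S}_n$ is a Lie system whose VG--Lie algebra $V_{\mathcal{S}_n}$ is spanned by the restrictions to $\mathcal{S}_n$ of the elements of $V_{\mathcal{M}_{2n}}$, each such restriction being Hamiltonian relative to $\omega_\mathcal{S}$ with Hamiltonian function $\iota_\mathcal{S}^* h_H$, where $h_H(\psi)=\frac12\langle\psi,H\psi\rangle$ and $-{\rm i}H\in\mathfrak{u}(n)$. By the first paragraph, every element of $V_{\mathcal{S}_n}$ is then a Hamiltonian vector field relative to $L_{\omega_\mathcal{S}}$; likewise, for each fixed $t$ the vector field $(X_{\mathcal{S}_n})_t$, being a real linear combination of these generators, is Hamiltonian relative to $L_{\omega_\mathcal{S}}$ with Hamiltonian function the corresponding linear combination of the $\iota_\mathcal{S}^* h_H$.

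Finally I would check that these data assemble into a Dirac--Lie Hamiltonian in the sense of~\cite{CGLS13}, i.e. that the functions $\iota_\mathcal{S}^* h_H$ close, under the bracket of admissible functions of the Dirac manifold $(\mathcal{S}_n,L_{\omega_\mathcal{S}})$, on a finite-dimensional Lie algebra containing the curve $t\mapsto\iota_\mathcal{S}^* h_{H(t)}$ attached to $X_{\mathcal{S}_n}$. The point is that the restriction to $\mathcal{S}_n$ of a $\omega$-Hamiltonian vector field $X_{h_H}$ on $\mathcal{M}_{2n}$ is $\omega_\mathcal{S}$-Hamiltonian with function $\iota_\mathcal{S}^* h_H$, so the pullback is a morphism for the relevant brackets; since the $h_H$ already span a Lie algebra isomorphic to $\mathfrak{u}(n)$ with $\{h_{H_1},h_{H_2}\}=h_{[\![ H_1,H_2 ]\!]}$, their pullbacks span a finite-dimensional Lie algebra, which furnishes the desired Dirac--Lie Hamiltonian and hence shows $X_{\mathcal{S}_n}$ is a Dirac--Lie system relative to $L_{\omega_\mathcal{S}}$. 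I expect the only point needing a little care to be this last step: the Hamiltonian function of a Dirac--Hamiltonian vector field of $L_{\omega_\mathcal{S}}$ is defined only up to a Casimir (here, a locally constant function), so the bracket must be read on the corresponding quotient; the integrability of $L_{\omega_\mathcal{S}}$ and the tangency of the restricted vector fields to $\mathcal{S}_n$ are immediate from Proposition~\ref{ResSch} and ${\rm d}\omega_\mathcal{S}=0$.
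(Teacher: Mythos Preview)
Your proposal is correct and follows essentially the same approach as the paper: the paper simply remarks that presymplectic manifolds can be naturally attached to Dirac structures whose Hamiltonian vector fields coincide with those of the originating presymplectic form, and then invokes Proposition~\ref{ResSch}; you spell this correspondence out explicitly by writing down $L_{\omega_\mathcal{S}}$ and verifying that $\omega_\mathcal{S}$-Hamiltonian vector fields are $L_{\omega_\mathcal{S}}$-Hamiltonian. Your final paragraph, checking that the pulled-back Hamiltonian functions close under the admissible bracket to a finite-dimensional Lie algebra, goes beyond what the corollary literally asserts (it establishes a Dirac--Lie Hamiltonian structure rather than merely the Dirac--Lie system property), but it is a natural and correct strengthening.
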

		
		Let us now prove that the projection of the restriction of (\ref{SchUni}) to $\mathcal{M}_{2n}^\times$ onto $\mathcal{R}_n$ exists and it is a Lie--Hamilton system that can be endowed with a natural coordinate system coming from this fact.
%
		\begin{lemma}\label{CoorHU}
			The manifold $\mathcal{R}_n$, for $n>1$, admits a local coordinate system on a neighborhood of each point  given by $2n-1$ functions $f_\alpha (\psi) = \frac 12 \langle \psi,H_\alpha\psi\rangle$, for $\alpha=\overline{1,2n-1}$ for certain operators $H_\alpha\in \mathfrak{su}(n)$. 
		\end{lemma}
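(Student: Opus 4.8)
The plan is to exploit the identification $\mathcal{R}_n = \mathbb{C}^n_0/U(1)$ together with the transitive action of $U(n)$ on the unit sphere $\mathcal{S}_n \simeq S^{2n-1}$, and to realise the coordinates as expectation-type functions built from traceless skew-Hermitian generators. First I would note that $\mathcal{R}_n \simeq \mathcal{S}_n \times \mathbb{R}_+$ (this is part of the data in Table \ref{table1}), so it suffices to produce $2n-2$ functions of the form $f_\alpha$ that give a local chart on $\mathcal{S}_n/U(1)$ near a generic point, and then adjoin one more $f_\alpha$ whose differential restores the missing direction transverse to $\mathcal{S}_n$, i.e. the radial/dilation direction generated by $\Delta$. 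Concretely, pick a point $[\psi_0]\in\mathcal{R}_n$; by the $U(n)$-action we may assume $\psi_0$ is a unit multiple of the first basis vector $e_1$. The functions $f_\alpha(\psi) = \tfrac12\langle\psi,H_\alpha\psi\rangle$ with $H_\alpha\in\mathfrak{su}(n)$ are invariant under the $U(1)$-phase (since $\langle\lambda\psi,H_\alpha\lambda\psi\rangle=\langle\psi,H_\alpha\psi\rangle$ for $|\lambda|=1$), hence descend to well-defined functions on $\mathcal{R}_n$; the lone non-traceless one, say $f_{2n-1}(\psi)=\tfrac12\langle\psi,\psi\rangle$ up to normalisation — wait, this must also be $\mathfrak{su}(n)$, so instead I take $H_{2n-1}$ to be a traceless Hermitian operator (e.g. $\mathrm{diag}(n-1,-1,\dots,-1)$) whose expectation value at $\psi_0$ has nonzero radial derivative, providing the $\mathbb{R}_+$-coordinate.

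The core computation is to check that $df_1,\dots,df_{2n-1}$ are linearly independent at $[\psi_0]$, equivalently that the differentials span the $(2n-1)$-dimensional cotangent space $T^*_{[\psi_0]}\mathcal{R}_n$. I would compute $df_\alpha$ at $\psi_0$ using $d f_\alpha|_{\psi_0}(\phi_{\psi_0}) = \mathfrak{Re}\langle\psi_0, H_\alpha\phi\rangle$, so the vanishing of all $df_\alpha$ on a tangent vector $\phi_{\psi_0}$ forces $\mathfrak{Re}\langle\psi_0, H_\alpha\phi\rangle = 0$ for every $H_\alpha$ in a spanning set of $\mathfrak{su}(n)$ (viewed as traceless Hermitian operators via the factor $-\mathrm{i}$). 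Since $\{H_\alpha\}$ spans all traceless Hermitian operators and $\psi_0=e_1$ (normalised), the conditions $\mathfrak{Re}\langle e_1, H\phi\rangle=0$ for all such $H$ pin down $\phi$ to lie in the two-dimensional real span of $e_1$ and $\mathrm{i}e_1$ — precisely the directions killed upon passing to $\mathcal{R}_n$ (the phase direction) minus the radial one. The radial generator $\mathrm{diag}(n-1,-1,\dots,-1)$ then detects the surviving $e_1$-direction, leaving only the phase direction $\mathrm{i}e_1$ in the kernel, which is exactly the tangent to the $U(1)$-orbit and hence zero in $T_{[\psi_0]}\mathcal{R}_n$. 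This shows the $2n-1$ differentials are independent at $[\psi_0]$, and by the inverse function theorem they form a local chart on a neighbourhood; the number is forced since $\dim\mathcal{R}_n = 2n-1$.

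The main obstacle I anticipate is the bookkeeping around which operators are traceless and which realise the radial direction, since a purely traceless family of expectation functions, being $U(1)\times\mathbb{R}_+$-related but still homogeneous of degree two, does in fact vary along the $\mathbb{R}_+$-factor — so one must be careful that the chosen $H_{2n-1}$ genuinely has a nonzero derivative along $\Delta$ at the chosen point (this holds because $\Delta f_{2n-1} = 2 f_{2n-1}$ and $f_{2n-1}(e_1)=\tfrac{n-1}{2}\neq 0$ for $n>1$, which is where the hypothesis $n>1$ enters). A secondary point to handle cleanly is the reduction to the normal form $\psi_0 = e_1$: this uses only the $U(n)$-equivariance of the construction, namely that $\varphi_{\mathcal{M}_{2n}}(U,\cdot)^* f_{H} = f_{U^{-1}HU}$ and $U^{-1}\mathfrak{su}(n)U = \mathfrak{su}(n)$, so a chart around $e_1$ transports to a chart around an arbitrary point. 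Once these are in place the independence argument is the short linear-algebra statement above, and the lemma follows.
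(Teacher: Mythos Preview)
Your route---fix $\psi_0=e_1$ by $U(n)$--equivariance, compute the differentials $df_{H_\alpha}|_{\psi_0}(\phi)=\mathfrak{Re}\langle e_1,H_\alpha\phi\rangle$ explicitly, and invoke the inverse function theorem---is genuinely different from the paper's. The paper instead observes that the $SU(n)$--orbit through any $\psi\in\mathcal{M}_{2n}^\times$ is the $(2n-1)$--sphere of radius $\|\psi\|$, so one can pick $2n{-}1$ fundamental vector fields of the $SU(n)$--action that are pointwise independent on a neighbourhood; since $\omega$ is nondegenerate, the Hamiltonian functions $f_\alpha$ of those vector fields then have independent differentials automatically, and as they are already $U(1)$--invariant they descend to a chart on $\mathcal{R}_n$. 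The paper's argument buys you the result without ever naming the $H_\alpha$ or fixing a base point; yours, done correctly, would hand you an explicit coordinate system.

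There is, however, a real gap in your execution. You write that the $2n{-}1$ operators $H_\alpha$ form ``a spanning set of $\mathfrak{su}(n)$'', but $\dim\mathfrak{su}(n)=n^2-1>2n-1$ once $n>2$, so for general $n$ you cannot be ranging over all traceless Hermitian $H$ when you compute the common kernel. You must say \emph{which} $2n{-}1$ operators you take---e.g.\ the $2n{-}2$ first-row generators $E_{1k}+E_{k1}$ and $-\mathrm{i}(E_{1k}-E_{k1})$ for $k=2,\dots,n$, together with your diagonal $H_{2n-1}$---and compute the kernel for that family. Moreover, even if one does range over all of $\mathfrak{su}(n)$, the kernel of the conditions $\mathfrak{Re}\langle e_1,H\phi\rangle=0$ is only the \emph{one}-dimensional span of $\mathrm{i}e_1$, not the two-dimensional span of $e_1,\mathrm{i}e_1$: taking $H=E_{11}-E_{22}$ already forces $\mathfrak{Re}(\phi_1)=0$. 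This error works in your favour (the differentials are more independent than you claim), so the conclusion survives, but it means your separate ``radial detector'' $H_{2n-1}$ is redundant rather than the crux, and the argument as written does not stand. With the explicit choice above the kernel computation goes through exactly as you intend: the off-diagonal family cuts the kernel down to $\mathrm{span}_{\mathbb{R}}\{e_1,\mathrm{i}e_1\}$, and $H_{2n-1}$ then removes the $e_1$ direction.
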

		\begin{proof} For $n>1$ any two elements of $\mathcal{M}_{2n}^\times$ with the same norm can be connected by the action of an element of $SU(n)$. Hence,
			the special unitary action $\varphi:SU(n)\times \mathcal{M}_{2n}^\times\rightarrow \mathcal{M}_{2n}^\times$, with $n>1$, has $(2n-1)$-dimensional orbits, which are embedded submanifolds of $\mathcal{M}_{2n}^\times$ because $SU(n)$ is compact. Since $\dim\,SU(n)=n^2-1\geq 2n-1$ for $n>1$, we can choose around any point of $\mathcal{M}_{2n}^\times$ an open neighbourhood $A_0$ where $2n-1$ fundamental vector fields of $\varphi$ are linearly independent at each point. As they are also Hamiltonian vector fields, their Hamiltonian functions, which can be taken of the form {$f_\alpha (\psi) := \frac 12 \langle \psi, H_\alpha\psi\rangle$}  with $H_\alpha\in \mathfrak{su}(n)$, $\phi\ in\mathcal{M}_{2n}^\times$, and $\alpha\in\overline{1,2n-1}$,  are  functionally independent on $A_0$. These functions are invariant under the action $\varphi_{U(1)}:(e^{{\rm i}\varphi},\psi)\in U(1)\times \mathcal{M}^\times_{2n}\mapsto e^{{\rm i}\varphi}\psi\in \mathcal{M}_{2n}^\times$ and give rise to well-defined functions $f_1|_{\mathcal{R}_n},\ldots,f_{2n-1}|_{\mathcal{R}_n},$ on an open subset of ${\mathcal{R}_n}$. As $f_1,\ldots, f_{2n-1}$ are constant on the leaves of $\varphi_{U(1)}$ and functionally independent on $A_0 \subset \mathcal{M}_{2n}^\times$, then $f_1|_{\mathcal{R}_n},\ldots,f_{2n-1}|_{\mathcal{R}_n}$ are functionally independent and provide a local coordinate system on ${\mathcal{R}_n}$.
		\end{proof}
		
		\begin{example}\label{Coor}
			A two-level system is described by a four-dimensional manifold $\mathcal{M}_4$ with coordinates $(q_1, p_1, q_2, p_2)$ given in (\ref{CoorH}). Its dynamics is described by a $t$-dependent vector field $X=\sum_{\alpha=1}^{3}B_\alpha(t)X_\alpha$, where
			the vector fields $X_1, X_2, X_3$ are given in \eqref{2levelXj} and admit Hamiltonian functions
			$$
			h_1(\psi) =\frac 12(q_1q_2 + p_1p_2),\quad h_2(\psi) =\frac 12 (q_1p_2-q_2p_1),\quad h_3(\psi) = \frac 14(q_1^2+p_1^2-q_2^2-p_2^2),\quad \forall\psi\in \mathcal{M}_4
			$$ relative to the natural symplectic structure on $\mathcal{M}_4$ appearing in \eqref{eq:hj}.
			As stated in Lemma \ref{CoorHU}, these functions define a coordinate system $\{h_1,h_2,h_3\}$ on ${\mathcal{R}_{2}}$. To verify it, let us consider the projection $\pi_\mathcal{MR}:\mathcal{M}^\times_4\rightarrow  \mathbb{R}^3_0$ by:
			\begin{equation}
			\label{eq:defPi}
			\pi_\mathcal{MR} (\psi) := (x:=h_1(\psi),y:=h_2(\psi),z:=h_3(\psi)),
			\end{equation}
			and show that ${\mathcal{R}_{2}}\simeq \mathbb{R}^3_0$.
			Since  {$x^2+y^2+z^2=\frac{1}{16} \langle \psi, \psi \rangle^2$}  and $(0,0)\notin \mathcal{M}_4^\times$, the image of $\pi_\mathcal{MR}$ does not contain the origin $(0,0,0)$ and $\pi_\mathcal{MR}$ takes values in $\mathbb{R}^3_0$ as assumed.
			
			 {Coming back to}  the complex notation of $\Hil=\mathbb{C}^2$,  we write $\psi = (z_1, z_2) = (q_1 + {\rm i}p_1, q_2 + {\rm i}p_2)$. Therefore, 
			$$
			x(\psi)= \frac 12\mathfrak{Re}\langle z_1,z_2\rangle,\qquad y(\psi)= \frac 12\mathfrak{Im}\langle z_1,z_2\rangle,\qquad z(\psi)=\frac 14 (|z_1|^2-|z_2|^2).
			$$
			for every $\psi = (z_1,z_2)\in \mathcal{M}_4^\times$. Hence, $x,y,z$ are constant along the equivalence classes of ${\mathcal{R}_{2}}$ and if $\psi,\hat\psi\in \mathcal{M}_4^\times$ belong to the same equivalence class of $\mathcal{R}_{2}$, then $\pi_\mathcal{MR} (\psi)= \pi_\mathcal{MR} (\hat \psi)$. 
			Let us additionally show that if $\pi_\mathcal{MR} (\psi)=\pi_\mathcal{MR} (\hat \psi)$, then 
			$\psi=(z_1,z_2)$ and $\hat\psi:=(\hat z_1,\hat z_2)$ belong to the same equivalence class. Indeed, if $\pi_\mathcal{MR} (z_1,z_2)=\pi_\mathcal{MR} (\hat z_1,\hat  z_2)$, then $\sqrt{x^2+y^2+z^2}=(|z_1|^2+|z_2|^2)/4=(|\hat z_1|^2+|\hat z_2|^2)/4$ and, since $z(\psi)=z(\hat \psi)$, we obtain $|z_i|=|\hat z_i|$ for $i=1,2$. Therefore, $z_j=e^{{\rm i}\varphi_j}\hat z_j$ for certain $\varphi_j\in \mathbb{R}$  with $j=1,2$. In view of this and $\langle z_1,z_2\rangle=\langle \hat z_1,\hat z_2\rangle,$ we obtain $\varphi_1-\varphi_2=2\pi k$, for $k\in \mathbb{Z}$ and hence $(z_1,z_2)=e^{{\rm i}\varphi_1}(\hat z_1,\hat z_2)$. Thus, if $\pi^{-1}_\mathcal{MR} (x,y,z)$ is not empty, it gives rise to {an} equivalence class of $\mathcal{R}_2$.
			
			Let us prove that $\pi_\mathcal{MR}$ is a surjection.  For every $(x,y,z)\in \mathbb{R}_0^3$, we can prove that 
			$$
			\pi_\mathcal{MR} \left( \left[2(\sqrt{x^2+y^2+z^2}+z) \right] ^{1/2}, \left[ 2(\sqrt{x^2+y^2+z^2}-z) \right]^{1/2}e^{{\rm i}\Theta}\right)=(x,y,z),
			$$
			where,  the angle $\Theta\in [0,2\pi )$ satisfies for $x^2+y^2\neq 0$ the relation
			$$
			\frac{x}{\sqrt{x^2+y^2}}=\cos \Theta,\qquad \frac{y}{\sqrt{x^2+y^2}}=\sin \Theta
			$$
			and it is arbitrary for $x^2+y^2=0$.
			The above  {expressions show} that $\pi_\mathcal{MR}$  {is surjective}. Therefore, $\pi_\mathcal{MR}^{-1}(x,y,z)$ is the equivalence class of an element of $\mathcal{R}_2$ for every $(x,y,z)\in \mathcal{R}_2$ and $\mathcal{R}_2\simeq \mathbb{R}^3_0$.  
			

		\end{example}

		\begin{proposition}\label{ProjU(1)} The $t$-dependent Schr\"odinger equation (\ref{SchUni}), when restricted to $\mathcal{M}_{2n}^\times$, can be projected onto $\mathcal{R}_n$ originating a Lie--system $X_{\mathcal{R}_n}$ possessing a VG--Lie algebra $V_{\mathcal{R}_n}\simeq \mathfrak{su}(n)$ of Hamiltonian vector fields with respect to the projection of $\Lambda^\times$ on $\mathcal{M}_{2n}^\times$ {onto}  $\mathcal{R}_n$.
		\end{proposition}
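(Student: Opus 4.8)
The plan is to exploit that the Vessiot--Guldberg (VG) Lie algebra $V_{\mathcal{M}_{2n}^\times}\simeq\mathfrak{u}(n)$ of the restriction of (\ref{SchUni}) to $\mathcal{M}_{2n}^\times$ contains, as its one-dimensional centre, precisely the vector field $\Gamma^\times$ whose flow generates the $U(1)$-action defining $\mathcal{R}_n$; quotienting this centre out is what produces the $\mathfrak{su}(n)$ acting on $\mathcal{R}_n$.

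First I would set up the projection. Since $U(1)$ acts freely on $\mathcal{M}_{2n}^\times$ by multiplication, $\pi_{\mathcal{MR}}:\mathcal{M}_{2n}^\times\to\mathcal{R}_n$ is a surjective submersion whose fibres are the connected, compact orbits of $\Gamma^\times$; hence the basic functions of $\pi_{\mathcal{MR}}$ coincide with the $\Gamma^\times$-invariant ones. As $\Gamma=-X_I$ and $[\![ I,\cdot ]\!]=0$, the identity $[X_A,X_B]=-X_{[\![ A,B ]\!]}$ gives $[\Gamma^\times,Y]=0$ for every $Y\in V_{\mathcal{M}_{2n}^\times}$, so every element of $V_{\mathcal{M}_{2n}^\times}$ is $\pi_{\mathcal{MR}}$-projectable. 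By Theorem~\ref{KahH} the $t$-dependent vector field $X_{\mathcal{M}_{2n}}^\times$ associated with (\ref{SchUni}) then projects onto a $t$-dependent vector field $X_{\mathcal{R}_n}$ on $\mathcal{R}_n$ taking values in $V_{\mathcal{R}_n}:=\pi_{\mathcal{MR}*}(V_{\mathcal{M}_{2n}^\times})$.

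Next I would identify $V_{\mathcal{R}_n}$ and invoke Lie--Scheffers. The restriction of $\pi_{\mathcal{MR}*}$ to projectable vector fields is a Lie algebra morphism, and $Y\in V_{\mathcal{M}_{2n}^\times}$ lies in its kernel iff $Y$ is everywhere tangent to the fibres, i.e. $Y_\psi\in\langle\Gamma^\times_\psi\rangle$ for all $\psi$; writing $Y$ as the Hamiltonian vector field of $-{\rm i}H\in\mathfrak{u}(n)$, this says $H\psi$ is proportional to $\psi$ for every $\psi\in\mathcal{M}_{2n}^\times$, which forces $H$ to be scalar and hence $Y\in\langle\Gamma^\times\rangle$. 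Therefore $V_{\mathcal{R}_n}\simeq\mathfrak{u}(n)/\mathbb{R}(-{\rm i}I)\simeq\mathfrak{su}(n)$, recalling $\mathfrak{u}(n)=\mathfrak{su}(n)\oplus\mathbb{R}(-{\rm i}I)$. Since $X_{\mathcal{R}_n,t}=\sum_\alpha b_\alpha(t)\,\pi_{\mathcal{MR}*}X_\alpha^\times\in V_{\mathcal{R}_n}$ for every $t$, the Lie--Scheffers Theorem yields that $X_{\mathcal{R}_n}$ is a Lie system admitting the VG--Lie algebra $V_{\mathcal{R}_n}\simeq\mathfrak{su}(n)$.

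It remains to transport the Hamiltonian structure. As $\Gamma^\times$ is a fundamental vector field of $\varphi_{\mathcal{M}_{2n}}$ it preserves $g$, $\omega$ and hence $\Lambda^\times$; combined with the $\Gamma^\times$-invariance of basic functions this gives $\mathcal{L}_{\Gamma^\times}\{g_1,g_2\}_{\Lambda^\times}=0$ for all basic $g_1,g_2$, so $\{\cdot,\cdot\}_{\Lambda^\times}$ descends to a Poisson bracket on $\mathcal{R}_n$ whose bivector $\Lambda_{\mathcal{R}_n}$ is the stated projection of $\Lambda^\times$, making $\pi_{\mathcal{MR}}$ a Poisson map. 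The Hamiltonian functions $h_\alpha(\psi)=\tfrac12\langle\psi,H_\alpha\psi\rangle$ of a basis $X_\alpha^\times$ of a complement of $\langle\Gamma^\times\rangle$ in $V_{\mathcal{M}_{2n}^\times}$, with $H_\alpha\in\mathfrak{su}(n)$ as in Lemma~\ref{CoorHU}, are $U(1)$-invariant and hence basic: $h_\alpha=\pi_{\mathcal{MR}}^*\bar h_\alpha$. Since a Poisson map intertwines the Hamiltonian vector field of a function with that of its pullback, each projection $\bar X_\alpha=\pi_{\mathcal{MR}*}X_\alpha^\times$ is Hamiltonian on $(\mathcal{R}_n,\Lambda_{\mathcal{R}_n})$ with Hamiltonian $\bar h_\alpha$; thus $V_{\mathcal{R}_n}$ is a VG--Lie algebra of Hamiltonian vector fields relative to $\Lambda_{\mathcal{R}_n}$ and $X_{\mathcal{R}_n}$ is a Lie--Hamilton system. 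I expect the main obstacle to be the kernel computation for $\pi_{\mathcal{MR}*}$ — pinning down that the projected algebra is genuinely $\mathfrak{su}(n)$ rather than something larger or degenerate — together with the subtle point that $\Lambda^\times$, although not invariant under $\Delta^\times$, is invariant under $\Gamma^\times$ and therefore does descend to $\mathcal{R}_n$; both reduce to elementary facts once the constructions of Section~\ref{ODG} are available.
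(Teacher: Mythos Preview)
Your proof is correct and follows the same overall architecture as the paper: establish projectability of $V_{\mathcal{M}_{2n}^\times}$ along $\pi_{\mathcal{MR}}$, compute the kernel of $\pi_{\mathcal{MR}*}$ to identify $V_{\mathcal{R}_n}\simeq\mathfrak{su}(n)$, and then push the Poisson structure down. The main technical difference lies in the kernel computation. The paper argues structurally: the kernel is an ideal of $V_{\mathcal{M}_{2n}^\times}\simeq\mathfrak{u}(n)=\mathbb{R}\oplus\mathfrak{su}(n)$, it contains $\Gamma^\times$ (whose flow is the $U(1)$-action), and since $V_{\mathcal{R}_n}\neq 0$ for $n>1$ the only surviving possibility is $\ker\pi_{\mathcal{MR}*}=\langle\Gamma^\times\rangle$. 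You instead argue directly: a vector field in the kernel is fibrewise tangent, so the associated Hermitian operator has every nonzero vector as an eigenvector and is therefore scalar. Your route is slightly more elementary and self-contained (it does not need the separate observation that the image is nonzero), while the paper's ideal argument is more structural. A further minor difference is that you incorporate the Hamiltonian part (invariance of $\Lambda^\times$ under $\Gamma^\times$, descent of the $h_\alpha$, Poisson-map intertwining) into the same proof, whereas the paper defers that to the subsequent proposition; since the statement as given includes the Hamiltonian claim, your packaging is actually the more faithful of the two.
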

		\begin{proof} The $\mathbb{C}$-linear Lie group action $\varphi_{\mathcal{M}^\times_{2n}}:U(n)\times \mathcal{M}_{2n}^\times\rightarrow \mathcal{M}_{2n}^\times$ induces, due to its $\mathbb{C}$-linearity, {another action on   
		 $\mathcal{R}_n$ such that the map $\pi_{\mathcal{MR}}$ is equivariant, as follows:}
			$$
			\begin{array}{rcccc}
			\varphi_{\mathcal{R}_n}:&U(n)\times \mathcal{R}_n&\rightarrow &\mathcal{R}_n,\\
			& (g,[\psi]_\mathcal{R})&\mapsto& [\varphi_{\mathcal{M}_{2n}^\times}(g,\psi)]_\mathcal{R}.\\
			\end{array}
			$$
			As a consequence, the fundamental vector fields of $V_{\mathcal{M}^\times_{2n}}$ project onto $\mathcal{R}_n$ giving rise to a new finite-dimensional Lie algebra of vector fields $V_{\mathcal{R}_n}$ and the {projection}  map $\pi_\mathcal{MR}:\mathcal{M}_{2n}^\times\rightarrow \mathcal{R}_n$ induces a Lie algebra morphism $\pi_{\mathcal{MR}*}|_{V_{\mathcal{M}^\times_{2n}}}:V_{\mathcal{M}^\times_{2n}}\rightarrow V_{\mathcal{R}_n}$. Then, the restriction to $\mathcal{M}_{2n}^\times$ of the Schr\"odinger equation on (\ref{SchUni}) also projects onto $\mathcal{R}_n$ giving rise to a system $X_{\mathcal{R}_n}$.
			
			Let us prove that  $X_{\mathcal{R}_n}$ admits a VG--Lie algebra isomorphic to $\mathfrak{su}(n)$.  As $V_{\mathcal{M}^\times_{2n}}\simeq \mathfrak{u}(n)\simeq \mathbb{R}\oplus \mathfrak{su}(n)$, the kernel of $\pi_{\mathcal{MR}*}|_{V_{\mathcal{M}^\times_{2n}}}$, which is an ideal of $V_{\mathcal{M}^\times_{2n}}$, may be zero, isomorphic to $\mathbb{R}$, to $\mathfrak{su}(n)$ or to $\mathfrak{u}(n)$. The one-parameter group of diffeomorphisms induced by the vector field $\Gamma^\times$ is given by $F_t:\psi\in\mathcal{M}^\times_{2n}\mapsto e^{{\rm i}t}\psi\in \mathcal{M}_{2n}^\times$. Hence, $\pi_{\mathcal{MR}*}\Gamma^\times=0$ and $\Gamma^\times$ belongs to the center of $V_{\mathcal{M}^\times_{2n}}$, i.e.
 {$\Gamma^\times\in \mathfrak{z}(V_{\mathcal{M}^\times_{2n}})\simeq \mathbb{R}$}. If $n=1$, then this shows that ${\rm Im}\,\pi_{\mathcal{MR}*}|_{V^\times_{\mathcal{M}_{2}}}=\{0\}\simeq \mathfrak{su}(1)$ and the result follows. Meanwhile, $V_{\mathcal{R}_n}\neq 0$ for $n>1$ and in view of the decomposition $V_{\mathcal{M}^\times_{2n}}\simeq \mathbb{R}\oplus\mathfrak{su}(n)$, we get that  $\ker\pi_{\mathcal{MR}*}\simeq \langle \Gamma^\times\rangle$ and ${\rm Im} \,\pi_{\mathcal{MR}*}|_{V^\times_{\mathcal{M}_{2n}}}\simeq \mathfrak{su}(n)$. Thus, the projection of (\ref{SchUni}) onto $\mathcal{R}_n$ admits a VG--Lie algebra $V_{\mathcal{R}_n}\simeq \mathfrak{su}(n)$.
		\end{proof}

		\begin{example}\label{ExHU} A simple computation shows that
			there exist vector fields $Y_\alpha$ on $\mathcal{R}_2$ such that
			$\pi_{\mathcal{MR}*} (X_{\alpha}) = Y_{\alpha}$ for $\alpha=1,2,3$. Indeed,
			\begin{equation}\label{Vec}
			Y_1=-z\frac{\partial}{\partial y}+y\frac{\partial}{\partial z},\qquad Y_2=z\frac{\partial}{\partial x}-x\frac{\partial}{\partial z},\qquad Y_3=-y\frac{\partial}{\partial x}+x\frac{\partial}{\partial y}.
			\end{equation}
			The Lie brackets between these  vector fields read
			$$
			[Y_1, Y_2] = -Y_3,\qquad  [Y_2, Y_3] = -Y_1,\qquad [Y_3, Y_1] = -Y_2,
			$$
			that is 	$[Y_j, Y_k] = -\sum_{l=1}^3\epsilon_{jkl} Y_l$ for $j,k,l=1,2,3$. The projection of $X=\sum_{\alpha=1}^3B_\alpha(t)X_\alpha$ to $\mathcal{R}_2$, i.e. the $t$-dependent vector field $X_{\mathcal{R}_2}$ satisfying $(X_{\mathcal{R}_2})_t=\pi_{\mathcal{MR}*}X_t$, becomes
			\begin{equation}\label{YField}
			X_{\mathcal{R}_2}= \sum_{\alpha=1}^3B_\alpha(t)Y_\alpha.
			\end{equation}
			
			This is exactly the same relation given in \eqref{LiePSigma}, which shows that $\mathfrak{Y}:=\langle Y_1,Y_2,Y_3\rangle\simeq \mathfrak{su}^*(2)$.  Therefore, $X_{\mathcal{R}_2}$ is a Lie system. Observe that $Y_1,Y_2,Y_2$ span a two-dimensional distribution $\mathcal{D}^\mathfrak{Y}$.

			
		\end{example}
		
		The following proposition shows that $\mathcal{R}_{n}$ can be endowed with a Poisson structure  turning $V_{\mathcal{R}_n}$ into a Lie algebra of Hamiltonian vector fields.
		
		\begin{proposition} The system $X_{\mathcal{R}_{n}}$ is a Lie--Hamilton system with respect to the Poisson bivector $\pi_{\mathcal{MR}*}\Lambda^\times$.
		\end{proposition}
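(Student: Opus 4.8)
The plan is to descend $\Lambda^\times$ along $\pi_{\mathcal{MR}}$ to an honest Poisson bivector on $\mathcal{R}_n$, to check that the generators $Y_\alpha$ of $V_{\mathcal{R}_n}$ produced in Proposition \ref{ProjU(1)} are Hamiltonian for it with Hamiltonian functions that again span a finite-dimensional Lie algebra, and then to invoke Theorem \ref{FunHam}.

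First I would use that $\pi_{\mathcal{MR}}:\mathcal{M}_{2n}^\times\to\mathcal{R}_n$ is a principal $U(1)$-bundle whose fibres are the orbits of $\varphi_{U(1)}$, i.e. the integral curves of $\Gamma^\times$, which are connected (diffeomorphic to $S^1$); this is exactly the foliation appearing in the proof of Proposition \ref{ProjU(1)}, where in particular $\pi_{\mathcal{MR}*}\Gamma^\times=0$, so $\Gamma^\times$ is $\pi_{\mathcal{MR}}$-vertical. Since $\Lambda^\times$ is invariant under $\Gamma^\times$ (as recalled in Section 3; indeed $\Gamma=-X_I$ is Hamiltonian relative to $\Lambda$), and since $\mathcal{L}_{\Gamma^\times}\pi_{\mathcal{MR}}^*\beta=0$ for every one-form $\beta$ on $\mathcal{R}_n$ (because $\Gamma^\times$ is $\pi_{\mathcal{MR}}$-related to $0$), the function $\Lambda^\times(\pi_{\mathcal{MR}}^*\beta_1,\pi_{\mathcal{MR}}^*\beta_2)$ is $\Gamma^\times$-invariant and hence, the fibres being connected, constant along them, i.e. basic. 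This lets me define $\Lambda_{\mathcal{R}_n}:=\pi_{\mathcal{MR}*}\Lambda^\times$ by $\pi_{\mathcal{MR}}^*\big(\Lambda_{\mathcal{R}_n}(\beta_1,\beta_2)\big):=\Lambda^\times(\pi_{\mathcal{MR}}^*\beta_1,\pi_{\mathcal{MR}}^*\beta_2)$, equivalently $\pi_{\mathcal{MR}}^*\{\bar f_1,\bar f_2\}_{\Lambda_{\mathcal{R}_n}}=\{\pi_{\mathcal{MR}}^*\bar f_1,\pi_{\mathcal{MR}}^*\bar f_2\}_{\Lambda^\times}$ for all $\bar f_1,\bar f_2\in C^\infty(\mathcal{R}_n)$. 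As $\Lambda^\times$ is the restriction of the Poisson bivector $\Lambda$ on $\mathcal{M}_{2n}$ to the open submanifold $\mathcal{M}_{2n}^\times$, $\{\cdot,\cdot\}_{\Lambda^\times}$ is a Poisson bracket; because $\pi_{\mathcal{MR}}^*$ is injective and intertwines the two brackets, skew-symmetry, the Leibniz rule and the Jacobi identity pass to $\{\cdot,\cdot\}_{\Lambda_{\mathcal{R}_n}}$, so $\Lambda_{\mathcal{R}_n}$ is a Poisson bivector.

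Next I would lift the dynamics. By Lemma \ref{CoorHU} and Proposition \ref{ProjU(1)}, $V_{\mathcal{R}_n}$ is spanned by $Y_\alpha:=\pi_{\mathcal{MR}*}X_\alpha^\times$, where the $X_\alpha^\times$ are the restrictions to $\mathcal{M}_{2n}^\times$ of the fundamental vector fields of the $SU(n)$-action attached to a basis $\{-{\rm i}H_\alpha\}$ of $\mathfrak{su}(n)$. Each $X_\alpha^\times$ is Hamiltonian relative to $\Lambda^\times$ with Hamiltonian $h_\alpha(\psi):=\frac12\langle\psi,H_\alpha\psi\rangle$, and $h_\alpha$ is $\varphi_{U(1)}$-invariant (the phase cancels in $\langle e^{{\rm i}\phi}\psi,H_\alpha e^{{\rm i}\phi}\psi\rangle$), so $h_\alpha|_{\mathcal{M}_{2n}^\times}=\pi_{\mathcal{MR}}^*\bar h_\alpha$ for well-defined $\bar h_\alpha\in C^\infty(\mathcal{R}_n)$. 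Pairing with an arbitrary one-form $\beta$ on $\mathcal{R}_n$ and pulling back, $\pi_{\mathcal{MR}}^*\langle\beta,Y_\alpha\rangle=\langle\pi_{\mathcal{MR}}^*\beta,X_\alpha^\times\rangle=-\Lambda^\times(\pi_{\mathcal{MR}}^*{\rm d}\bar h_\alpha,\pi_{\mathcal{MR}}^*\beta)=-\pi_{\mathcal{MR}}^*\big(\Lambda_{\mathcal{R}_n}({\rm d}\bar h_\alpha,\beta)\big)$, so injectivity of $\pi_{\mathcal{MR}}^*$ gives $Y_\alpha=-\widehat{\Lambda_{\mathcal{R}_n}}({\rm d}\bar h_\alpha)$, i.e. $Y_\alpha$ is Hamiltonian with Hamiltonian $\bar h_\alpha$. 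Finally, $\{\pi_{\mathcal{MR}}^*\bar h_\alpha,\pi_{\mathcal{MR}}^*\bar h_\beta\}_{\Lambda^\times}=\{h_\alpha,h_\beta\}_\Lambda|_{\mathcal{M}_{2n}^\times}=h_{[\![ H_\alpha,H_\beta ]\!]}|_{\mathcal{M}_{2n}^\times}$ stays in the finite-dimensional span of $\{h_\gamma|_{\mathcal{M}_{2n}^\times}\}$ (as $-{\rm i}[\![ H_\alpha,H_\beta ]\!]\in\mathfrak{su}(n)$), so by injectivity of $\pi_{\mathcal{MR}}^*$ the $\bar h_\alpha$ close on a finite-dimensional Lie algebra for $\{\cdot,\cdot\}_{\Lambda_{\mathcal{R}_n}}$. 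Hence $(\mathcal{R}_n,\Lambda_{\mathcal{R}_n},\bar h)$ with $\bar h_t:=\sum_\alpha b_\alpha(t)\bar h_\alpha$ is a Lie--Hamiltonian structure for $X_{\mathcal{R}_n}=\sum_\alpha b_\alpha(t)Y_\alpha$, and Theorem \ref{FunHam} yields that $X_{\mathcal{R}_n}$ is a Lie--Hamilton system. (For $n=1$, $\mathcal{R}_1$ is one-dimensional, $\Lambda_{\mathcal{R}_1}=0$, $X_{\mathcal{R}_1}=0$, and the statement is trivial.)

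The step I expect to be the genuine obstacle is the descent of $\Lambda^\times$: a $U(1)$-invariant bivector on the total space of a principal $U(1)$-bundle need not be projectable, so the crux is showing that $\Lambda^\times(\pi_{\mathcal{MR}}^*\beta_1,\pi_{\mathcal{MR}}^*\beta_2)$ is basic, which relies on $\Gamma^\times$ being simultaneously a Hamiltonian vector field of $\Lambda^\times$ and $\pi_{\mathcal{MR}}$-vertical. Once this is secured, transporting the Jacobi identity through the injective map $\pi_{\mathcal{MR}}^*$ and identifying the Hamiltonian functions $\bar h_\alpha$ are routine bookkeeping with pullbacks.
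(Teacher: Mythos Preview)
Your proof is correct and follows essentially the same route as the paper: project $\Lambda^\times$ using $\mathcal{L}_{\Gamma^\times}\Lambda^\times=0$ and connectedness of the $U(1)$-fibres, check that the projection is Poisson, descend the $U(1)$-invariant Hamiltonians $h_\alpha$, and conclude that the $Y_\alpha$ are Hamiltonian. The only cosmetic difference is that the paper verifies the Jacobi identity via functoriality of the Schouten--Nijenhuis bracket ($\pi_{\mathcal{MR}*}[\Lambda^\times,\Lambda^\times]_{SN}=[\pi_{\mathcal{MR}*}\Lambda^\times,\pi_{\mathcal{MR}*}\Lambda^\times]_{SN}$), whereas you transport it through the injective algebra map $\pi_{\mathcal{MR}}^*$ on Poisson brackets of functions; these are equivalent, and your more explicit treatment of the descent step (flagging that invariance alone is not enough for a bivector and using basicness of $\Lambda^\times(\pi_{\mathcal{MR}}^*\beta_1,\pi_{\mathcal{MR}}^*\beta_2)$) is in fact cleaner than the paper's terse assertion.
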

		\begin{proof} Since $\mathcal{L}_{\Gamma^\times} \Lambda^\times=0$, the Poisson bivector $\Lambda^\times$ on $\mathcal{M}_{2n}^\times$ can be projected onto $\mathcal{R}_n$.   Additionally, $
			\pi_{\mathcal{MR}*}[\Lambda,^\times\Lambda^\times]_{SN}=[\pi_{\mathcal{MR}*}\Lambda^\times, \pi_{\mathcal{MR}*}\Lambda^\times]_{SN},$
			where $[\cdot,\cdot]_{SN}$ is the Schouten-Nijenhuis bracket \cite{IV}. So, $\pi_{\mathcal{MR}*}\Lambda^\times$ is a Poisson bivector on $\mathcal{R}_n$. The vector fields $X_\alpha$ spanning the VG--Lie algebra $V_{\mathcal{M}^\times_{2n}}$ for $X_{\mathcal{M}_{2n}^\times}$ are Hamiltonian relative to the restrictions to $\mathcal{M}_{2n}^\times$ of the functions $h_\alpha$ in \eqref{eq:hj}. Such Hamiltonian functions are invariant relative to the action of $U(1)$ on $\mathcal{M}^\times_{2n}$ and hence projectable onto $\mathcal{R}_n$. The projections $\pi_{\mathcal{MR}*}X_\alpha$ are also Hamiltonian vector fields with Hamiltonian functions $x_\alpha=\pi^*_\mathcal{MR} (h^\times_\alpha)$. Therefore, the VG--Lie algebra $V_{\mathcal{R}_n}$ on $\mathcal{R}_n$ consists of Hamiltonian vector fields relative to $\pi_{\mathcal{MR} *}\Lambda^\times$.
		\end{proof}
		
		\begin{example}
			
			The Poisson bivector $\Lambda^\times$ on $\mathcal{M}_{4}^\times$ projects onto $\mathcal{R}_2$ giving rise to the Poisson bivector 
			\begin{equation}
			\label{eq:2LambdaHat}
			\widehat{\Lambda} =  z \frac{\partial}{\partial x} \wedge \frac{\partial}{\partial y}
			+ x \frac{\partial}{\partial y} \wedge \frac{\partial}{\partial z}
			+ y \frac{\partial}{\partial z} \wedge \frac{\partial}{\partial x}
			\end{equation}
			in the coordinate system given in Example \ref{Coor}.
		\end{example}

		\begin{proposition} The system $X_{\mathcal{R}_n}$ consists of Killing vector fields with respect to the metric induced by the projection of the tensor field $G^\times$ to $\mathcal{R}_n$.
		\end{proposition}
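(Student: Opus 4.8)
The plan is to obtain the statement by transporting along $\pi_{\mathcal{MR}}$ the fact that $V_{\mathcal{M}^\times_{2n}}$ is a Lie algebra of Killing vector fields for the contravariant metric $G^\times$. By Theorem \ref{KahH} a basis $X_1,\dots,X_r$ of $V_{\mathcal{M}_{2n}}$ consists of K\"ahler vector fields, so $\mathcal{L}_{X_\alpha}g=0$; as $G$ is obtained from $g$ through the musical isomorphism $\mathcal{G}$, this gives $\mathcal{L}_{X_\alpha}G=0$ on $\mathcal{M}_{2n}$ and hence $\mathcal{L}_{X^\times_\alpha}G^\times=0$ on $\mathcal{M}^\times_{2n}$. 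By Proposition \ref{ProjU(1)} each $X^\times_\alpha$ is $\pi_{\mathcal{MR}}$-projectable onto the generator $Y_\alpha$ of $V_{\mathcal{R}_n}$. Granting that $G^\times$ is itself $\pi_{\mathcal{MR}}$-projectable onto a tensor $G_{\mathcal{R}_n}$ on $\mathcal{R}_n$, the standard fact that $\mathcal{L}_X T$ projects onto $\mathcal{L}_Y T_M$ whenever $X$ projects onto $Y$ and $T$ onto $T_M$ yields $\mathcal{L}_{Y_\alpha}G_{\mathcal{R}_n}=0$. Since the Killing fields of $G_{\mathcal{R}_n}$ form a Lie algebra, all of $V_{\mathcal{R}_n}=\langle Y_1,\dots,Y_r\rangle$ consists of Killing vector fields, and as $X_{\mathcal{R}_n}=\sum_\alpha b_\alpha(t)Y_\alpha$ takes values in $V_{\mathcal{R}_n}$, every $(X_{\mathcal{R}_n})_t$ is a Killing vector field.

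The only genuine point, and the one I expect to require the most care, is the $\pi_{\mathcal{MR}}$-projectability of $G^\times$ (in contrast with the passage to $\mathcal{P}_n$, no correction of $G^\times$ by $\Delta^\times$ or $\Gamma^\times$ terms is needed here). Since $\pi_{\mathcal{MR}}:\mathcal{M}^\times_{2n}\to\mathcal{R}_n$ is the quotient map of the free $U(1)$-action generated by $\Gamma^\times$, with flow $\psi\mapsto e^{{\rm i}t}\psi$, a symmetric $2$-contravariant tensor descends to $\mathcal{R}_n$ as soon as it is invariant under $\Gamma^\times$ and takes every pair of $\pi_{\mathcal{MR}}$-basic one-forms to a $\pi_{\mathcal{MR}}$-basic function. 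Invariance is immediate: $\Gamma^\times$ is, up to sign, the fundamental vector field of the $U(n)$-action associated with the identity operator, hence Killing, so $\mathcal{L}_{\Gamma^\times}g^\times=0$ and thus $\mathcal{L}_{\Gamma^\times}G^\times=0$. For the pairing condition I would invoke Lemma \ref{CoorHU}: around a generic point the $\pi_{\mathcal{MR}}$-basic one-forms are spanned, over the basic functions, by the differentials ${\rm d}f_\alpha$ of the functions $f_\alpha(\psi)=\frac12\langle\psi,H_\alpha\psi\rangle$, $H_\alpha\in\mathfrak{su}(n)$, appearing in that lemma; and by the identity $G({\rm d}f_A,{\rm d}f_B)=\{f_A,f_B\}_{+}=f_{[A,B]_{+}}$ of Section 3 one has $G^\times({\rm d}f_\alpha,{\rm d}f_\beta)=f_{[H_\alpha,H_\beta]_{+}}$. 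Since $[H_\alpha,H_\beta]_{+}=H_\alpha H_\beta+H_\beta H_\alpha$ is Hermitian and every quadratic function $f_C$ with $C$ Hermitian satisfies $f_C(e^{{\rm i}t}\psi)=f_C(\psi)$, the function $f_{[H_\alpha,H_\beta]_{+}}$ is $U(1)$-invariant, i.e. $\pi_{\mathcal{MR}}$-basic. Hence $G^\times$ projects onto a well-defined symmetric $2$-contravariant tensor $G_{\mathcal{R}_n}$ on $\mathcal{R}_n$, and it is positive-definite because $G^\times$ is and $\pi_{\mathcal{MR}}^*$ is injective on covectors.

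Equivalently, one may phrase the conclusion through the covariant Riemannian metric $g_{\mathcal{R}_n}$ dual to $G_{\mathcal{R}_n}$, since $\mathcal{L}_{Y_\alpha}G_{\mathcal{R}_n}=0$ if and only if $\mathcal{L}_{Y_\alpha}g_{\mathcal{R}_n}=0$; in either reading every $(X_{\mathcal{R}_n})_t$ is a Killing vector field. (For $n=1$ the statement is vacuous, since then $V_{\mathcal{R}_1}=\{0\}$.)
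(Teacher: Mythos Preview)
Your proof is correct and follows essentially the same route as the paper's: show $\mathcal{L}_{\Gamma^\times}G^\times=0$ so that $G^\times$ projects onto $\mathcal{R}_n$, argue non-degeneracy of the projection, and then use that the vector fields of $V_{\mathcal{M}^\times_{2n}}$ are both Killing for $G^\times$ and $\pi_{\mathcal{MR}}$-projectable, so their projections are Killing for the projected metric. The paper's proof is terser, simply asserting projectability from $\mathcal{L}_{\Gamma^\times}G^\times=0$; your additional verification that $G^\times$ sends pairs of basic one-forms to basic functions via the identity $G^\times({\rm d}f_\alpha,{\rm d}f_\beta)=f_{[H_\alpha,H_\beta]_+}$ is a nice explicit check but is in fact already implied by the invariance $\mathcal{L}_{\Gamma^\times}G^\times=0$ together with the $\Gamma^\times$-invariance of basic forms, so it is redundant rather than an independent hypothesis.
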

		\begin{proof} The Lie derivative of $G^\times$ with respect to $\Gamma^\times$ is zero. Hence, $G^\times$ projects onto $\mathcal{R}_n$.  Since $G^\times$ is Riemannian, it is non-degenerate. So is its projection onto $\mathcal{R}_n$ giving rise to a Riemannian metric on $\mathcal{R}_n$. The vector fields of $V_{\mathcal{M}^\times_{2n}}$ are Killing relative to $G^\times$ and projectable under $\pi_{\mathcal{MR}*}$. Therefore, their projections, namely the elements of $V_{\mathcal{R}_n}$, are also Killing vector fields relative to the projection of $G^\times$ onto $\mathcal{R}_n$ and span a VG--Lie algebra $V_{\mathcal{R}_n}$ of Killing vector fields.
		\end{proof}
		\begin{example}
			The tensor field $G^\times$ on $\mathcal{M}_4^\times$ projects onto $\mathcal{R}_2$ giving rise to the tensor field
			\begin{equation}
			\widehat{G} = (x^2+y^2 + z^2)^{1/2} \left[\frac{\partial}{\partial x} \otimes \frac{\partial}{\partial x}+
			\frac{\partial}{\partial y} \otimes \frac{\partial}{\partial y}
			+ \frac{\partial}{\partial z} \otimes \frac{\partial}{\partial z}
			\right]. \\
			\end{equation}
			It naturally allows us to define a Riemannian metric $\widehat g$ on $\mathcal{R}_2$ given by
			\begin{equation}
			\label{eq:2gHat}
			\widehat{g}= \frac{{\rm d}x\otimes {\rm d}x+{\rm d}y\otimes {\rm d}y+{\rm d}z\otimes {\rm d}z}{(x^2+y^2+z^2)^{1/2}}.
			\end{equation}
			It is immediate to check that the Lie derivatives of $\hat g$ and $\hat G$ with respect to the vector fields (\ref{Vec}) are zero as stated in the previous proposition.
		\end{example}
		
			\section{Lie systems and projective Schr\"odinger equations}
		Let $\widehat H(t)$ be a $t$-dependent Hermitian operator on the Hilbert space $L^2(\mathbb{R}^3)$ of equivalence classes of square integrable measurable functions on $\mathbb{R}^3$ with respect to Lebesgue measure. Solutions to the $t$-dependent Sch\"odinger equation determined by $\widehat H(t)$ differing in a proportional non-zero $t$-dependent complex factor describe the same physical state. Hence, it is natural to consider whether $t$-dependent Schr\"odinger equations admit such a symmetry. Nevertheless, (\ref{SchUni}) is not invariant under the change of phase $ \psi\mapsto f(t)\psi(t)$ for a non-zero complex valued function $f$. There is 
		however another differential equation, the so-called {\it projective Schr\"odinger equation}, that is invariant under such a change of phase while admitting the particular solutions to the original $t$-dependent Schr\"odinger equation \cite{BH01}. It is given by
		\begin{equation}\label{ProSchEqu}
		{\rm i} \left[\psi_{\bf y}\frac{\d\psi_{\bf x}}{\d t}-\psi_{\bf x}\frac{\d \psi_{\bf y}}{\d t}\right]=\psi_{\bf y}\widehat H_x\psi_{\bf x}-\psi_{\bf x}\widehat H_y\psi_{\bf y}, \quad \psi:\mathbb{R}^3\rightarrow \mathbb{C},
		\end{equation}
with $\psi_{\bf x}:=\psi({\bf x})$, $\psi_{\bf y}:=\psi({\bf y})$ and ${\bf x},{\bf y}\in \mathbb{R}^3$. These differential equations can be further generalised while admitting the previous symmetry and covering particular solutions to $t$-dependent Schr\"odinger equations on (probably infinite-dimensional Hilbert spaces) by writing 
		\begin{equation}\label{ProSchEquH}
		{\rm i} \left[{\rm Id}\otimes \frac{{\rm d}}{{\rm d}t}-\frac{{\rm d}}{{\rm d}t}\otimes {\rm Id}\right]\psi\otimes \psi= \left[{\rm Id}\otimes \widehat H(t)-\widehat H(t)\otimes {\rm Id}\right]\psi\otimes \psi,\qquad \forall \psi\in \mathcal{H},
		\end{equation}
		where $\psi \otimes\psi \in \mathcal{H}\otimes\mathcal{H}$ and $A\otimes B$ are the tensorial products of the Hermitian operators $A,B:\mathcal{H}\rightarrow\mathcal{H}$. It said that $\psi\in \mathcal{H}$ is a solution to (\ref{ProSchEquH}) is $\psi\otimes \psi$ satisfies it. 
		
		Nevertheless, (\ref{ProSchEqu}) and (\ref{ProSchEquH}) have not a clear geometric interpretation. Equation (\ref{ProSchEqu}) depends on the values of the function $\psi$ in two different points and $(\ref{ProSchEquH})$ is defined on tensorial products of the form $\psi \otimes \psi$. Instead, we will make use of a projection of the $t$-dependent Schr\"odinger equation on $\mathcal{H}_0$ onto $\P_n$ to recover its solutions up to a global $t$-dependent change of phase and, therefore, recovering the same solutions of (\ref{ProSchEquH}).

		\begin{lemma} The $t$-dependent vector field $X$ on $\mathcal{M}^\times_{2n}$ related to a $t$-dependent Schr\"odinger equation 
			\begin{equation}\label{equ:tSch}
			\frac{\d \psi}{\d t}= -{\rm i}H(t)\psi,\qquad -{\rm i}H(t)\in \mathfrak{u}(n), \quad \psi\in\mathcal{H}_0,
			\end{equation}
is projectable under the fibration $\pi_\mathcal{MP}: \mathcal{M}_{2n}^\times \rightarrow \mathcal{P}_n$.
		\end{lemma}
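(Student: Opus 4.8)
The plan is to reduce the statement, via the definition of projectability of a $t$-dependent vector field recalled in Section \ref{LSLS}, to proving that each vector field $X_t$ on $\mathcal{M}_{2n}^\times$ is $\pi_{\mathcal{MP}}$-projectable. Recall from Section 3 that the fibres of $\pi_{\mathcal{MP}}:\mathcal{M}_{2n}^\times\to\mathcal{P}_n$ are the leaves of the regular integrable distribution $\mathcal{V}:=\langle\Delta^\times,\Gamma^\times\rangle$ spanned by the (restrictions of the) commuting dilation and phase-change vector fields of \eqref{eq:delta}, and that these leaves coincide with the orbits of the free $\mathbb{C}_0$-action $\lambda\cdot\psi=\lambda\psi$ on $\mathcal{M}_{2n}^\times$. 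Since $\mathcal{V}$ is the vertical distribution of the submersion $\pi_{\mathcal{MP}}$, a vector field $Y$ on $\mathcal{M}_{2n}^\times$ is $\pi_{\mathcal{MP}}$-projectable if and only if $[Y,Z]$ takes values in $\mathcal{V}$ for every $Z\in\mathcal{V}$; and because $\mathcal{V}$ is globally spanned by $\Delta^\times$ and $\Gamma^\times$, it is enough to verify that $[Y,\Delta^\times]$ and $[Y,\Gamma^\times]$ lie in $\mathcal{V}$.

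Next I would apply this criterion with $Y=X_t$. For each fixed $t$ the vector field $X_t$ is the linear vector field $\psi\mapsto -{\rm i}H(t)\psi$ on $\mathcal{M}_{2n}^\times$ attached to the operator $-{\rm i}H(t)\in\mathfrak{u}(n)$, while $\Delta^\times$ and $\Gamma^\times$ are the linear vector fields attached to the operators ${\rm Id}$ and ${\rm i}\,{\rm Id}$ on $\mathcal{H}$ (recall $\Gamma=J(\Delta)$ and that $J$ encodes multiplication by ${\rm i}$). The Lie bracket of the linear vector fields attached to two operators $A$ and $B$ is again linear, attached to the commutator $[A,B]$ (up to the sign convention of Section 3); since both ${\rm Id}$ and ${\rm i}\,{\rm Id}$ are central in the associative algebra of operators on $\mathcal{H}$, we obtain $[X_t,\Delta^\times]=0$ and $[X_t,\Gamma^\times]=0$ for every $t\in\mathbb{R}$. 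These brackets trivially take values in $\mathcal{V}$, so each $X_t$, and therefore $X$, is projectable under $\pi_{\mathcal{MP}}$. Equivalently, one may first write $X_t=\sum_{\alpha}b_\alpha(t)X_\alpha$ over a basis $\{-{\rm i}H_\alpha\}$ of $\mathfrak{u}(n)$ of fundamental vector fields of $\varphi_{\mathcal{M}_{2n}^\times}$ and argue separately for each $X_\alpha$.

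An alternative, more structural way of organising the argument, which I would at least mention, is to observe that the flow of $X_t$ is, after a reparametrisation, the $\mathbb{C}$-linear action $\varphi_{\mathcal{M}_{2n}^\times}$ of $U(n)$; this action commutes with the $\mathbb{C}_0$-action defining the fibres of $\pi_{\mathcal{MP}}$ simply because $\mathbb{C}$-linear maps commute with multiplication by scalars. Hence $\varphi_{\mathcal{M}_{2n}^\times}$ descends to the quotient $\mathcal{P}_n$ --- this is precisely the action $\varphi_{\mathcal{P}_n}$ of \eqref{eq:17} --- so each fundamental vector field $X_\alpha$, and with it any $t$-dependent linear combination $X_t$, is $\pi_{\mathcal{MP}}$-related to a vector field on $\mathcal{P}_n$.

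There is no genuinely hard step here; the one point deserving a little care --- and what I would flag as the main, albeit mild, obstacle --- is the identification of $\langle\Delta^\times,\Gamma^\times\rangle$ with the vertical distribution of $\pi_{\mathcal{MP}}$ together with the consequent use of the projectability criterion. This rests on the facts, established in Section 3, that $\Delta^\times$ and $\Gamma^\times$ commute and generate a regular integrable distribution whose leaves are exactly the $\mathbb{C}_0$-orbits, so that $\mathcal{P}_n$ inherits the structure of a smooth quotient manifold making $\pi_{\mathcal{MP}}$ a surjective submersion. Once that is granted, the bracket computation above closes the proof at once.
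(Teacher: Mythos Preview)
Your proof is correct. Your primary argument differs in presentation from the paper's: the paper proceeds by noting that $\pi_{\mathcal{MP}}$ is equivariant with respect to the $U(n)$-actions $\varphi_{\mathcal{M}_{2n}^\times}$ and $\varphi_{\mathcal{P}_n}$, so fundamental vector fields of the former are $\pi_{\mathcal{MP}}$-related to fundamental vector fields of the latter, and since each $X_t$ lies in the Lie algebra of fundamental vector fields of $\varphi_{\mathcal{M}_{2n}^\times}$, projectability follows. Your main route instead checks the infinitesimal projectability criterion directly, computing $[X_t,\Delta^\times]=[X_t,\Gamma^\times]=0$ from the centrality of ${\rm Id}$ and ${\rm i}\,{\rm Id}$; this is perfectly valid and arguably more self-contained, as it does not invoke the quotient action $\varphi_{\mathcal{P}_n}$ at all. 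Your ``alternative, more structural'' paragraph is exactly the paper's argument, so you have in fact covered both approaches.
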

		\begin{proof} Consider the natural actions of the  unitary group $U(n)$ on $\mathcal{M}^\times_{2n}$ and on the projective space $\mathcal{P}_n$ given by
		$$
		\varphi:(U,\psi)\in U(n)\times\mathcal{M}^\times_{2n}\mapsto U\psi \in\mathcal{M}^\times_{2n},\quad \varphi_{\mathcal{P}_n}:(U,[\psi]_\mathcal{P})\in U(n)\times\mathcal{P}_n\mapsto [U\psi]_\mathcal{P}\in\mathcal{P}_n,
		$$
respectively. Then, the map   $\pi_{\mathcal{MP}}:\mathcal{M}_{2n}^\times\rightarrow \mathcal{P}_{n}$                                                  
is equivariant. 
By the definition of (\ref{equ:tSch}), the $t$-dependent vector field $X$ belongs, at each $t\in\mathbb{R}$, to the Lie algebra, $V_{\varphi_{\mathcal{P}_n}}$ of fundamental vector fields of $\varphi$. As $\pi_\mathcal{MP}$ is equivariant, the fundamental vector fields 
of the actions $\varphi$ and $\varphi_{\mathcal{P}_n}$  are  $\pi_\mathcal{MP}$-related and each vector field of $V_{\mathcal{M}^\times_{2n}}$ projects onto a fundamental vector field of $\varphi_{\mathcal{P}_n}$ and vice versa. This ensures $\pi_{\mathcal{MP}*}X$ to exist and to admit a VG--Lie algebra $V_{\mathcal{P}_n}:=V_{\varphi_{\mathcal{P}_n}}$.
			
		\end{proof}
		
		\begin{definition} Given a  $t$-dependent Schr\"odinger equation $X$ of the form (\ref{equ:tSch}),  we call {\it projective Schr\"odinger equation} on $\mathcal{P}_{n}$ the system of differential equations
			\begin{equation}\label{SCPH}
			\frac{{\rm d} \xi}{{\rm d}t}={X}_{\mathcal{P}_n}(t,\xi),\qquad \xi\in \mathcal{P}_{n},\qquad \forall t\in\mathbb{R},
			\end{equation}
			where $X_{\mathcal{P}_n}$ is the projection onto $\mathcal{P}_{n}$ of $X$ relative to  
			the projection $\pi_{\mathcal{MP}}:\mathcal{M}_{2n}^\times\rightarrow \mathcal{P}_{n}$.
		\end{definition}
		
		\begin{proposition} A non-vanishing curve $\psi:\mathbb{R}\rightarrow \mathcal{M}_{2n}^\times$ is a particular solution to the restriction of the projective Schr\"odinger equation  (\ref{ProSchEquH}) to $\mathcal{H}_0$ if and only if $\pi_{\mathcal{MP}}\circ \psi$  is a particular solution to (\ref{SCPH}).
		\end{proposition}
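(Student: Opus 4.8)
The strategy is to recognise that, for a non-vanishing curve, equation~(\ref{ProSchEquH}) is nothing but the statement that $\dot\psi(t)$ and the value at $\psi(t)$ of the $t$-dependent vector field $X$ associated with~(\ref{equ:tSch}) differ by a vector tangent to the fibre of $\pi_{\mathcal{MP}}$, and that this, in turn, is equivalent to $\pi_{\mathcal{MP}}\circ\psi$ being an integral curve of $X_{\mathcal{P}_n}$. The proof thus splits into a pointwise reformulation of~(\ref{ProSchEquH}) and a diagram chase through $\pi_{\mathcal{MP}}$.

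\emph{Step 1 (pointwise form of~(\ref{ProSchEquH})).} Applying $ {\rm Id}\otimes\frac{\d}{\d t}-\frac{\d}{\d t}\otimes {\rm Id}$ to $\psi\otimes\psi$ yields $\psi\otimes\dot\psi-\dot\psi\otimes\psi$, so~(\ref{ProSchEquH}) is equivalent to $\psi\otimes\chi=\chi\otimes\psi$, where $\chi(t):={\rm i}\dot\psi(t)-H(t)\psi(t)\in\mathcal{H}$ (here $\widehat H(t)=H(t)$ acting on $\mathcal{H}=\mathbb{C}^n$). Completing $\psi(t)$ to a $\mathbb{C}$-basis of $\mathcal{H}$ and comparing the coefficients of $\psi(t)\otimes e_j$ on both sides, one sees that — precisely because $\psi(t)\neq 0$ — the identity $\psi\otimes\chi=\chi\otimes\psi$ holds if and only if $\chi(t)=\lambda(t)\psi(t)$ for some (a priori complex) $\lambda(t)\in\mathbb{C}$. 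Hence $\psi$ solves the restriction of~(\ref{ProSchEquH}) to $\mathcal{H}_0$ if and only if $\dot\psi(t)=-{\rm i}H(t)\psi(t)+\mu(t)\psi(t)$ for some $\mathbb{C}$-valued function $\mu:=-{\rm i}\lambda$, that is, under the identification $T_\psi\mathcal{M}_{2n}\simeq\mathcal{M}_{2n}$, if and only if $\dot\psi(t)-X_t(\psi(t))=(\mu(t)\psi(t))_{\psi(t)}$, where $X_t:\psi\mapsto-{\rm i}H(t)\psi$ is the $t$-dependent vector field of~(\ref{equ:tSch}). Since $\psi$ is non-vanishing and $\psi,\dot\psi,X$ are smooth, such a $\mu$ is automatically smooth, which is needed for the ``if'' direction.

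\emph{Step 2 ($\ker\pi_{\mathcal{MP}*}$ and the chase).} From the construction of $\pi_{\mathcal{MP}}$ recalled above, $\mathcal{P}_n$ is the space of leaves of the regular integrable distribution spanned by $\Delta^\times$ and $\Gamma^\times$, whose leaves are exactly the equivalence classes of~(\ref{eq:11}); hence $\ker\pi_{\mathcal{MP}*,\psi}=\langle\Delta^\times_\psi,\Gamma^\times_\psi\rangle=\{(\mu\psi)_\psi:\mu\in\mathbb{C}\}$ for every $\psi\in\mathcal{M}_{2n}^\times$. Thus the condition of Step~1 reads $\dot\psi(t)-X_t(\psi(t))\in\ker\pi_{\mathcal{MP}*,\psi(t)}$. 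Applying $\pi_{\mathcal{MP}*}$ and writing $\xi:=\pi_{\mathcal{MP}}\circ\psi$, we get $\pi_{\mathcal{MP}*}\dot\psi(t)=\dot\xi(t)$, while the preceding Lemma ensures that $X_t$ is $\pi_{\mathcal{MP}}$-related to $X_{\mathcal{P}_n,t}$, so $\pi_{\mathcal{MP}*}X_t(\psi(t))=X_{\mathcal{P}_n}(t,\xi(t))$. As $\pi_{\mathcal{MP}*,\psi(t)}$ is surjective onto $T_{\xi(t)}\mathcal{P}_n$ with kernel exactly $\ker\pi_{\mathcal{MP}*,\psi(t)}$, the membership $\dot\psi(t)-X_t(\psi(t))\in\ker\pi_{\mathcal{MP}*,\psi(t)}$ is equivalent to $\dot\xi(t)=X_{\mathcal{P}_n}(t,\xi(t))$, i.e., to $\xi$ being a particular solution of~(\ref{SCPH}). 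Concatenating with the equivalence of Step~1 proves the proposition.

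I expect Step~1 to be the only non-routine part: one must carefully unravel the tensorial equation~(\ref{ProSchEquH}) and establish the elementary separation fact that $\psi\otimes\chi=\chi\otimes\psi$ with $\psi\neq0$ forces $\chi$ to be proportional to $\psi$ \emph{with a possibly complex factor}; it is this latitude in $\mu$ that precisely matches the two-dimensional kernel $\langle\Delta^\times,\Gamma^\times\rangle$. Once this is in place, Step~2 is a straightforward chase using the $\pi_{\mathcal{MP}}$-relatedness already provided by the previous Lemma and the explicit description of the fibres of $\pi_{\mathcal{MP}}$.
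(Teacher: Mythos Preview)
Your proof is correct and shares with the paper the key observation that~(\ref{ProSchEquH}) amounts to $\widehat E(t)\psi(t)$ being a complex multiple of $\psi(t)$ (your Step~1 is exactly the paper's opening move, phrased slightly differently). Where you diverge is in Step~2: you identify the condition $\dot\psi-X_t(\psi)\in\langle\Delta^\times,\Gamma^\times\rangle=\ker\pi_{\mathcal{MP}*}$ and pass directly through the tangent map, using only the $\pi_{\mathcal{MP}}$-relatedness of $X_t$ and $X_{\mathcal{P}_n,t}$ established in the preceding Lemma. The paper instead constructs an auxiliary genuine Schr\"odinger solution $\psi_S(t)=f(t)\psi(t)$ (by choosing $f$ to kill the extra $g(t)\psi(t)$ term), notes that $\psi$ and $\psi_S$ share the same projective image, and then uses that integral curves of $X_{\mathcal{M}_{2n}^\times}$ project to integral curves of $X_{\mathcal{P}_n}$; the converse is handled by lifting a solution of~(\ref{SCPH}) to a Schr\"odinger solution and comparing. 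Your route is more economical and stays at the infinitesimal level throughout, avoiding the need to integrate for $f$ or to invoke existence of lifts; the paper's route is slightly more constructive in that it exhibits an explicit Schr\"odinger solution in the projective class of $\psi$, which may be conceptually useful elsewhere but is not needed for the bare equivalence.
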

		\begin{proof} The projective Schr\"odinger equation (\ref{ProSchEquH}) can be brought into the form
		$
		[\widehat E(t) \otimes {\rm Id}]( \psi \otimes \psi )=[{\rm Id}\otimes \widehat E(t) ](\psi \otimes \psi) ,
		$
		where $\widehat E(t):=\partial_t-{\rm i}\widehat H(t)$. Let $\psi(t)$ be a particular solution to (\ref{ProSchEquH}). Then, $\widehat E(t)\psi(t)=g(t)\psi(t) $ for a certain $t$-dependent function $g(t)$ and there always exists a $t$-dependent function $f(t)$ such that $\widehat E(t)[f(t)\psi(t)]=0$ and $\psi_S(t):=f(t)\psi(t)$ becomes a solution to the standard $t$-dependent Schr\"odinger equation. Since the Schr\"odinger equation is related to a $t$-dependent Hermitian Hamiltonian operator, it follows that $\psi_S(t)$ has a constant module. By assumption $\psi(t)$ does not vanishes and therefore $\psi_S(t)$ does not vanishes neither.  Hence, both curves can be projected onto $\mathcal{P}_n$ and $\pi_\mathcal{MP}(\psi(t))=\pi_\mathcal{MP}(\psi_S(t))$. Since $X_{\mathcal{M}_{2n}^\times}$ projects onto ${X}_{\mathcal{P}_n}$, then $\pi_\mathcal{MP}(\psi_S(t))$ is a particular solution to (\ref{SCPH}) and $\psi(t)$ projects onto a particular solution to (\ref{SCPH}). 
			
			Conversely, if $\psi(t)$ projects onto a solution $\psi_\mathcal{P}(t)$ to (\ref{SCPH}), there exists a solution $\psi_S(t)$ to the Schr\"odinger equation projecting onto $\psi_\mathcal{P}(t)$. Since $\psi(t)$ and $\psi_S(t)$ project onto $\psi_\mathcal{P}(t)$, they differ at each $t$ on a phase and $\psi(t)=\lambda(t)\psi_S(t)$ for a certain complex function $\lambda(t)$. Note that the fact that $\psi(t)$ projects onto $\psi_\mathcal{P}(t)$ implies that it does not vanish. Hence, $\psi(t)$ is a non-vanishing particular solution to (\ref{ProSchEqu}).
		\end{proof}

		\begin{theorem} The system (\ref{SCPH}) is a Lie system related to a VG--Lie algebra $V_{\mathcal{P}_n}\simeq \mathfrak{su}(n)$ consisting of K\"ahler vector fields with respect to the natural K\"ahler structure on $\mathcal{P}_n$.
		\end{theorem}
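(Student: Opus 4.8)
The plan is to chain together the results already established earlier in this section. First I would invoke the preceding Lemma, which guarantees that the projection $X_{\mathcal{P}_n}=\pi_{\mathcal{MP}*}X$ of the $t$-dependent vector field $X$ attached to (\ref{equ:tSch}) is well defined and that, at each $t\in\mathbb{R}$, it takes values in the Lie algebra $V_{\mathcal{P}_n}:=V_{\varphi_{\mathcal{P}_n}}$ of fundamental vector fields of the unitary action $\varphi_{\mathcal{P}_n}:U(n)\times\mathcal{P}_n\to\mathcal{P}_n$ of (\ref{eq:17}). Since $V_{\mathcal{P}_n}$ is finite-dimensional, the Lie--Scheffers Theorem then yields at once that (\ref{SCPH}) is a Lie system admitting $V_{\mathcal{P}_n}$ as a VG--Lie algebra.

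The second step is to identify $V_{\mathcal{P}_n}$ up to isomorphism, repeating almost verbatim the argument in the proof of Proposition \ref{ProjU(1)}. By equivariance of $\pi_{\mathcal{MP}}$ the map $\pi_{\mathcal{MP}*}|_{V_{\mathcal{M}^\times_{2n}}}:V_{\mathcal{M}^\times_{2n}}\to V_{\mathcal{P}_n}$ is a surjective Lie algebra morphism, so $V_{\mathcal{P}_n}$ is the quotient of $V_{\mathcal{M}^\times_{2n}}\simeq\mathfrak{u}(n)\simeq\mathbb{R}\oplus\mathfrak{su}(n)$ by an ideal. The flow of $\Gamma^\times$ is $\psi\mapsto e^{{\rm i}t}\psi$, whose orbits lie in the fibres of $\pi_{\mathcal{MP}}$, hence $\pi_{\mathcal{MP}*}\Gamma^\times=0$ and the central line $\langle\Gamma^\times\rangle\simeq\mathbb{R}$ is contained in the kernel. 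For $n=1$ the manifold $\mathcal{P}_1$ is a point, so $V_{\mathcal{P}_1}=0\simeq\mathfrak{su}(1)$ and the claim is trivial. For $n>1$ the action $\varphi_{\mathcal{P}_n}$ on $\mathcal{P}_n\simeq\mathbb{CP}^{n-1}$ is transitive, so $V_{\mathcal{P}_n}\neq 0$ and the kernel is a proper ideal; as the only ideals of $\mathbb{R}\oplus\mathfrak{su}(n)$ containing $\langle\Gamma^\times\rangle$ are $\langle\Gamma^\times\rangle$ itself and $\mathfrak{u}(n)$, the kernel must be $\langle\Gamma^\times\rangle$, giving $V_{\mathcal{P}_n}\simeq\mathfrak{u}(n)/\mathbb{R}\simeq\mathfrak{su}(n)$. (Alternatively, a fundamental vector field $\psi\mapsto Z\psi$ with $Z\in\mathfrak{u}(n)$ is tangent to the fibres of $\pi_{\mathcal{MP}}$ exactly when every $\psi$ is an eigenvector of $Z$, i.e.\ $Z\in{\rm i}\mathbb{R}\,{\rm Id}$, which again pins the kernel down to $\langle\Gamma^\times\rangle$.)

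Finally I would show that every element of $V_{\mathcal{P}_n}$ is a K\"ahler vector field for the canonical K\"ahler structure $(g_{\mathcal{P}},\omega_{\mathcal{P}},J_{\mathcal{P}})$ on $\mathcal{P}_n$ induced by the Study--Fubini metric. Each $U\in U(n)$ preserves the Hermitian product on $\mathcal{H}$, hence $g^\times$, $\omega^\times$, $J^\times$, the function $f_I^\times$, and the vector fields $\Delta^\times$ and $\Gamma^\times$; therefore the tensor fields $G_{\mathcal{P}}$ and $\Lambda_{\mathcal{P}}$ of (\ref{eq:13}) are $\varphi_{\mathcal{M}^\times_{2n}}$-invariant, and so are their $\pi_{\mathcal{MP}}$-projections, which carry the canonical K\"ahler structure of $\mathcal{P}_n$ (equivalently, one simply uses that the Study--Fubini metric on $\mathbb{CP}^{n-1}$ is $U(n)$-invariant). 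Consequently, for any $Z\in\mathfrak{u}(n)$ the corresponding fundamental vector field $Y_Z$ of $\varphi_{\mathcal{P}_n}$ satisfies $\mathcal{L}_{Y_Z}g_{\mathcal{P}}=\mathcal{L}_{Y_Z}\omega_{\mathcal{P}}=\mathcal{L}_{Y_Z}J_{\mathcal{P}}=0$, i.e.\ $Y_Z$ is a K\"ahler vector field; since such fields span $V_{\mathcal{P}_n}$, the VG--Lie algebra $V_{\mathcal{P}_n}\simeq\mathfrak{su}(n)$ consists of K\"ahler vector fields, as claimed. The only genuinely delicate point is the second step — pinning down $\ker\pi_{\mathcal{MP}*}$ — which rests on the centrality of $\Gamma^\times$, the simplicity of $\mathfrak{su}(n)$ for $n>1$, and the transitivity of the $U(n)$-action on $\mathbb{CP}^{n-1}$; everything else is routine bookkeeping with the constructions of Section 3.
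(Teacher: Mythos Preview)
Your proposal is correct and follows essentially the same approach as the paper: both arguments use the equivariance of $\pi_{\mathcal{MP}}$ to identify the kernel of $\pi_{\mathcal{MP}*}|_{V_{\mathcal{M}^\times_{2n}}}$ as $\langle\Gamma^\times\rangle$ via the ideal structure of $\mathfrak{u}(n)\simeq\mathbb{R}\oplus\mathfrak{su}(n)$, and both establish the K\"ahler property through the $U(n)$-invariance of the tensors $G_{\mathcal{P}}$ and $\Lambda_{\mathcal{P}}$ of (\ref{eq:13}). The only differences are cosmetic: you reverse the order of the two steps, add an explicit treatment of the trivial case $n=1$, and phrase the invariance at the group level rather than infinitesimally via Lie derivatives as the paper does.
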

		\begin{proof}
			In view of Theorem \ref{KahH}, the vector fields of $V_{\mathcal{M}^\times_{2n}}$ for (\ref{equ:tSch}) leave invariant $G^\times$ and $\Lambda^\times$, i.e. $\mathcal{L}_X\Lambda^\times=\mathcal{L}_XG^\times=0$ for every $X\in V_{\mathcal{M}_{2n}}$. Since $[\Gamma^\times,X] = [\Delta^\times, X] =0$, $\Gamma^\times f^\times_I(\psi)=\Delta^\times f_I^\times(\psi)=0$ and in view of the expressions (\ref{eq:13}), it follows that 
			\begin{equation}\label{com}
			\mathcal{L}_X\Lambda^\times_\mathcal{P}=\mathcal{L}_XG^\times_\mathcal{P}=0,
			\end{equation} 
			where $G_\mathcal{P}$ and $\Lambda_\mathcal{P}$ are given by (\ref{eq:13}). The tensor fields $\Lambda_\mathcal{P}$ and $G_\P$ on $\mathcal{P}_n$ generate the canonical K\"ahler structure on $\mathcal{P}_n$ (see \cite{CM08} for details). Since the vector fields of $V_{\mathcal{M}^\times_{2n}}$ project onto $\mathcal{P}_n$ and in view of (\ref{com}), it turns out that the projection {onto} $\mathcal{P}_n$ of the vector fields of $V_{\mathcal{M}^\times_{2n}}$ span a VG--Lie algebra $V_{\mathcal{P}_n}$ for (\ref{SCPH}) of K\"ahler vector fields relative to the natural K\"ahler structure on $\mathcal{P}_n$.  
			
			Let us prove that $V_{\mathcal{P}_n}\simeq\mathfrak{su}(n)$. The {natural projection} map $\pi_\mathcal{MP}:\mathcal{M}_{2n}^\times \rightarrow \mathcal{P}_n$ induces a Lie algebra morphism $\pi_{\mathcal{MP}*}|_{V_{\mathcal{M}^\times_{2n}}}:V_{\mathcal{M}^\times_{2n}}\rightarrow V_{\mathcal{P}_n}$. As $V_{\mathcal{M}^\times_{2n}}\simeq \mathfrak{u}(n)\simeq \mathbb{R}\oplus \mathfrak{su}(n)$, the kernel of $\pi_{\mathcal{MP}*}|_{V_{\mathcal{M}^\times_{2n}}}$, which is an ideal of $V_{\mathcal{M}^\times_{2n}}$, may be zero, {either} isomorphic to $\mathbb{R}$, to $\mathfrak{su}(n)$ or to $\mathfrak{u}(n)$. The one-parameter group of diffeomorphism induced by the vector field $\Gamma$ on $\mathcal{M}^\times_{2n}$ is given by $F_t:\psi\in\mathcal{M}^\times_{2n}\mapsto e^{{\rm i}t}\psi\in\mathcal{M}^\times_{2n}$. Hence, $\pi_{\mathcal{MP}*}\Gamma^\times=0$ and $\Gamma^\times\in \mathfrak{z}(V_{\mathcal{M}^\times_{2n}})\simeq \mathbb{R}$. Since $V_{\mathcal{P}_n}\neq \{0\}$ and in view of the decomposition of $V_{\mathcal{M}^\times_{2n}}$, we get that  $\ker\pi_{\mathcal{MP}*}\simeq \langle \Gamma^\times\rangle$ and ${\rm Im} \,\pi_{\mathcal{MP}*}|_{V_{\mathcal{M}^\times_{2n}}}\simeq \mathfrak{su}(n)$. Thus, $V_{\mathcal{P}_n}\simeq \mathfrak{su}(n)$.
			
		\end{proof} 
		
		The following proposition will be helpful to study projective Schr\"odinger equations as a restriction of a system on $\mathcal{R}_n$.\\
		
		\begin{minipage}{6cm}
			\xymatrix{
				&\underset{}{\mathcal{M}_{2n}^\times}\ar[dd]_{\pi_{\mathcal{MP}}}
				\ar[ld]_{\pi_{\mathcal{MR}}}&\\
				\underset{}{\mathcal{R}_n\ar[dr]}\ar[dr]_{\pi_{\mathcal{RP}}}&&\underset{}{\mathcal{S}_n}\ar[ul]_{\iota_\mathcal{S}}\ar[dl]\ar@/_{4mm}/@{->}[ll]_{\quad\quad\pi_{\mathcal{S}\mathcal{R}}}\ar[dl]^{\pi_{\mathcal{SP}}}\\
				&\underset{}{{\mathcal{P}_n}}\ar@/^{8mm}/[ul]^{\iota_{\mathcal{P}}}&}
		\end{minipage}
		\begin{minipage}{11.7cm}
			\begin{proposition}\label{RnPn}Let $\pi_{\mathcal{SR}}:\mathcal{S}_n\rightarrow \mathcal{R}_n$ be of the form $\pi_{\mathcal{SR}} :=\pi_{\mathcal{MR}}\circ \iota_{\mathcal{S}}$, with $\iota_{\mathcal{S}}: \mathcal{S}_n\rightarrow \mathcal{M}_{2n}^\times$ being the natural embedding of $\mathcal{S}_n$ in $\mathcal{M}_{2n}^\times$, and let $\pi_{\mathcal{R}\mathcal{P}}:[\psi]_\mathcal{R}\in \mathcal{R}_n\mapsto  [\psi]_\mathcal{P}\in \mathcal{P}_n$. 
				There exists a differentiable embedding $\iota_{\mathcal{P}}: \mathcal{P}_n\rightarrow \mathcal{R}_n$ such that $\pi_{\mathcal{RP}} \circ \iota_{\mathcal{P}} = {\rm Id}_{\mathcal{P}_n}$ and $\iota_{\mathcal{P}*} X_{\mathcal{P}_n} = \pi_{\mathcal{SR}*} X_{\mathcal{S}_n}$.
			\end{proposition}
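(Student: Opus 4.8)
We sketch how the argument would go. The natural candidate for $\iota_{\mathcal P}$ is the ``unit--norm section'' of $\pi_{\mathcal{RP}}$, built from the smooth retraction $r:\psi\in\mathcal{M}_{2n}^\times\mapsto \psi/\sqrt{\langle\psi,\psi\rangle}\in\mathcal{S}_n$. Concretely, the plan is to define $\iota_{\mathcal P}$ as the unique map satisfying $\iota_{\mathcal P}\circ\pi_{\mathcal{MP}}=\pi_{\mathcal{SR}}\circ r$; on representatives this reads $\iota_{\mathcal P}([\psi]_{\mathcal P})=[\psi/\sqrt{\langle\psi,\psi\rangle}]_{\mathcal R}$. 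One checks it is well defined: if $\phi=\lambda\psi$ with $\lambda\in\mathbb{C}\backslash\{0\}$, then $\phi/\sqrt{\langle\phi,\phi\rangle}=(\lambda/|\lambda|)\,\psi/\sqrt{\langle\psi,\psi\rangle}$ with $\lambda/|\lambda|\in U(1)$, so the two unit representatives lie in the same $U(1)$--orbit. It is smooth because $\pi_{\mathcal{SR}}\circ r$ is smooth and constant on the fibres of the surjective submersion $\pi_{\mathcal{MP}}$, hence factors through $\mathcal{P}_n$. Evaluating on representatives gives $\pi_{\mathcal{RP}}\circ\iota_{\mathcal P}=\mathrm{Id}_{\mathcal P_n}$, since $[\psi/\sqrt{\langle\psi,\psi\rangle}]_{\mathcal P}=[\psi]_{\mathcal P}$.

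Next I would deduce that $\iota_{\mathcal P}$ is an embedding directly from $\pi_{\mathcal{RP}}\circ\iota_{\mathcal P}=\mathrm{Id}_{\mathcal P_n}$: differentiating yields $\pi_{\mathcal{RP}*}\circ\iota_{\mathcal P*}=\mathrm{Id}$, so $\iota_{\mathcal P*}$ is injective at each point and $\iota_{\mathcal P}$ is an injective immersion; since $\mathcal{P}_n\simeq\mathbb{CP}^{n-1}$ is compact and $\mathcal{R}_n$ is Hausdorff, $\iota_{\mathcal P}$ is a homeomorphism onto its image, hence an embedding. (For later use one may also record $\pi_{\mathcal{SR}}=\iota_{\mathcal P}\circ\pi_{\mathcal{SP}}$, where $\pi_{\mathcal{SP}}:\mathcal{S}_n\to\mathcal{P}_n$ is the natural quotient; since $\pi_{\mathcal{SP}}$ is surjective this shows $\iota_{\mathcal P}(\mathcal{P}_n)=\pi_{\mathcal{MR}}(\mathcal{S}_n)$ is the ``unit--norm'' submanifold of $\mathcal{R}_n$.)

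The dynamical identity would then be obtained by transporting $X$ through the square $\iota_{\mathcal P}\circ\pi_{\mathcal{MP}}=\pi_{\mathcal{SR}}\circ r$. The one nontrivial input is that $r$ carries the $t$-dependent vector field $X$ of $(\ref{equ:tSch})$ to $X_{\mathcal{S}_n}$. By Proposition~\ref{ResSch} the restriction $X|_{\mathcal{S}_n}$ is tangent to $\mathcal{S}_n$; equivalently, the radial component of $-{\rm i}H(t)\psi$ is $\mathfrak{Im}\langle\psi,H(t)\psi\rangle=0$, precisely because $H(t)$ is Hermitian. Since in addition each $X_t$ is linear in $\psi$, it is homogeneous of degree one along rays, so $r_{*,\psi}(X_t(\psi))$ depends on $\psi$ only through $r(\psi)$ and equals $X_{\mathcal{S}_n}(r(\psi))$; thus $X$ is $r$-projectable with $r_*X=X_{\mathcal{S}_n}$. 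Applying push-forwards to the above square and evaluating on $X$, and using that $X$ projects onto $X_{\mathcal{P}_n}$ under $\pi_{\mathcal{MP}}$ (shown above), gives $\iota_{\mathcal P*}X_{\mathcal{P}_n}=\iota_{\mathcal P*}\pi_{\mathcal{MP}*}X=\pi_{\mathcal{SR}*}r_*X=\pi_{\mathcal{SR}*}X_{\mathcal{S}_n}$, which is the claim.

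I expect the genuinely delicate point to be the middle step: checking carefully that $X$ is $r$-projectable and descends exactly to $X_{\mathcal{S}_n}$. This is where Hermiticity of $H(t)$ is really used, since it annihilates the radial part of $-{\rm i}H(t)\psi$, and one must also keep track of the degree-one homogeneity so that the push-forward of the $t$-dependent field $X$ along the composite of the retraction and the quotient maps is well defined at each $t$. The remaining steps are the routine ``unit-norm section'' argument reflecting $\mathcal{R}_n\simeq\mathcal{P}_n\times\mathbb{R}_+$ together with $\mathcal{S}_n/U(1)\simeq\mathcal{P}_n$.
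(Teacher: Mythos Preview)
Your proof is correct and follows essentially the same approach as the paper: both construct $\iota_{\mathcal P}$ as the unit-norm section of $\pi_{\mathcal{RP}}$ (the paper phrases this as the inverse of $\pi_{\mathcal{RP}}$ restricted to $\pi_{\mathcal{SR}}(\mathcal{S}_n)$, which coincides with your map). You supply more detail where the paper is terse---your compactness argument for the embedding property and your verification of $r$-projectability via Hermiticity and homogeneity make explicit what the paper simply declares ``immediate.''
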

					\end{minipage}
			\begin{proof} It is immediate that the diagram aside is commutative. For every element of $\mathcal{P}_n$ there exists an element of $\mathcal{S}_n\subset \mathcal{M}_{2n}^\times$ projecting to it under $\pi_{\mathcal{SP}}$. Hence, $\pi_{\mathcal{RP}} \circ \pi_{\mathcal{SR}} (\mathcal{S}_n) = \mathcal{P}_n$ and $\pi_{\mathcal{RP}}|_{\pi_{\mathcal{SR}}(\mathcal{S}_n)}: \pi_{\mathcal{SR}} (\mathcal{S}_n) \rightarrow \mathcal{P}_n$ is surjective. Let us show that it is also injective. Let $[\psi]_\mathcal{R}$, with $\psi\in \mathcal{M}_{2n}^\times$, be the equivalence class of functions on $\mathcal{M}_{2n}^\times$ differing in a complex-phase of module one. If $\pi_{\mathcal{RP}} ([\psi_1]_\mathcal{R}) = \pi_{\mathcal{RP}} ([\psi_2]_\mathcal{R})$ and $\|\psi_1\|=\|\psi_2\|$, then $\psi_1 = e^{{\rm i}\phi} \psi_2$ with $\phi\in \mathbb{R}$. Hence, $[\psi_1]_\mathcal{R} = [\psi_2]_\mathcal{R}$ and $\pi_{\mathcal{RP}}$ is a bijection when restricted to $\pi_{\mathcal{SR}} (\mathcal{S}_n)$ and the inverse map is defined to be $\iota_{\mathcal{P}}$. From this it is immediate that $\iota_{\mathcal{P}*} X_{\mathcal{P}_n} = \pi_{\mathcal{SR}*} X_{\mathcal{S}_n}$.
				
			\end{proof}

			\section{Superposition rules for special unitary Schr\"odinger equations}
			We now prove that, apart from the very well-known standard linear superposition rules, $t$-dependent Schr\"odinger equations  associated with $t$-dependent traceless Hermitian Hamiltonian operators have other nonlinear ones which depend, generally, on fewer solutions. Superposition rules for different types of projections of Schr\"odinger equations are investigated.

			\begin{theorem}\label{MT1}
				Every Schr\"odinger equation on $\mathcal{M}_{2n}$, with $n>1$, related to a VG--Lie algebra $V\subset V_{\mathcal{M}_{2n}}$ isomorphic to $\mathfrak{su}(n)$ admits a  superposition rule depending on $n-1$ particular solutions.
			\end{theorem}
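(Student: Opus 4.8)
The plan is to apply the distributional method for superposition rules recalled in Section~\ref{LSLS}: one finds the least $m\in\mathbb{N}$ such that the diagonal prolongations to $\mathcal{M}_{2n}^m$ of the vector fields in $V$ span a distribution of rank $\dim V$ at a generic point, and then $m$ is precisely the number of particular solutions that the superposition rule requires. First I would pin down $V$ itself: since $\mathfrak{su}(n)$ is simple for $n>1$, it equals its own derived algebra, so the inclusion $V\hookrightarrow V_{\mathcal{M}_{2n}}\simeq\mathfrak{u}(n)$ (Theorem~\ref{KahH}) lands inside $[\mathfrak{u}(n),\mathfrak{u}(n)]\simeq\mathfrak{su}(n)$ and, by equality of dimensions, $V$ is exactly the Lie algebra of fundamental vector fields $X_A:\psi\in\mathcal{M}_{2n}\mapsto-{\rm i}A\psi$ with $-{\rm i}A$ ranging over the traceless skew--Hermitian operators; in particular $\dim V=n^2-1$, and every diagonal prolongation $V^{[m]}$ is a finite--dimensional Lie algebra of vector fields isomorphic to $\mathfrak{su}(n)$.

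Next I would, at a point $(\psi_1,\dots,\psi_m)\in\mathcal{M}_{2n}^m\simeq(\mathbb{C}^n)^m$, identify the value of $\mathcal{D}^{V^{[m]}}$ with the image of the linear map $A\mapsto(-{\rm i}A\psi_1,\dots,-{\rm i}A\psi_m)$ from $\mathfrak{su}(n)$ into $\bigoplus_{j}T_{\psi_j}\mathcal{M}_{2n}$, whose rank is $n^2-1$ minus $\dim\mathfrak{h}$, where $\mathfrak{h}:=\{A\in\mathfrak{su}(n): A\psi_j=0,\ \forall j\}$. Then I would compute $\dim\mathfrak{h}$ at a generic tuple. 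For generic $\psi_1,\dots,\psi_m$ the complex span $W$ has dimension $k=\min(m,n)$; any $A\in\mathfrak{h}$ vanishes on $W$ and, being skew--Hermitian, preserves $W^\perp$, on which it acts as an arbitrary skew--Hermitian operator whose global trace, i.e.\ its trace on $W^\perp$, must vanish. Hence $\mathfrak{h}\simeq\mathfrak{su}(n-k)$, with the conventions $\mathfrak{su}(1)=\mathfrak{su}(0)=0$, so that $\dim\mathfrak{h}=(n-k)^2-1$ for $k<n$ and $\dim\mathfrak{h}=0$ for $k=n$.

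It follows that the generic rank of the prolongation to $\mathcal{M}_{2n}^m$ is $n^2-(n-m)^2$ for $m<n$ and $n^2-1$ for $m\ge n$. For $m=n-2$ this is $n^2-4<n^2-1$, so fewer than $n-1$ solutions do not suffice, whereas for $m=n-1$ it equals $n^2-1=\dim V$; the traceless condition is exactly what gains the extra unit, lowering the count from $n$ (as it would be for the full $\mathfrak{u}(n)$) to $n-1$. Since $\mathcal{D}^{V^{[n-1]}}$ is integrable and its rank is locally constant around a generic point, the procedure recalled in Section~\ref{LSLS} yields $\dim\mathcal{M}_{2n}=2n$ functionally independent common first integrals of $V^{[n]}$ (the prolongation to $n$ copies of $\mathcal{M}_{2n}$) with nonsingular Jacobian with respect to the coordinates of the zeroth copy, and these give a superposition rule depending on $n-1$ particular solutions.

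The one genuinely substantial step is the computation of $\dim\mathfrak{h}$ together with the verification that the minimum of $m$ is attained at $n-1$; the rest is an invocation of the general machinery. I expect the main point requiring care to be the genericity/regularity bookkeeping: showing that $m$-tuples of linearly independent vectors form a dense open subset of $\mathcal{M}_{2n}^m$, that on it the rank of $\mathcal{D}^{V^{[m]}}$ equals the constant value computed above, and that on this open set one can actually extract the $2n$ common first integrals with the required nondegenerate Jacobian, so that the distributional method legitimately applies.
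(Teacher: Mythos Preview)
Your proof is correct and follows essentially the same approach as the paper's: both identify $V$ with the Lie algebra of fundamental vector fields of the standard $SU(n)$--action (the paper asserts this, while you justify it via $V=[V,V]\subset[\mathfrak{u}(n),\mathfrak{u}(n)]\simeq\mathfrak{su}(n)$ and a dimension count), and both show that $m=n-1$ is the minimal number of copies for which the diagonal prolongation has generic rank $n^2-1$ by computing the stabiliser of a generic $(n-1)$--tuple of $\mathbb{C}$--linearly independent vectors. The only stylistic difference is that the paper works at the Lie group level, showing the isotropy subgroup of $\varphi^{n-1}$ at such a tuple is trivial (any $U\in SU(n)$ fixing $\psi_1,\dots,\psi_{n-1}$ acts as a scalar on their orthogonal complement, and the determinant condition forces $U=\mathrm{Id}$), whereas you compute the isotropy subalgebra $\mathfrak{h}\simeq\mathfrak{su}(n-k)$ directly; these are equivalent, and your version has the small advantage of giving the exact generic rank $n^2-(n-m)^2$ for all $m<n$ in one stroke, which the paper relegates to a Note.
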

			\begin{proof} In light of Theorem \ref{KahH}, our Schr\"odinger equation admits a VG--Lie algebra $V$ of K\"ahler vector fields isomorphic to $\mathfrak{su}(n)$.
				To derive a superposition rule, we determine the smallest $m\in \mathbb{N}$ so that the diagonal prolongations to $\mathcal{M}_{2n}^m$ of the vector fields of $V$  span a a distribution of rank $\dim V$ at a generic point. Since $V\subset V_{\mathcal{M}_{2n}}$ and $V\simeq \mathfrak{su}(n)$, the elements of $V$ are fundamental vector fields of the standard linear action of $SU(n)$ on $\mathcal{M}_{2n}$ (thought of as a $\mathbb{C}$-linear space). The diagonal prolongations of $V$ to $\mathcal{M}_{2n}^m$ span the tangent space to the orbits of the Lie group action
				$$
				\begin{array}{rccc}
				\varphi^m:&SU(n)\times \mathcal{M}_{2n}^m&\longrightarrow&\mathcal{M}_{2n}^m\\
				&(U;\psi_1,\ldots,\psi_m)&\longmapsto & (U\psi_1,\ldots, U\psi_m).
				\end{array}
				$$
				The fundamental vector fields of this action span a distribution of rank $\dim V$ at $\xi\in \mathcal{M}_{2n}^{m}$ if and only if its isotropy group $\mathcal{O}_\xi$ at $\xi$ is discrete. Let us set $m:=n-1$. The elements  $U\in \mathcal{O}_\xi$, with $\xi:=(\psi_1,\ldots,\psi_{n-1})\in \mathcal{M}_{2n}^{n-1}$, satisfy
				\begin{equation}\label{Conp}
				U\psi_j=\psi_j,\qquad j=\overline{1,n-1}.
				\end{equation}
				At a generic point of $\mathcal{M}_{2n}^{n-1}$, we can assume that $\psi_1,\ldots,\psi_{n-1}$ are linearly independent elements of $\mathbb{C}^n$ (over $\mathbb{C}$). Then, the knowledge of the action of $U$ on these elements fixes $U$ on $\langle \psi_1,\ldots,\psi_{n-1}\rangle_\mathbb{C}\subset \mathcal{M}_{2n}$, where it acts as the identity map. If $\psi$ is orthogonal to $\langle \psi_1,\ldots,\psi_{n-1}\rangle_\mathbb{C}$ with respect to the natural Hermitian product on $\mathbb{C}^n$, then  $U\psi$ must also be orthogonal to $\langle \psi_1,\ldots,\psi_{n-1}\rangle_\mathbb{C}$ because of (\ref{Conp}) and the unitarity of $U$. Therefore, $U\psi$ is proportional to $\psi$. Since $U\in SU(n)$, then $U\psi=\psi$ and $U={\rm Id}$. Therefore, the isotropy group of $\varphi^m$ is trivial at a generic point of $\mathcal{M}_{2n}^{n-1}$, the fundamental vector fields of $\varphi^m$ are linearly independent over $\mathbb{R}$ and there exists a superposition rule depending on $n-1$ particular solutions.
			\end{proof}
			
			\begin{note}
				It is worth noting that the isotropy group for $\varphi^m$ is not trivial at any point of  $\mathcal{M}_{2n}^m$ for $m<n-1$. Given $m$ linearly independent elements $\psi_1,\ldots,\psi_m\in \mathcal{M}_{2n}^\times$ over $\mathbb{C}$, we can construct several special unitary transformations on $\mathcal{M}_{2n}$ acting as the identity on $\langle \psi_1,\ldots,\psi_m\rangle_\mathbb{C}$ and leaving stable its orthogonal complement. Hence, the isotropy group on any point of $\mathcal{M}^m_{2n}$ is not discrete.
			\end{note}
			
			Since the elements of $U(n)$ act on $\mathcal{M}_{2n}$ preserving the norm relative to the standard Hermitian product on $\mathbb{C}^n$, the Lie group action $\varphi^m$  given in the proof of the previous theorem can be restricted to $\mathcal{S}^m_n$. In view of this, the previous proof can be slightly modified to prove that the restriction of $\varphi^m$ to $\mathcal{S}^m_n$ have a trivial isotropy group at a generic point for $m=n-1$ and $n>1$. As a consequence, we obtain the following corollary.
			\begin{corollary}\label{Cor1}
							Every Schr\"odinger equation $X_{\mathcal{S}_n}$, with $n>1$, related to a VG--Lie algebra $V\subset V_{\mathcal{S}_{n}}$ isomorphic to $\mathfrak{su}(n)$ admits a  superposition rule depending on $n-1$ particular solutions.
			\end{corollary}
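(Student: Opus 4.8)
The plan is to follow the proof of Theorem \ref{MT1} almost verbatim, the only change being the ambient manifold. First I would recall, via Proposition \ref{ResSch}, that $X_{\mathcal{S}_n}$ is the restriction to $\mathcal{S}_n$ of a Schr\"odinger equation on $\mathcal{M}_{2n}$, and observe that a VG--Lie algebra $V\subset V_{\mathcal{S}_n}$ with $V\simeq\mathfrak{su}(n)$ is spanned by the fundamental vector fields of the restriction to $\mathcal{S}_n$ of the standard $\mathbb{C}$-linear $SU(n)$-action on $\mathcal{M}_{2n}$; this restriction makes sense because unitary matrices preserve the Hermitian norm on $\mathbb{C}^n$ and hence leave $\mathcal{S}_n$ invariant. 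Then, by the distributional procedure recalled in Section \ref{LSLS}, it is enough to show that the smallest $m\in\mathbb{N}$ for which the diagonal prolongations of the elements of $V$ to $\mathcal{S}_n^m$ span a distribution of rank $\dim V=n^2-1$ at a generic point is $m=n-1$.

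Next I would identify these diagonal prolongations with the fundamental vector fields of the diagonal action $\varphi^m$ appearing in the proof of Theorem \ref{MT1}, now restricted to the invariant submanifold $\mathcal{S}_n^m\subset\mathcal{M}_{2n}^m$. The rank of the distribution they span at a point $\xi\in\mathcal{S}_n^m$ equals $\dim V$ precisely when the isotropy group of this restricted action at $\xi$ is discrete; the key observation is that this isotropy group coincides with the isotropy group of $\varphi^m$ at $\xi$ regarded as a point of $\mathcal{M}_{2n}^m$, since the stabiliser of a point is intrinsic to the group action and is unchanged by passing to an invariant submanifold. Consequently, the isotropy computations of the proof of Theorem \ref{MT1} and of the Note following it carry over unchanged: for $m=n-1$ and a generic $\xi=(\psi_1,\ldots,\psi_{n-1})$, with the $\psi_j$ linearly independent over $\mathbb{C}$, an element $U$ of the isotropy group fixes $\langle\psi_1,\ldots,\psi_{n-1}\rangle_{\mathbb{C}}$ pointwise, acts by a phase on the one-complex-dimensional orthogonal complement, and, since $\det U=1$, must be the identity; while for $m<n-1$ the isotropy group contains a subgroup isomorphic to $SU(n-m)$ and is therefore not discrete.

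The only points requiring care are the bookkeeping on the sphere. I would note that $\mathbb{C}$-linear independence of $n-1$ unit vectors is an open dense condition on $\mathcal{S}_n^{n-1}$, so a ``generic point'' in the above sense exists, and that the hypothesis $n>1$ is exactly what makes $m=n-1$ a positive integer and guarantees $\dim\mathcal{S}_n^{n-1}=(n-1)(2n-1)\ge n^2-1$, so that orbits of the expected dimension $n^2-1$ can occur; these are the same constraints already met in Theorem \ref{MT1}. Granting this, $m=n-1$ is the minimal value, and the procedure of Section \ref{LSLS} yields a superposition rule for $X_{\mathcal{S}_n}$ depending on $n-1$ particular solutions.

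I do not anticipate a genuine difficulty here: everything reduces to the invariance of point stabilisers under restriction of a group action to an invariant submanifold, together with the isotropy-versus-rank dictionary quoted from Section \ref{LSLS}. If anything, the mildly delicate step is making the genericity statement on $\mathcal{S}_n^{n-1}$ precise and confirming that it is consistent with the dimension constraints dictated by the hypothesis $n>1$.
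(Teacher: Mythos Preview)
Your proposal is correct and follows essentially the same route as the paper: the paper simply observes that the unitary action preserves the Hermitian norm, so $\varphi^m$ restricts to $\mathcal{S}_n^m$, and then notes that the isotropy computation of Theorem \ref{MT1} goes through unchanged to give a trivial stabiliser at a generic point for $m=n-1$. Your additional remarks on genericity, the coincidence of isotropy groups under restriction to an invariant submanifold, and the dimension bookkeeping for $n>1$ are all sound and merely make explicit what the paper leaves implicit.
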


			\begin{theorem}\label{MT1}
				Every Schr\"odinger equation $X_{\mathcal{R}_n}$, with $n>1$, admits a  superposition rule depending on $n$ particular solutions.
			\end{theorem}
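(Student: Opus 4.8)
The plan is to run the same distributional argument used above for $\mathcal{M}_{2n}$, adjusting the count of particular solutions to the geometry of $\mathcal{R}_n$. By Proposition \ref{ProjU(1)}, the system $X_{\mathcal{R}_n}$ carries a VG--Lie algebra $V_{\mathcal{R}_n}\simeq\mathfrak{su}(n)$, which is the Lie algebra of fundamental vector fields of the $SU(n)$-action $\varphi_{\mathcal{R}_n}$ on $\mathcal{R}_n$ inherited from $\varphi_{\mathcal{M}_{2n}^\times}$ (the $U(1)$-centre of $U(n)$, whose infinitesimal generator is $\Gamma^\times$, acts trivially on $\mathcal{R}_n$). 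Following the method recalled in Section \ref{LSLS}, I would look for the least $m\in\mathbb{N}$ such that the diagonal prolongations to $\mathcal{R}_n^m$ of the elements of $V_{\mathcal{R}_n}$ span a distribution of rank $\dim V_{\mathcal{R}_n}=n^2-1$ at a generic point; that $m$ is then the number of solutions of the superposition rule. Since these diagonal prolongations are the fundamental vector fields of the diagonal action
$$
\varphi^m_{\mathcal{R}_n}:SU(n)\times\mathcal{R}_n^m\to\mathcal{R}_n^m,\qquad (U;[\psi_1]_\mathcal{R},\ldots,[\psi_m]_\mathcal{R})\mapsto([U\psi_1]_\mathcal{R},\ldots,[U\psi_m]_\mathcal{R}),
$$
the rank of the distribution they span at $\xi$ equals $n^2-1$ minus the dimension of the isotropy subalgebra of $\varphi^m_{\mathcal{R}_n}$ at $\xi$; so the task reduces to finding the least $m$ for which a generic point of $\mathcal{R}_n^m$ has discrete isotropy group under $\varphi^m_{\mathcal{R}_n}$.

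Next I would show that $m=n$ works. Choose $\xi=([\psi_1]_\mathcal{R},\ldots,[\psi_n]_\mathcal{R})$ with $\psi_1,\ldots,\psi_n\in\mathbb{C}^n$ linearly independent over $\mathbb{C}$ and pairwise non-orthogonal for the natural Hermitian product; the tuples violating either condition form a closed zero-measure, $U(1)^n$-invariant subset of $(\mathcal{M}_{2n}^\times)^n$, so this is a genuinely generic condition on $\xi\in\mathcal{R}_n^n$. If $U\in SU(n)$ fixes $\xi$ under $\varphi^n_{\mathcal{R}_n}$, then $U\psi_j=e^{{\rm i}\theta_j}\psi_j$ for suitable $\theta_j\in\mathbb{R}$, i.e. each $\psi_j$ is an eigenvector of the unitary $U$ with eigenvalue $e^{{\rm i}\theta_j}$. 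Because eigenvectors of a unitary operator belonging to distinct eigenvalues are Hermitian-orthogonal, and no two of the $\psi_j$ are orthogonal, all the $e^{{\rm i}\theta_j}$ coincide; as $\{\psi_j\}_{j=\overline{1,n}}$ is a basis, $U=e^{{\rm i}\theta}\mathrm{Id}$, and $\det U=1$ forces $e^{{\rm i}n\theta}=1$. Hence the isotropy group at $\xi$ is the finite group $\{\zeta\,\mathrm{Id}:\zeta^n=1\}$, which is discrete, so the diagonal prolongations to $\mathcal{R}_n^n$ span a rank-$(n^2-1)$ distribution at a generic point, and the construction of Section \ref{LSLS} yields a superposition rule for $X_{\mathcal{R}_n}$ depending on $n$ particular solutions.

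To see that $n$ is optimal (so that it is the ``right'' count) I would check that for $m=n-1$ the isotropy is positive-dimensional: at a generic $\xi\in\mathcal{R}_n^{n-1}$ the vectors $\psi_1,\ldots,\psi_{n-1}$ span a hyperplane $W\subset\mathbb{C}^n$, and the same eigenvalue argument shows that any $U$ fixing $\xi$ acts on $W$ as $e^{{\rm i}\theta}\mathrm{Id}_W$ and on the line $W^\perp$ as a unit scalar $e^{{\rm i}\phi}$ with $(n-1)\theta+\phi\equiv0$, a one-parameter family; thus there the rank is at most $n^2-2<\dim V_{\mathcal{R}_n}$. The only delicate point of the whole argument is the genericity claim for the $\psi_j$ (linear independence \emph{together with} pairwise non-orthogonality), which combined with the elementary spectral fact about unitary operators is exactly what produces the jump from $n-1$ to $n$: on $\mathcal{M}_{2n}$ a symmetry has to fix the $\psi_j$ exactly, whereas on $\mathcal{R}_n$ it need only fix them up to a phase, so one more solution is needed to pin the group element down to the identity. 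Everything else is a routine application of the distributional method and of the fact, recalled in Section \ref{LSLS}, that a Lie algebra of vector fields spans an integrable generalised distribution.
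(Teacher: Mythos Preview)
Your proof is correct and follows essentially the same route as the paper: reduce to the diagonal $SU(n)$-action on $\mathcal{R}_n^m$ and show that the isotropy at a generic point of $\mathcal{R}_n^n$ is the finite centre $\{\zeta\,\mathrm{Id}:\zeta^n=1\}$. You are in fact slightly more careful than the paper---you make the pairwise non-orthogonality of the $\psi_j$ explicit (the paper uses it tacitly when concluding that ``all factors in the diagonal \ldots\ must be equal'') and you add the optimality check that $m=n-1$ leaves a one-dimensional isotropy, which the paper omits for $\mathcal{R}_n$.
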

			\begin{proof} In view of Proposition \ref{ProjU(1)},  the Schr\"odinger equation under study admits a VG--Lie algebra $V_{\mathcal{R}_n}$ of fundamental vector fields 
				isomorphic to $\mathfrak{su}(n)$. Also the proof of Proposition \ref{ProjU(1)} shows that the diagonal prolongation of the elements of $V_{\mathcal{R}_n}$ to $\mathcal{R}_n^m$ are the fundamental vector fields of the Lie group action 
				$$
				\begin{array}{rccc}
				\varphi_\mathcal{R}^m:&SU(n)\times \mathcal{R}_n^m&\longrightarrow&\mathcal{R}_n^m\\
				&(U;[\psi_1]_\mathcal{R},\ldots,[\psi_m]_\mathcal{R})&\longmapsto & ([U\psi_1]_\mathcal{R},\ldots, [U\psi_m]_\mathcal{R}).
				\end{array}
				$$
				To derive a superposition rule for $X_{\mathcal{R}_n}$, we determine the smallest $m\in \mathbb{N}$ so that the diagonal prolongations of a basis  $V_{\mathcal{R}_n}$ become linearly independent at a generic point. 
				This occurs at  $p\in \mathcal{R}_n^{m}$ if and only if the isotropy group of this action at $p$ is discrete. Let us set $m=n$. The elements of the isotropy group of $\varphi^n_\mathcal{R}$ at a generic point $p:=([\psi_1]_\mathcal{R},\ldots,[\psi_{n}]_\mathcal{R})\in \mathcal{R}_n^{n}$ satisfy
				\begin{equation}\label{Conp2}
				U[\psi_j]_\mathcal{R}=[\psi_j]_\mathcal{R},\qquad j\in\overline{1,n}.
				\end{equation}
				At a generic point of $\mathcal{R}_n^{n}$, we can assume that $\psi_1,\ldots,\psi_{n}$ are linearly independent elements of $\mathbb{C}^n$ (over $\mathbb{C}$). In view of (\ref{Conp2}), the operator $U$ diagonalises on the basis $\psi_1,\ldots,\psi_{n}$. Since $U\in U(n)$, we have that $\langle U\psi_i,U\psi_j\rangle=\langle \psi_i,\psi_j\rangle$ for $i,j=\overline{1,n}$ and all factors in the diagonal of the matrix representation of $U$ must be equal. As $U\in SU(n)$,  the multiplication of such diagonal elements must be equal to 1. This fixes $U=e^{{\rm i}2\pi k/n}$ for $k\in \mathbb{Z}$. Therefore, the stability group of $\varphi^n_{\mathcal{R}}$ is discrete at a generic point of $\mathcal{R}_n^{n}$, the fundamental vector fields of $\varphi_\mathcal{R}^n$ are linearly independent over $\mathbb{R}$ at a generic point and $X_{\mathcal{R}_n}$ admits  a superposition rule depending on $n$ particular solutions.
			\end{proof}
Recall that Proposition \ref{RnPn} states that $\mathcal{P}_n$ can be embedded naturally within $\mathcal{R}_n$. Additionally, the  projection $\pi_{\mathcal{R}\mathcal{P}}:\mathcal{R}_n\rightarrow \mathcal{P}_n$ is equivariant relative to the the Lie group action of $SU(n)$ on $\mathcal{R}_n$ and the action $\varphi_{\mathcal{P}}$ of $SU(n)$ on $\mathcal{P}_n$. Using these facts, we can easily prove the following corollary by using the same line of reasoning as in Corollary \ref{Cor1}.
	\begin{corollary}
		Every Schr\"odinger equation on $\mathcal{P}_n$, with $n>1$, related to a VG--Lie algebra $V_{\mathcal{P}_{n}}$ admits a  superposition rule depending on $n$ particular solutions.
	\end{corollary}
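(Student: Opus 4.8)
The plan is to follow the line of reasoning of Theorem~\ref{MT1} and Corollary~\ref{Cor1}, transferring the orbit analysis to $\mathcal{P}_n^m$. By the theorem characterising $V_{\mathcal{P}_n}$, the $t$-dependent vector field $X_{\mathcal{P}_n}$ takes values, for each $t$, in $V_{\mathcal{P}_n}\simeq\mathfrak{su}(n)$, which is the Lie algebra of fundamental vector fields of the natural action $\varphi_{\mathcal{P}_n}$ of $SU(n)$ on $\mathcal{P}_n$. Following the procedure recalled in Section~\ref{LSLS}, I would look for the smallest $m\in\mathbb{N}$ such that the diagonal prolongations to $\mathcal{P}_n^m$ of a basis of $V_{\mathcal{P}_n}$ are linearly independent at a generic point. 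These diagonal prolongations are exactly the fundamental vector fields of the diagonal action
$$
\varphi^m_{\mathcal{P}}:SU(n)\times\mathcal{P}_n^m\longrightarrow\mathcal{P}_n^m,\qquad (U;[\psi_1]_{\mathcal{P}},\ldots,[\psi_m]_{\mathcal{P}})\longmapsto([U\psi_1]_{\mathcal{P}},\ldots,[U\psi_m]_{\mathcal{P}}),
$$
which, by equivariance of $\pi_{\mathcal{RP}}$, is intertwined (factor by factor) with the action $\varphi^m_{\mathcal{R}}$ on $\mathcal{R}_n^m$ used in the proof of Theorem~\ref{MT1}. These vector fields span a distribution of rank $\dim V_{\mathcal{P}_n}=n^2-1$ at a point $p\in\mathcal{P}_n^m$ if and only if the isotropy group of $\varphi^m_{\mathcal{P}}$ at $p$ is discrete.

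The core step is to set $m=n$ and check that the isotropy group of $\varphi^n_{\mathcal{P}}$ at a generic $p=([\psi_1]_{\mathcal{P}},\ldots,[\psi_n]_{\mathcal{P}})$ is discrete. At such a point I would pick representatives $\psi_1,\ldots,\psi_n$ forming a $\mathbb{C}$-basis of $\mathbb{C}^n$ in general position, in particular with $\langle\psi_i,\psi_j\rangle\neq 0$ for all $i,j$. If $U\in SU(n)$ fixes $p$, then $U\psi_j=\lambda_j\psi_j$ with $\lambda_j\in\mathbb{C}\backslash\{0\}$, so $U$ is diagonal in this basis; unitarity gives $\overline{\lambda_i}\lambda_j\langle\psi_i,\psi_j\rangle=\langle\psi_i,\psi_j\rangle$ and hence $\overline{\lambda_i}\lambda_j=1$ for all $i,j$, which for $i=j$ forces $|\lambda_i|=1$ and then $\lambda_j=\lambda_i$ for all $i,j$. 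Thus $U=\lambda\,\mathrm{Id}$ with $|\lambda|=1$, and $\det U=\lambda^n=1$ gives $\lambda=e^{2\pi{\rm i}k/n}$, $k\in\mathbb{Z}$; the isotropy group is $\mathbb{Z}_n$, which is discrete, so the fundamental vector fields of $\varphi^n_{\mathcal{P}}$ are $\mathbb{R}$-linearly independent at a generic point of $\mathcal{P}_n^n$. A short variant of the same computation shows that for $m<n$ the isotropy group of $\varphi^m_{\mathcal{P}}$ at a generic point still contains a circle (the eigenvalue of $U$ on the orthogonal complement of $\langle\psi_1,\ldots,\psi_m\rangle_{\mathbb{C}}$ being a free unimodular parameter not pinned down by the determinant constraint), so $m=n$ is in fact optimal.

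With this in hand, the standard construction recalled in Section~\ref{LSLS} finishes the argument: one derives $\dim\mathcal{P}_n=2(n-1)$ common first-integrals of the diagonal prolongations of $V_{\mathcal{P}_n}$ to $\mathcal{P}_n^{n+1}$, with non-vanishing Jacobian with respect to the coordinates of the $0$-th copy of $\mathcal{P}_n$, and solving for those coordinates yields a superposition rule for $X_{\mathcal{P}_n}$ depending on $n$ particular solutions. Alternatively, the result can be read off from Theorem~\ref{MT1} and Proposition~\ref{RnPn}, using that the embedding $\iota_{\mathcal{P}}:\mathcal{P}_n\hookrightarrow\mathcal{R}_n$ transports the superposition rule of $X_{\mathcal{R}_n}$; the direct isotropy computation above is, however, shorter. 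The main obstacle I anticipate is the genericity hypothesis in that computation: it is essential to choose the $\psi_j$ with all pairwise Hermitian products nonzero, since for an orthonormal choice the isotropy group of $\varphi^n_{\mathcal{P}}$ is the whole $(n-1)$-torus of special-unitary diagonal matrices and the argument collapses; one must therefore verify that the "good" configurations form an open dense subset of $\mathcal{P}_n^n$.
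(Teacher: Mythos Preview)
Your argument is correct and matches the paper's intended proof. The paper states the corollary by pointing to Proposition~\ref{RnPn} (the embedding $\iota_{\mathcal{P}}:\mathcal{P}_n\hookrightarrow\mathcal{R}_n$) and the equivariance of $\pi_{\mathcal{RP}}$, and says to repeat ``the same line of reasoning as in Corollary~\ref{Cor1}''; in practice that means redoing the isotropy computation of Theorem~\ref{MT1} (the $\mathcal{R}_n$ case) on the image $\iota_{\mathcal{P}}(\mathcal{P}_n)$. You carry out exactly this computation directly on $\mathcal{P}_n^n$, and your eigenvalue argument ($U\psi_j=\lambda_j\psi_j$, unitarity forcing $\overline{\lambda_i}\lambda_j=1$ whenever $\langle\psi_i,\psi_j\rangle\neq 0$, hence all $\lambda_j$ equal and $U=e^{2\pi{\rm i}k/n}\,\mathrm{Id}$) is identical to the paper's computation for $\mathcal{R}_n$; the only cosmetic difference is that in $\mathcal{P}_n$ the a~priori $\lambda_j\in\mathbb{C}^\times$ are forced to be unimodular by $|\lambda_j|^2=1$, whereas in $\mathcal{R}_n$ they are unimodular by definition of the equivalence class. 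Your genericity remark (that one must avoid configurations with some $\langle\psi_i,\psi_j\rangle=0$) is a point the paper leaves implicit in the $\mathcal{R}_n$ proof as well, and your observation that the orthonormal case has a full torus in the stabiliser is a useful sanity check.

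One small inaccuracy: in your $m<n$ sketch, the free unimodular parameter is really the common eigenvalue $\lambda$ on $\langle\psi_1,\ldots,\psi_m\rangle_{\mathbb{C}}$, with the action on the orthogonal complement constrained by $\det U=1$; but the conclusion that the isotropy contains at least a circle is correct, and in any case the corollary only asserts existence for $m=n$, not optimality.
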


			The second interesting point is that the constants of motion needed to obtain a superposition rule for  special unitary Schr\"odinger equations can be obtained from the associated K\"ahler structure.
			
			%
			
			
\!\!\!\!
	\section{Superposition rules for one-qubit systems}
		In this section we illustrate our theory by describing superposition rules for one-qubit systems and their projections onto $\mathcal{S}_2, \mathcal{R}_2$ and $\mathcal{P}_2$. Observe that we can define the commutative diagram below. For the sake of completeness, we have added under each space the smallest number of particular solutions for its corresponding superposition rule.
		\medskip
		
		\begin{minipage}{4cm}
			\xymatrix{
				&\underset{m=1}{\mathcal{M}_4^\times}\ar[dd]_{\pi_\mathcal{MP}}\ar[dr]^{\pi_{\mathcal{MS}}}\ar[ld]_{\pi_{\mathcal{MR}}}&\\
				\underset{m=2}{\mathbb{R}^3_0\simeq \mathcal{R}_2\ar[dr]}\ar[dr]_{\pi_{\mathcal{RP}}}&&\underset{m=1}{\mathcal{S}_2\simeq\mathcal{M}_4^\times/ \mathbb{R}_{+}}\ar@/^{4mm}/@{->}[ul]_{\iota_\mathcal{S}}\ar[dl]^{\pi_{\mathcal{SP}}}\\
				&\underset{m=2}{\mathcal{P}_2}\ar@/_{4mm}/@{->}[ul]_{\iota_\mathcal{P}}&}
		\end{minipage}
		\begin{minipage}{8.8cm}
			On each space we can define a Lie system admitting Vessiot--Guldberg Lie algebras of Hamiltonian vector fields relative to different compatible geometric structures, which in turn allows us to obtain their superposition rules geometrically.
			The following subsections provide these superposition rules, their relevant geometric properties and their potential applications in quantum mechanics. This will be carried out by applying our previous results. Our procedures will give rise to generalisations of our methods to systems with an arbitrary number of qubits.
		\end{minipage}
		
		\subsection{Superposition rule for a two-level system on $\mathcal{M}_4^\times$}
		
		Let us obtain a superposition rule for the system $X=\sum_{\alpha=1}^3B_\alpha(t)X_\alpha$ on $\mathcal{M}_4^\times$ given by (\ref{C2}) with $B_0(t)=0$. Our aim is to illustrate our previous theory while showing that there exists a superposition rule for $X$ depending just on one particular solution.
		
		It is an immediate consequence of Theorem \ref{KahH} that the restriction of system \eqref{C2} to $\mathcal{M}_4^\times$ is a K\"ahler--Lie system whose VG--Lie algebra $V=\langle X_1,X_2,X_3\rangle$, with $X_1,X_2,X_3$ given by (\ref{2levelXj}), consists of K\"ahler vector fields relative to the standard K\"ahler structure $(g,\omega, J)$ on $\mathcal{M}^\times_4$. Also, $X_t$ commutes with the phase change vector field $\Gamma$ and with the dilation vector field $\Delta$ for every $t\in\mathbb{R}$, namely $\Gamma$ and $\Delta$ are Lie symmetries of $X$.
		All these facts will be afterwards used to obtain superposition rules for (\ref{C2}) with $B_0=0$. 
		
		The number of particular solutions needed to obtain a superposition rule for the $t$-dependent vector field $X$ related to (\ref{C2}) with $B_0=0$ can be given by the smallest integer $m$ such that the diagonal prolongations  to $(\mathcal{M}^\times_4)^m\simeq  \left(\mathbb{R}^{4}_0 \right)^m$ of $X_1,X_2,X_3$  are linearly independent at a generic point \cite{CGM07}. The coordinate expressions for $X_1,X_2,X_3$, given in \eqref{2levelXj}, show that they are already linearly independent at a generic point of  $\mathcal{M}^\times_4$. Hence, the superposition rule does depend on a mere particular solution, which is better than the standard quantum linear superposition rule for the linear system (\ref{C2}), which depends on two particular solutions.
		
		As mentioned in Section \ref{LSLS}, the superposition rule for $X$ can be obtained from certain first-integrals for the diagonal prolongations $X_1^{[2]},X_2^{[2]},X_3^{[2]}$ of $X_1,X_2,X_3$ to $(\mathcal{M}_4^\times)^2\simeq(\mathbb{R}^{4}_0)^{2}$. Using the definition of diagonal prolongations of vector fields and sections of vector bundles given in Section \ref{LSLS}, we can prove that, as the Lie derivative of $g,\omega,J$ with respect to any $X \in V$ is zero, the same happens for the diagonal prolongations $g^{[2]},\omega^{[2]},J^{[2]}$ relative to any $X^{[2]} \in V^{[2]}:= \langle X_1^{[2]},X_2^{[2]},X_3^{[2]} \rangle$, where
		$$
		\begin{gathered}
		\omega^{[2]}=\sum_{r=0}^1 \sum_{j=1}^2{\rm d}{q^{(r)}_j}\wedge {\rm d}{p^{(r)}_j},\qquad\quad 
		g^{[2]} = \sum_{r=0}^1 \sum_{j=1}^2 ({\rm d}{q^{(r)}_j}\otimes {\rm d}{q^{(r)}_j} + {\rm d}{p^{(r)}_j} \otimes {\rm d}{p^{(r)}_j}), \\ 
		J^{[2]}=\sum_{r=0}^1 \sum_{j=1}^2 \left(\frac{\partial}{\partial p^{(r)}_j}\otimes\d q^{(r)}_j - \frac{\partial}{\partial q^{(r)}_j}\otimes \d p^{(r)}_j \right).
		\end{gathered}
		$$
		Additionally, the vector fields $\Delta^{(0)}$, $\Delta^{(1)}$, $\Gamma^{(0)}$ and $\Gamma^{(1)}$, namely the vector fields $\Delta$ and $\Gamma$ defined on each copy of $\mathcal{M}^\times_4$ within $(\mathcal{M}^\times_4)^2$,  commute with $X_1^{[2]},X_2^{[2]},X_3^{[2]}$. The tensor field $S_{01}$, defined in \eqref{eq:Srs} remains invariant under the evolution, namely $\mathcal{L}_{X^{[2]}}S_{01}=0$ for any $X^{[2]} \in V^{[2]}$.
		
		To obtain the superposition rule for $X$, four common first-integrals $I^c_1, I^s_1, I^c_2$,
		and $ I^s_2$ for $X_1^{[2]},X_2^{[2]},X_3^{[2]}$ are needed. Additionally, we must demand 
		\begin{equation}\label{Con2}
		\det \left( \frac{\partial(I^c_1, I^s_1, I^c_2, I^s_2)}{\partial(q^{(0)}_1,p^{(0)}_1,q^{(0)}_2,p^{(0)}_2)} \right)\neq 0.
		\end{equation}
		Some common first-integrals for $X_1^{[2]},X_2^{[2]}$, and $X_3^{[2]}$ can be obtained geometrically from the invariance with respect to such vector fields of several geometric structures previously described:
		\begin{equation}\label{EQU}
		\begin{gathered}
		{g}^{[2]} ( \Delta^{(0)}, \Delta^{(0)}) = {g}^{[2]} ( \Gamma^{(0)}, \Gamma^{(0)}), \quad
		{g}^{[2]} ( \Delta^{(1)}, \Delta^{(1)}) = {g}^{[2]} ( \Gamma^{(1)}, \Gamma^{(1)}), \\
		{g}^{[2]} ( \Delta^{(0)}, S_{01} \Delta^{(1)} ) = {g}^{[2]} ( S_{10} \Delta^{(0)}, \Delta^{(1)} ), \quad
		{\omega}^{[2]} ( \Delta^{(0)}, S_{01} \Delta^{(1)} ) =  {g}^{[2]} ({J}^{[2]} \Delta^{(0)}, S_{01} \Delta^{(1)} ), \quad \mbox{etc.}
		\end{gathered}
		\end{equation}
		A simple but long calculation shows that we cannot construct among (\ref{EQU}) four functions satisfying (\ref{Con2}).
		
		Another first-integral for $X_1^{[2]},X_2^{[2]}$, and $X_3^{[2]}$ can be obtained from the fact that the complex volume element on $\mathcal{M}_4^\times$, understood as a complex manifold $\mathbb{C}^2_0$, reads
		$$
		\Omega = {\rm d}z_1 \wedge {\rm d}z_2 = ({\rm d}q_1\wedge {\rm d}q_2 - {\rm d}p_1 \wedge {\rm d}p_2) + {\rm i} ({\rm d}q_1 \wedge {\rm d}p_2 + {\rm d}p_1 \wedge {\rm d}q_2).
		$$
		This volume element defines two real closed non-degenerate 2-forms $\Omega_R$, $\Omega_I$ on $\mathcal{M}_4^\times$:
		\begin{equation*}
		\Omega_R := {\rm d}q_1\wedge {\rm d}q_2 - {\rm d}p_1 \wedge {\rm d}p_2, \quad
		\Omega_I := {\rm d}q_1 \wedge {\rm d}p_2 + {\rm d}p_1 \wedge {\rm d}q_2, 
		\end{equation*}
		which satisfy that $\mathcal{L}_{Y}\Omega_R=\mathcal{L}_{Y}\Omega_I=0$ for any $Y \in V$.
		The diagonal prolongations of $\Omega_R$ and $\Omega_I$ to $(\mathcal{M}_4^\times)^2$ allow us to obtain new first-integrals for $X_1^{[2]},X_2^{[2]},X_3^{[2]}$: 
		$$
		\Omega_R^{[2]} (\Delta^{(0)}, S_{01} \Delta^{(1)}) = -\Omega_R^{[2]} (\Gamma^{(0)}, S_{01} \Gamma^{(1)}), \qquad
		\Omega_I^{[2]} (\Delta^{(0)}, S_{01} \Delta^{(1)}) = -\Omega_I^{[2]} (\Gamma^{(0)}, S_{01} \Gamma^{(1)}).
		$$From the set of first-integrals on $(\mathcal{M}_4^\times)^2$ so obtained, let us choose four of them as follows:
		{\small 
		\begin{align*}
		I^c_1 (\psi^{(0)}, \psi^{(1)}) := & \frac{{g}^{[2]} ( \Delta^{(0)}, S_{01} \Delta^{(1)} )}{{g}^{[2]} ( \Delta^{(1)}, \Delta^{(1)} )} = \frac{{g}^{[2]} ( \Gamma^{(0)}, S_{01} \Gamma^{(1)} )}{{g}^{[2]} ( \Gamma^{(1)}, \Gamma^{(1)} )} = \frac{\sum_{j=1}^2(q^{(0)}_jq^{(1)}_j+p^{(0)}_jp^{(1)}_j)}{\sum_{j=1}^2[(q^{(1)}_j)^2+(p^{(1)}_j)^2]}, \\
		I^s_1 (\psi^{(0)}, \psi^{(1)}) := & \frac{{\omega}^{[2]} ( \Delta^{(0)}, S_{01} \Delta^{(1)} )}{{g}^{[2]} ( \Delta^{(1)}, \Delta^{(1)} )} = \frac{{\omega}^{[2]} ( \Gamma^{(0)}, S_{01} \Gamma^{(1)} )}{{g}^{[2]} ( \Gamma^{(1)}, \Gamma^{(1)} )} = \frac{\sum_{j=1}^2(q^{(0)}_jp^{(1)}_j - p^{(0)}_jq^{(1)}_j)}{\sum_{j=1}^2[(q^{(1)}_j)^2+(p^{(1)}_j)^2]}, \\
		I^c_2 (\psi^{(0)}, \psi^{(1)}) := & \frac{\Omega_R^{[2]} (\Delta^{(0)}, S_{01} \Delta^{(1)})}{{g}^{[2]} ( \Delta^{(1)}, \Delta^{(1)} )} = -\frac{\Omega_R^{[2]} (\Gamma^{(0)}, S_{01} \Gamma^{(1)})}{{g}^{[2]} ( \Gamma^{(1)}, \Gamma^{(1)} )} = \frac{q^{(0)}_1 q^{(1)}_2 - p^{(0)}_1 p^{(1)}_2 - q^{(0)}_2 q^{(1)}_1 + p^{(0)}_2 p^{(1)}_1}{\sum_{j=1}^2[(q^{(1)}_j)^2+(p^{(1)}_j)^2]}, \\
		I^s_2 (\psi^{(0)}, \psi^{(1)}) := & \frac{\Omega_j^{[2]} (\Delta^{(0)}, S_{01} \Delta^{(1)})}{{g}^{[2]} ( \Delta^{(1)}, \Delta^{(1)} )} = -\frac{\Omega_j^{[2]} (\Gamma^{(0)}, S_{01} \Gamma^{(1)})}{{g}^{[2]} ( \Gamma^{(1)}, \Gamma^{(1)} )} = \frac{q^{(0)}_1 p^{(1)}_2 + p^{(0)}_1 q^{(1)}_2 - q^{(0)}_2 p^{(1)}_1 - p^{(0)}_2 q^{(1)}_1}{\sum_{j=1}^2[(q^{(1)}_j)^2+(p^{(1)}_j)^2]}.
		\end{align*}}
		The normalization factors allow us to obtain a simple superposition rule. These functions satisfy that
		$$
		\det \left( \frac{\partial( I^c_1, I^s_1, I^c_2, I^s_2)}{\partial (q_1^{(0)},p_1^{(0)},q_2^{(0)},p_2^{(0)})} \right)=\left( (q^{(1)}_1)^2 + (p^{(1)}_1)^2 + (q^{(1)}_2)^2 + (p^{(1)}_2)^2 \right)^{-2} \neq 0.
		$$
		The matrix of partial derivatives is non-singular for any point in $\mathcal{M}_4^\times$. Therefore, the system of equations
		$$
		I^c_1 (\psi^{(0)}, \psi^{(1)}) = k_1, \quad
		I^s_1 (\psi^{(0)}, \psi^{(1)}) = k_2, \quad
		I^c_2 (\psi^{(0)}, \psi^{(1)}) = k_3, \quad
		I^s_2 (\psi^{(0)}, \psi^{(1)}) = k_4,
		$$
		can be solved for $\psi^{(0)}:=(q^{(0)}_1, p^{(0)}_1, q^{(0)}_2, p^{(0)}_2)$, giving rise to the superposition rule 
		\begin{equation}
		\label{supRuleH}
		\Phi:(\psi^{(1)},k)\in \mathcal{M}^\times_4\times \mathcal{M}^\times_4\mapsto \psi^{(0)}:=A(k)\psi^{(1)}\in \mathcal{M}^\times_4, \quad
		A(k):=\left[\begin{array}{cccc}
		k_1& - k_2& k_3 &k_4\\
		k_2& k_1 & k_4&- k_3\\
		-k_3& -k_4 & k_1&- k_2\\
		-k_4& k_3 & k_2&k_1\\
		\end{array}\right],
		\end{equation}
		with $k = (k_1, k_2, k_3, k_4)$.

		\subsection{The superposition rule for the Lie system on $\mathcal{S}_2$}
		
		The projection $\pi_\mathcal{MS}: \mathcal{M}_4^\times\rightarrow  \mathcal{S}_2$ imposes an equivalence relation between points in $\mathcal{M}_4^\times$ that differ only on a positive real multiplicative constant. Therefore, each equivalence class of a point $\psi\in \mathcal{M}_4^\times$ is of the form $[\psi]_\mathcal{S} :=\{\lambda\psi \mid \lambda>0\}$. Hence, each equivalence class can be represented by its unique intersection with the unit sphere and $\mathcal{M}_4^\times/\mathbb{R}_{+} \simeq \mathcal{S}_2$. We will consider the natural embedding $\iota_\mathcal{S}: \mathcal{S}_2 \hookrightarrow \mathcal{M}_4$. The pullback of this embedding defines a presymplectic structure $\iota_\mathcal{S}^* \omega$ and a Riemannian metric $\iota_\mathcal{S}^* g$ on $\mathcal{S}_2$. 
		
		As (\ref{C2}) is an $\mathbb{R}$-linear system over $\mathbb{R}$, then the Lie system $X_{\mathcal{M}^\times_{4}}$ can be projected through $\pi_\mathcal{MS}$ onto a system $X_{\mathcal{S}_2}$ on $\mathcal{S}_2$ which, indeed, is the restriction of (\ref{C2}) to $\mathcal{S}_2$, as in Proposition \ref{ResSch}. This proposition also ensures that the vector fields $X_1|_{\mathcal{S}_2}, X_2|_{\mathcal{S}_2}$, and $X_3|_{\mathcal{S}_2}$ are Hamiltonian with respect to the presymplectic structure $\iota_\mathcal{S}^* \omega$, admitting Hamiltonian functions $\bar h_i:=\iota_\mathcal{S}^*h_i$. Also, since the vector fields $X_1,X_2$, and $X_3$ are Killing vector fields for $g$ on $\mathcal{M}_4^\times$, then the vector fields $X_1|_{\mathcal{S}_2}, X_2|_{\mathcal{S}_2}$, and $X_3|_{\mathcal{S}_2}$ are also Killing vector fields with respect to $\iota_\mathcal{S}^* g$.

		Corollary \ref{Cor1} ensures that the restrictions  $X_1|_{\mathcal{S}_2}, X_2|_{\mathcal{S}_2}, X_3|_{\mathcal{S}_2}$ are linearly independent at a generic point of $\mathcal{S}_2$, then $X|_{\mathcal{S}_2}$ admits a superposition rule depending on a unique particular solution. This superposition rule can be obtained by using a similar approach as in the above section, i.e. obtaining three common first-integrals for the diagonal prolongations $X^{[2]}_1|_{\mathcal{S}_2}, X^{[2]}_2|_{\mathcal{S}_2}, X^{[2]}_3|_{\mathcal{S}_2}$, which are Killing vector fields with respect to $(\iota_\mathcal{S}^*g)^{[2]}$ and Hamiltonian vector fields relative to the presymplectic structure $(\iota_\mathcal{S}^*\omega)^{[2]}$. The latter can be employed to obtain the common first-integrals through invariant functions constructed through $(\iota_\mathcal{S}^*g)^{[2]}$, $(\iota_\mathcal{S}^*\omega)^{[2]}$. Importantly, the vector $\Delta$ is not tangent to $\mathcal{S}_2$ and it cannot be used to construct invariants. 
		More easily, we can obtain the pullback via $\iota_\mathcal{S}$ of the first integrals on $(\mathcal{M}_4^\times)^2$ computed in the above section, which allows to determine the superposition rule.
		
		Instead of the above, we will use the following approach, which allows us to obtain the superposition rule for $X_{\mathcal{S}_2}$ from of the superposition for $X_{\mathcal{M}_4^\times}$.
		Observe that $X_{\mathcal{M}_4^\times}$ is a Lie system on $\mathcal{M}_4^\times$ with a superposition rule $\Phi:\mathcal{M}_4^\times \times \mathcal{M}_4^\times\rightarrow \mathcal{M}_4^\times$ and that $(X_{\mathcal{M}_4^\times})_t$ is tangent to a submanifold $\mathcal{S}_2\subset \mathcal{M}_4^\times$ for each $t\in\mathbb{R}$. Assume also that there exists $\bar S \subset \mathcal{M}_4^\times$ such that $\Phi(\mathcal{S}_2 \times \bar S)= \mathcal{S}_2$. Then, the initial superposition rule can be restricted to elements on $\mathcal{S}_2$ giving rise to a new superposition principle.

		Indeed, let us consider the superposition rule $\Phi$ defined above and evaluated on points $\psi_S^{(1)}, k_S \in \mathcal{S}_2$, i.e. $\|\psi_S^{(1)}\| = \| k_S \| = 1$. The resulting point $\Phi (\psi_S^{(1)}, k_S)$ satisfies that
		$$
		\| \Phi (\psi_S^{(1)}, k_S) \| = |\det A(k_S)| \| \psi_S^{(1)} \| = \| k_S \|^4 \| \psi_S^{(1)} \| = 1 \Rightarrow \Phi (\psi_S^{(1)}, k_S) \in \mathcal{S}_2.
		$$
		Conversely, there always exists, for  points $\psi_S^{(0)}\in\mathcal{S}_2$  and $k_S\in\mathcal{S}_2$, a point $\psi_S^{(1)}\in \mathcal{S}_2$ such that $\Phi(\psi_S^{(1)},k_s)=\psi_S^{(0)}$.
		Hence $X_{\mathcal{S}_2}$ admits a superposition rule
		\begin{equation}
		\Phi_S: (\psi_S^{(1)}, k_S) \in \mathcal{S}_2 \times \mathcal{S}_2 \mapsto A(k_S) \psi_S^{(1)} \in \mathcal{S}_2,
		\end{equation}
		with $A(k)$ given by \eqref{supRuleH}.

		\subsection{Superposition rules on $ \mathcal{R}_2$ and $\mathcal{P}_2$}

		Let us obtain a superposition rule for the system (\ref{YField}) on $\mathcal{R}_2$ and prove that it depends on two particular solutions.

		We employ the global coordinate system $\{x , y, z\}$ on $\mathcal{R}_2$ suggested in Lemma \ref{CoorHU} and presented in Example \ref{Coor}. To simplify the notation, ${\rm\bf x}:=(x,y,z)\neq 0$ will represent an arbitrary point of $\mathcal{R}_2\simeq\mathbb{R}^3_0$.
			
			Recall that the Poisson structure $\Lambda$ on $\mathcal{M}_4^\times$ can be projected onto $\mathcal{R}_2$ giving rise to a contravariant tensor field $\widehat{\Lambda} = \pi_{\mathcal{MR} *} \Lambda$ on $\mathcal{R}_2$, presented in \eqref{eq:2LambdaHat}, which is a Poisson tensor field on $\mathcal{R}_2$.

		To obtain a superposition rule for (\ref{YField}), we have to  find the smallest $m\in\mathbb{N}$ so that $Y_1^{[m]}, Y_2^{[m]}, Y_3^{[m]}$ are linearly independent at a generic point of $\mathcal{R}_2^m$. 
		Since  $Y_1,Y_2$, and $Y_3$ span a two-dimensional distribution and their are linearly independent over $\mathbb{R}$ (see Example \ref{ExHU}), then their diagonal prolongations to $\mathcal{R}_2$ span, at least, a distribution of rank three and $Y_1^{[2]}, Y_2^{[2]}$, and $ Y_3^{[2]}$ are linearly independent at a generic point \cite{CGM07,Dissertations}. 
		Hence, a superposition rule for (\ref{YField}) results from giving three functionally independent common first-integrals  $I_1,I_2,I_3:\mathcal{R}_2^3\rightarrow \mathbb{R}$ for the diagonal prolongations

		\begin{align*}
		{Y}^{[3]}_1 & = z^{(0)} \frac{\partial}{\partial y^{(0)}} - y^{(0)} \frac{\partial}{\partial z^{(0)}} + z^{(1)} \frac{\partial}{\partial y^{(1)}} - y^{(1)} \frac{\partial}{\partial z^{(1)}} + z^{(2)} \frac{\partial}{\partial y^{(2)}} - y^{(2)} \frac{\partial}{\partial z^{(2)}}, \\
		{Y}^{[3]}_2 & =  x^{(0)} \frac{\partial}{\partial z^{(0)}} - z^{(0)} \frac{\partial}{\partial x^{(0)}} + x^{(1)} \frac{\partial}{\partial z^{(1)}} - z^{(1)} \frac{\partial}{\partial x^{(1)}} + x^{(2)} \frac{\partial}{\partial z^{(2)}} - z^{(2)} \frac{\partial}{\partial x^{(2)}}, \\
		{Y}^{[3]}_3 & = y^{(0)} \frac{\partial}{\partial x^{(0)}} - x^{(0)} \frac{\partial}{\partial y^{(0)}} + y^{(1)} \frac{\partial}{\partial x^{(1)}} - x^{(1)} \frac{\partial}{\partial y^{(1)}} + y^{(2)} \frac{\partial}{\partial x^{(2)}} - x^{(2)} \frac{\partial}{\partial y^{(2)}}.
		\end{align*}
		satisfying $\det(\partial (I_1,I_2,I_3)/\partial (x^{(0)}, y^{(0)}, z^{(0)}))\neq 0$.
		
		The needed first-integrals can be obtained from the diagonal prolongations of the Lie symmetry
		$$
		\Delta= x\frac{\partial}{\partial x}+y\frac{\partial}{\partial y}+z\frac{\partial}{\partial z}.
		$$
		Firstly, let us define the normalization function
		\begin{equation}
		N ({\bf x}) := \widehat{g} (\Delta, \Delta) = (x^2 + y^2 + z^2)^{1/2},\qquad {\bf x}\in \mathcal{R}_2,
		\end{equation}
		with $\widehat{g}$ given by \eqref{eq:2gHat}. This function is a first integral of $X_{\mathcal{R}_2}$ and hence it can be understood as a constant of motion of any of its prolongations. Consider now the invariant functions on $\mathcal{R}_2^3$ of the form
		\begin{align*}
		I_i ({\bf x}^{(0)}, {\bf x}^{(1)}, {\bf x}^{(2)}) & := N({\bf x}^{(0)}) g^{[3]} ( \Delta^{(0)}, S_{0i} \Delta^{(1)} )=x^{(0)} x^{(j)}+y^{(0)} y^{(j)}+z^{(0)} z^{(j)}, \quad j = 1,2, \\
		I_3 ({\bf x}^{(0)},{\bf x}^{(1)},{\bf x}^{(2)}) & := N({\bf x}^{(0)}) g^{[3]} ( \Delta^{(0)}, \Delta^{(0)} )= (x^{(0)})^2+(y^{(0)})^2+(z^{(0)})^2,
		\end{align*}
		Since
		\begin{equation}\label{conp}
		\det \left( \frac{\partial (I_1,I_2,I_3)}{\partial(x^{(0)},y^{(0)},z^{(0)})} \right) = \det\left[ \begin{array}{ccc}
		x^{(1)} & y^{(1)} &z^{(1)} \\
		x^{(2)} & y^{(2)} & z^{(2)}\\
		2x^{(0)} & 2y^{(0)} & 2z^{(0)}
		\end{array}\right] \neq 0
		\end{equation}
		at a generic point $({\bf x}^{(0)},{\bf x}^{(1)},{\bf x}^{(2)})\in \mathcal{R}_2^3$, obtaining the superposition rule is equivalent to solving for ${\bf x}$ the following system of equations in $\mathcal{R}_2^2=\mathbb{R}_0^3$:
		\begin{equation}
		\label{eq:sistR3}
		{\bf x} \cdot {\bf x}_1 = k_1, \qquad
		{\bf x} \cdot {\bf x}_2 = k_2, \qquad
		{\bf x} \cdot {\bf x} = k_3,
		\end{equation}
		for some $k_1, k_2, k_3 \in \mathbb{R}$, with $k_3 >0$. Since ${\bf x}_1$ and ${\bf x}_2$ are not collinear when (\ref{conp}) holds, the above system is easily solved in ${\bf x}$ by defining an orthonormal system relative to the standard scalar product on $\mathcal{R}_2\simeq\mathbb{R}^3$:
		$$
		{\bf x}'_1 := \frac{{\bf x}_1}{\lVert {\bf x}_1 \rVert} , \qquad
		{\bf x}'_2 := \frac{\lVert {\bf x}_1 \rVert^2 {\bf x}_2 - ({\bf x}_1 \cdot {\bf x}_2) {\bf x}_1}{\lVert {\bf x}_1 \rVert \sqrt{\lVert {\bf x}_1 \rVert^2 \lVert {\bf x}_2 \rVert^2 - ({\bf x}_1 \cdot {\bf x}_2)^2}}.
		$$
		These two new vectors together with their cross product, ${\bf x}'_1 \times {\bf x}'_2$, conform an orthonormal basis for $\mathbb{R}^3$. From \eqref{eq:sistR3}, the general expression for $\bf x$ is
		\begin{equation}\label{solXNorm}
		{\bf x} = k'_1 {\bf x}'_1 + k'_2 {\bf x}'_2 \pm \sqrt{k_3 - (k'_1)^2 - (k'_2)^2} ( {\bf x}'_1 \times {\bf x}'_2),
		\end{equation}
		where the coefficients $k'_1$ and $k'_2$ are
		$$
		k'_1 = {\bf x} \cdot {\bf x}'_1 = \frac{k_1}{\lVert {\bf x}_1 \rVert}, \qquad
		k'_2 = {\bf x} \cdot {\bf x}'_2 = \frac{k_2 \lVert {\bf x}_1 \rVert^2 - k_1 ({\bf x}_1 \cdot {\bf x}_2)}{\lVert {\bf x}_1 \rVert \sqrt{\lVert {\bf x}_1 \rVert^2 \lVert {\bf x}_2 \rVert^2 - ({\bf x}_1 \cdot {\bf x}_2)^2}}.
		$$
		Replacing $k_1'$ and $k_2'$ in \eqref{solXNorm}, the solution to the system of equations (\ref{eq:sistR3}) is
		\begin{align*}
		{\bf x} = & \frac{\delta_{12}{\bf x}_1 +\delta_{21}{\bf x}_2\pm\sqrt{k_3 [\lVert {\bf x}_1 \rVert^2 \lVert {\bf x}_2 \rVert^2 -  ({\bf x}_1 \cdot {\bf x}_2)^2 ]- (k_1{\bf x}_1 -k_2 {\bf x}_2 )^2} {\bf x}_1 \times {\bf x}_2    }{\lVert {\bf x}_1 \rVert^2 \lVert {\bf x}_2 \rVert^2 - ({\bf x}_1 \cdot {\bf x}_2)^2},
		\end{align*}
		where $\delta_{lj}:=k_l\|{\bf x}_j\|^2-k_j({\bf x}_l\cdot {\bf x}_j)$. 
		As the Lie system $X_{\mathcal{R}_2}$ is linear in the chosen coordinate system and the Riemannian metric related to the standard scalar product on $\mathcal{R}_2\simeq\mathbb{R}^3_0$ is invariant under the elements of $V_{\mathcal{R}_2}$, it follows that $\|{\bf x}_1\|^2$, $\|{\bf x}_2\|^2$
		and  ${\bf x}_1\cdot{\bf x}_2$ are constant along particular solutions of $X_{\mathcal{R}_2}$. Then, the above expression gives rise to a superposition rule $\Phi:({\bf x_1},{\bf x_2},(k_1,k_2,k_3))\in \mathcal{R}^2_2\times \mathcal{A}\mapsto {\bf x}\in \mathcal{R}_2$, with $\mathcal{A}=:\{(k_1,k_2,k_3:k_3\neq 0)\}$, of the form
		\begin{align}
		\label{supRulePGamma}
		{\bf x} = \delta_{12}{\bf x}_1 +\delta_{21}{\bf x}_2+{\rm sign}(k_3) \sqrt{|k_3|k_{12}- (k_1{\bf x}_1 -k_2 {\bf x}_2 )^2} ({\bf x}_1 \times {\bf x}_2) ,
		\end{align}
		where $k_{12}:=\lVert {\bf x}_1 \rVert^2 \lVert {\bf x}_2 \rVert^2 -  ({\bf x}_1 \cdot {\bf x}_2)^2$.
		
		In view of Proposition \ref{RnPn}, deriving a superposition rule for $X_{\mathcal{P}_2}$ amounts to obtain a superposition rule for the solutions to $X_{\mathcal{R}_2}$ on $\pi_\mathcal{SR} (\mathcal{S}_2)$, namely those equivalence classes of $\mathcal{R}_2$ coming from elements of $\mathcal{M}_4^\times$ with the same module. To obtain the superposition rule for the system $X_{\mathcal{P}_2}$ on $\P_2$, consider the natural embedding of $\mathcal{P}_2$ into $\mathcal{R}_2$ whose image is the set of elements $(x,y,z) \in \mathcal{R}_2$ such that $x^2 + y^2 + z^2 = \langle \psi,\psi\rangle^2/16= 1$. Therefore, $\P_2$ is diffeomorphic to a sphere $\mathcal{S}^2 \subset \mathcal{R}^2\simeq \mathbb{R}^3_0$. Consider the superposition rule defined for $X_{\mathcal{R}_2}$ when restricted to points in $S^2$, i.e. with $\| {\bf x}_1 \| = \| {\bf x}_2 \| = 1$. The set of constants has to be constrained in order to obtain solutions in $\mathcal{S}^2$. From \eqref{eq:sistR3}, the constraints are $|k_1|, |k_2| \leq 1$, $k_3 =1$. In consequence, the superposition rule for $\mathcal{P}_2\simeq \mathcal{S}^2$ is 
		\begin{equation}
		{\bf x} =\frac{ (k_1-k_2 {\bf x}_1\cdot{\bf x}_2){\bf x}_1+ (k_2-k_1 {\bf x}_1\cdot{\bf x}_2){\bf x}_2 \pm \sqrt{1 - ({\bf x}_1 \cdot {\bf x}_2 )^2- (k_1{\bf x}_1 -k_2 {\bf x}_2 )^2} ({\bf x}_1 \times {\bf x}_2)}{1 - ({\bf x}_1 \cdot {\bf x}_2)^2} , 
		\end{equation}
		with ${\bf x}_1, {\bf x}_2 \in \mathcal{P}_2$ and $|k_1|, |k_2| \leq 1.$

		When ${\bf x}_1$ and ${\bf x}_2$ are replaced by two generic particular solutions of the system within $\mathcal{S}^2\subset \mathcal{R}_2\simeq \mathbb{R}^3_0$, the general solution is obtained.

		\section{Superposition rules for $n$-levels systems on $\mathcal{M}_{2n}^\times$ and $\mathcal{S}_n$}

		Let us obtain a superposition rule for $X_{\mathcal{M}_{2n}^\times}$, where we assume $X_1,\ldots,X_{n^2-1}$ to belong to the Lie algebra of fundamental vector fields for the action of $SU(n)$ on $\mathcal{M}_{2n}^\times$. Our aim is to illustrate our previous theory while obtaining the explicit expression for a superposition rule depending just on $n-1$ particular solutions.
		
		Theorem \ref{MT1} shows that the smallest $m$ turning linearly independent the diagonal prolongations of the vector fields $X_\alpha$ is given by $m=n-1$. Hence, the superposition rule does depend on $n-1$ particular solutions, which is better than the standard linear superposition rule for the $t$-dependent Schr\"odinger equation depending on $n$ particular solutions. In contrast with the two-level system, the superposition rule depending on $n-1$ particular solutions for $n>2$ is not linear.
		
		The superposition rule can be derived through $2n$ common first-integrals $I^c_1, I^s_1,\ldots ,I^c_n, I^s_n:(\mathcal{M}_{2n}^\times)^{n}  \rightarrow \mathbb{R}$  for all the diagonal prolongations $X^{[n]}_\alpha$ on $(\mathcal{M}_{2n}^\times)^{n}$, with $\alpha=\overline{1,n^2-1}$. Additionally, these first-integrals give rise to a superposition rule provided that 
		$$\det (\partial(I^c_1, I^s_1,\ldots ,I^c_n, I^s_n)/\partial(q_1^{(0)},p_1^{(0)} ,\ldots, q_{n}^{(0)} ,p_{n}^{(0)}))\neq 0.$$ 
		
		Let us obtain the first integrals geometrically. Since the $X_\alpha$ are K\"ahler vector fields 
		elative to the K\"ahler structure $(g,\omega,J)$ on $\mathcal{M}^\times_{2n}$, their diagonal prolongations 
		$X^{[n]}_\alpha$ are K\"ahler relative to the diagonal prolongation $(g^{[n]},\omega^{[n]},J^{[n]})$  to $(\mathcal{M}_{2n}^\times)^{n}$ of the K\"ahler structure $(g,\omega,J)$, namely
		$$
		\omega^{[n]} = \sum_{j=1}^{n} \sum_{a=0}^{n-1} {\rm d}{q^{(a)}_j}\wedge {\rm d} {p^{(a)}_j},\,\,
		g^{[n]} = \sum_{j=1}^{n} \sum_{a=0}^{n-1} ({\rm d}{q^{(a)}_j} \otimes {\rm d}{q^{(a)}_j} + {\rm d}{p^{(a)}_j} \otimes {\rm d}{p^{(a)}_j}),
		$$
		$$
		J^{[n]} = \sum_{j=1}^{n} \sum_{a=0}^{n-1} \frac{\partial}{\partial p^{(a)}_j} \otimes {\rm d}{q^{(a)}_j} - \frac{\partial}{\partial q^{(a)}_j} \otimes {\rm d}{p^{(a)}_j}.
		$$
		Similarly, if $X$ is a Hamiltonian vector field relative to $\omega$ with Hamiltonian function $h_X$, then $X^{[n]}$ is a Hamiltonian vector field with Hamiltonian function $h^{[n]}_X$. 
		
		As the vector fields $X^{[n]}_\alpha$ are Killing vector fields with respect to $g^{[n]}$ and symmetries of the tensor fields $S_{rs}$ for $r,s=0,\overline{1,n-1}$ and $r\neq s$, presented in \eqref{eq:SrsDef} and \eqref{eq:Srs}, we can obtain the following common first-integrals for all such vector fields:
		\begin{equation}
		\label{defIk}
		\begin{gathered}
		I^c_k := g^{[n]} (\Delta^{(0)}, S_{0k} (\Delta^{(k)})) = \sum_{j=1}^{n} (q^{(0)}_j q^{(k)}_j + p^{(0)}_j p^{(k)}_j), \qquad k = 1, \ldots, n-1, \\
		I^s_k := g^{[n]} (\Gamma^{(0)}, S_{0k}(\Delta^{(k)})) = \sum_{j=1}^{n} (q^{(0)}_j p^{(k)}_j - p^{(0)}_j q^{(k)}_j), \qquad k = 1, \ldots, n-1.
		\end{gathered}
		\end{equation}
		These functions satisfy that
		\begin{equation}
		J^{[n]} ({\rm d}I^c_k) = {\rm d}I^s_k, \qquad k = 1, \ldots, n-1.
		\end{equation}
		For a given value of $k$, the functions $I^c_k$ and $I^s_k$ are functionally independent, while functions with different values of $k$ involve different variables. Hence all of these functions are functionally independent among them.
		
		Observe that the functions in \eqref{defIk} are first-integrals not only for $X_\alpha^{[n]}$, but also for $\Gamma^{[n]}$. We can also obtain functions which are not first-integrals of $\Gamma^{[n]}$ with help of the $n$-forms $\Omega_R$ and $\Omega_I$. Let $I^c_{n}$, $I^s_{n}$ be the function defined as
		\begin{equation}
		\begin{gathered}
		I^c_{n} := \Omega_R^{[n]}(\Delta^{(0)}, S_{01} (\Delta^{(1)}), \ldots, S_{0(n-1)} (\Delta^{(n-1)})) =\mathfrak{Re} (\det(\psi^{(0)},\ldots ,\psi^{(n-1)})), \\
		I^s_{n} := \Omega_I^{[n]}(\Delta^{(0)}, S_{01} (\Delta^{(1)}), \ldots, S_{0(n-1)} (\Delta^{(n-1)})) =\mathfrak{Im} (\det(\psi^{(0)} , \ldots ,\psi^{(n-1)})).
		\end{gathered}
		\end{equation}
		These functions satisfy that $J^{[m]} ({\rm d}I^c_n) = {\rm d}I^s_n$, so they are functionally independent among themselves. As they are not first-integrals of $\Gamma^{[n]}$, they are functionally independent of functions in (\ref{defIk}).
		
		To sum up, there exist $2n$ first-integrals of the action of $\mathfrak{su}(n)$ on $\mathcal{M}_{2n}^\times$ given by $I^c_1, I^s_1$, ..., $I^c_n, I^s_n$. They satisfy that the matrix of partial derivatives of these functions with respect to the coordinates of $\psi^{(0)}$ is non-singular. Therefore, the solution $\psi^{(0)}$ to the equations 
		$$
		I^c_j (\psi^{(0)}, \psi^{(1)}, \ldots \psi^{(n-1)}) = k_{2j-1}, \quad
		I^s_j (\psi^{(0)}, \psi^{(1)}, \ldots \psi^{(n-1)}) = k_{2j}, \quad
		j=\overline{1,n},
		$$
		can be obtained, at least locally, in terms of the coordinates of $\psi^{(1)}$, ..., $\psi^{(n-1)}$ and $2n$ real constants $k_1$, ..., $k_{2n}$.
		
		Since all functions are linear in the coordinates of $\psi^{(0)}$, then $\psi^{(0)}$, and consequently the superposition rule, can be obtained by solving the system
		{\footnotesize
		$$
		\begin{pmatrix}
		\frac{\partial I^c_1}{\partial q^{(0)}_1} & \frac{\partial I^c_1}{\partial p^{(0)}_1} & \cdots & \frac{\partial I^c_1}{\partial q^{(0)}_n} & \frac{\partial I^c_1}{\partial p^{(0)}_n}\\
		\frac{\partial I^s_1}{\partial q^{(0)}_1} & \frac{\partial I^s_1}{\partial p^{(0)}_1} & \cdots & \frac{\partial I^s_1}{\partial q^{(0)}_n} & \frac{\partial I^s_1}{\partial p^{(0)}_n}\\
		\vdots & \vdots & \ddots & \vdots & \vdots\\
		\frac{\partial I^c_n}{\partial q^{(0)}_1} & \frac{\partial I^c_n}{\partial p^{(0)}_1} & \cdots & \frac{\partial I^c_n}{\partial q^{(0)}_n} & \frac{\partial I^c_n}{\partial p^{(0)}_n}\\
		\frac{\partial I^s_n}{\partial q^{(0)}_1} & \frac{\partial I^s_n}{\partial p^{(0)}_1} & \cdots & \frac{\partial I^s_n}{\partial q^{(0)}_n} & \frac{\partial I^s_n}{\partial p^{(0)}_n}
		\end{pmatrix}
		\begin{pmatrix}
		q^{(0)}_1 \\ p^{(0)}_1 \\ \vdots \\ q^{(0)}_n \\ p^{(0)}_n
		\end{pmatrix} =
		\begin{pmatrix}
		q^{(1)}_1 & p^{(1)}_1 & \cdots &  q^{(1)}_n & p^{(1)}_n\\
		p^{(1)}_1 & -q^{(1)}_1 & \cdots &  p^{(1)}_n & -q^{(1)}_n\\
		\vdots & \vdots & \ddots & \vdots & \vdots\\
		q^{(n-1)}_1 & p^{(n-1)}_1 & \cdots &  q^{(n-1)}_n & p^{(n-1)}_n\\
		p^{(n-1)}_1 & -q^{(n-1)}_1 & \cdots &  p^{(n-1)}_n & -q^{(n-1)}_n\\
		\frac{\partial I^c_n}{\partial q^{(0)}_1} & \frac{\partial I^c_n}{\partial p^{(0)}_1} & \cdots & \frac{\partial I^c_n}{\partial q^{(0)}_n} & \frac{\partial I^c_n}{\partial p^{(0)}_n}\\
		\frac{\partial I^s_n}{\partial q^{(0)}_1} & \frac{\partial I^s_n}{\partial p^{(0)}_1} & \cdots & \frac{\partial I^s_n}{\partial q^{(0)}_n} & \frac{\partial I^s_n}{\partial p^{(0)}_n}
		\end{pmatrix}
		\begin{pmatrix}
		q^{(0)}_1 \\ p^{(0)}_1 \\ \vdots \\ q^{(0)}_n \\ p^{(0)}_n
		\end{pmatrix} = 
		\begin{pmatrix} k_1\\ \vdots\\ k_{2n} \end{pmatrix}.
		$$}
		
		Observe that
		\begin{equation}
		\label{eq:9.4}
		\begin{gathered}
		\sum_{j=1}^n (q_j^{(r)} q_j^{(s)} + p_j^{(r)} p_j^{(s)} ) = g^{[n]} (\Delta^{(r)}, S_{rs} (\Delta^{(s)})), \\
		\sum_{j=1}^n (q_j^{(r)} p_j^{(s)} - p_j^{(r)} q_j^{(s)} ) = g^{[n]} (\Gamma^{(r)}, S_{rs} (\Delta^{(s)})),
		\end{gathered} \qquad r \neq s \in\{ 1, \ldots, n-1\}.
		\end{equation}
		These values are constants of motion. Given a set of linearly independent vectors $\psi^{(1)}, \ldots, \psi^{(n-1)} \in \mathcal{M}_{2n}$, one can always find linear combinations of them such that these functions are zero for $r\neq s$. Also, we have that
		\begin{equation*}
		\begin{gathered}
		\sum_{\alpha=1}^n \left[q^{(i)}_\alpha \frac{\partial I^c_n}{\partial q^{(0)}_\alpha} + p^{(i)}_\alpha \frac{\partial I^c_n}{\partial p^{(0)}_\alpha}\right] =
		-\sum_{\alpha=1}^n \left[p^{(i)}_\alpha \frac{\partial I^s_n}{\partial q^{(0)}_\alpha} - q^{(i)}_\alpha \frac{\partial I^s_n}{\partial p^{(0)}_\alpha}\right] 
		= \mathfrak{ Re}(\det(\psi^{(i)}, \psi^{(1)} , \ldots, \psi^{(n-1)}) )=0,\\
		\sum_{\alpha=1}^n \left[p^{(i)}_\alpha \frac{\partial I^c_n}{\partial q^{(0)}_\alpha} - q^{(i)}_\alpha \frac{\partial I^c_n}{\partial p^{(0)}_\alpha}\right] =
		\sum_{\alpha=1}^n \left[q^{(i)}_\alpha \frac{\partial I^s_n}{\partial q^{(0)}_\alpha} + p^{(i)}_\alpha \frac{\partial I^s_n}{\partial p^{(0)}_\alpha}\right]
		=\mathfrak {Im}(\det(\psi^{(i)},\psi^{(1)}, \ldots, \psi^{(n-1)} ))=0,\\
		\sum_{\alpha=1}^n \left[ \left(\frac{\partial I^c_n}{\partial q^{(0)}_\alpha}\right)^2 + \left(\frac{\partial I^c_n}{\partial p^{(0)}_\alpha}\right)^2 \right] = \sum_{\alpha=1}^n \left[ \left(\frac{\partial I^s_n}{\partial q^{(0)}_\alpha}\right)^2 + \left(\frac{\partial I^s_n}{\partial p^{(0)}_\alpha}\right)^2 \right] = \prod_{\alpha=1}^{n-1}\|\psi^{(\alpha)}\|^2,
		\end{gathered}
		\end{equation*}
		for $i\in\overline{1,n-1}$.
		Defining $\Theta:=\prod_{\alpha=1}^{n-1}\|\psi^{(\alpha)}\|^2$ and choosing $\psi^{(1)}, \ldots, \psi^{(n-1)}$, so that the quantities in \eqref{eq:9.4} are zero, we obtain:
		{\footnotesize 
		\begin{equation}
		\begin{pmatrix}
		q^{(0)}_1 \\ p^{(0)}_1 \\ \vdots \\ q^{(0)}_n \\ p^{(0)}_n
		\end{pmatrix} = \begin{pmatrix}
		\dfrac{q^{(1)}_1}{||\psi^{(1)}||^2} & \dfrac{p^{(1)}_1}{||\psi^{(1)}||^2} & \cdots &  \dfrac{q^{(1)}_n}{||\psi^{(1)}||^2} & \dfrac{p^{(1)}_n}{||\psi^{(1)}||^2} \\
		\dfrac{p^{(1)}_1}{||\psi^{(1)}||^2} & -\dfrac{q^{(1)}_1}{||\psi^{(1)}||^2} & \cdots &  \dfrac{p^{(1)}_n}{||\psi^{(1)}||^2} & -\dfrac{q^{(1)}_n}{||\psi^{(1)}||^2} \\
		\vdots & \vdots & \ddots&\vdots&\vdots\\
		\dfrac{q^{(n-1)}_1}{||\psi^{(n-1)}||^2} & \dfrac{p^{(n-1)}_1}{||\psi^{(n-1)}||^2} & \cdots &  \dfrac{q^{(n-1)}_n}{||\psi^{(n-1)}||^2} & \dfrac{p^{(n-1)}_n}{||\psi^{(n-1)}||^2} \\
		\dfrac{p^{(n-1)}_1}{||\psi^{(n-1)}||^2} & -\dfrac{q^{(n-1)}_1}{||\psi^{(n-1)}||^2} & \cdots &  \dfrac{p^{(n-1)}_n}{||\psi^{(n-1)}||^2} & -\dfrac{q^{(n-1)}_n}{||\psi^{(n-1)}||^2} \\ \\
		\dfrac{1}{\Theta} \dfrac{\partial I^c_n}{\partial q^{(0)}_1} & \dfrac{1}{\Theta} \dfrac{\partial I^c_n}{\partial p^{(0)}_1} & \cdots & \dfrac{1}{\Theta} \dfrac{\partial I^c_n}{\partial q^{(0)}_n}  & \dfrac{1}{\Theta} \dfrac{\partial I^c_n}{\partial p^{(0)}_n}\\ \\
		\dfrac{1}{\Theta} \dfrac{\partial I^s_n}{\partial q^{(0)}_1} & \dfrac{1}{\Theta} \dfrac{\partial I^s_n}{\partial p^{(0)}_1} & \cdots & \dfrac{1}{\Theta} \dfrac{\partial I^s_n}{\partial q^{(0)}_n} & \dfrac{1}{\Theta} \dfrac{\partial I^s_n}{\partial p^{(0)}_n}
		\end{pmatrix}^T
		\begin{pmatrix} k_1\\ \vdots\\ k_{2n} \end{pmatrix}.
		\end{equation}}
		
		This expression gives rise to a superposition rule $\Phi:[\mathcal{M}_{2n}^\times]^{n-1}\times \mathcal{M}_{2n}^\times \rightarrow \mathcal{M}_{2n}^\times$ for $X_{\mathcal{M}_{2n}^\times}$.
		Observe that when we choose particular solutions $\psi_1,\ldots\psi_{n-1}$ with norm one, then the matrix of the above system becomes unimodular and hence, when $\sum_{\alpha=1}^{2n}k_\alpha^2=1$, we obtain that $\sum_{\alpha=1}^{n}\left((q^{(0)}_\alpha)^2+(p^{(0)}_\alpha)^2\right)=1$. This allows us to restrict the above superposition rule to a new one of the form $\Phi_S:\mathcal{S}_n^{n-1}\times \mathcal{S}_{n}\rightarrow \mathcal{S}_n$ for $X_{\mathcal{S}_n}$.

		\section{Conclusions and outlook}
		The present work has laid down the basis for the study of quantum systems on finite-dimensional Hilbert spaces and some of its projective spaces through the theory of Lie systems. We have proved that all such systems and their projections onto projective spaces are Lie systems and we have found some of their superposition rules. 
		
		An analogous development can be carried out for Heisenberg equations. We aim to find a formalism based on Lie systems to study those equations in the future. We also recently found that Lie systems  appear in the description of Kossakowski--Lindbland equations, which can be of interest for their analysis. Other topics concerning the geometry of Lie systems in quantum mechanics and their study from differential geometry are also in progress.
		
		\section{Acknowledgements}
		
		J.F. Cari\~nena, J.A. Jover-Galtier and J. Clemente-Gallardo acknowledge partial financial support from MINECO (Spain) grant number MTM2015-64166-C2-1. The research of J. de Lucas was
		supported under the contract 1100/112000/16. Research of J. Clemente--Gallardo and J.A. Jover-Galtier was financed by projects MICINN Grants FIS2013-46159-C3-2-P.
		Research of J.A. Jover-Galtier was financed by DGA grant number B100/13 and by ``Programa de FPU del Ministerio de Educaci\'on, Cultura y Deporte'' grant number FPU13/01587.


\begin{thebibliography}{15}
			
			\bibitem{FM}
			R. Abraham, J.E. Marsden, Foundations of Mechanics, Addison-Wesley, Redwood City, 1987.
			
			\bibitem{ADR12}
			R.M. Angelo, E.I. Duzzioni, A.D. Ribeiro, Integrability in time-dependent systems with one degree of freedom, J. Phys. A: Math. Theor. 45 (2012) 055101.
			
			\bibitem{AS99}
			A. Ashtekar, T.A. Schilling, Geometrical Formulation of Quantum Mechanics, in: A. Harvey (Ed.), Einstein's Path, Springer, New York, 1999, pp. 23--65.
			
			\bibitem{BBHLS13}
			\'A. Ballesteros, A. Blasco, F.J. Herranz, J. de Lucas, C. Sard\'on, Lie-Hamilton systems on the plane: Properties, classification and applications, J. Differ. Equ. 258 (2015) 2873--2907.
			
			\bibitem{BBHMR09}
			\'A. Ballesteros, A. Blasco, F.J. Herranz, F. Musso, O. Ragnisco, (Super)integrability from coalgebra symmetry: Formalism and applications, J. Phys. Conf. Ser. 175 (2009) 012004.
			
			\bibitem{BBR06}
			\'A. Ballesteros, A. Blasco, O. Ragnisco, Poisson coalgebras, symplectic realizations and integrable systems, Monogr. Real Acad. Cienc. Exact. Fis. Quim. Nat. Zaragoza 29 (2006) 27--36.
			
			\bibitem{BCHLS13}
			\'A. Ballesteros, J.F. Cari\~nena, F.J. Herranz, J. de Lucas, C. Sard\'on, From constants of motion to superposition rules for Lie-Hamilton systems, J. Phys. A: Math. Theor. 46 (2013) 285203.
			
			\bibitem{BCR96}
			\'A. Ballesteros, M. Corsetti, O. Ragnisco, $N$-dimensional classical integrable systems from Hopf algebras, Czechoslov. J. Phys. 46 (1996) 1153--1163.
			
			\bibitem{BR98}
			\'A. Ballesteros, O. Ragnisco, A systematic construction of completely integrable Hamiltonians from coalgebras, J. Phys. A: Math. Gen. 31 (1998) 3791--3813.
			
			\bibitem{Be84}
			M. V. Berry, Quantal phase factors accompanying adiabatic changes, Proc. R. Soc. London Ser. A. 392 (1984) 45--57.
			
			\bibitem{BHLS15}
			A. Blasco, F.J. Herranz, J. de Lucas, C. Sard\'on, Lie-€"Hamilton systems on the plane: applications and superposition rules, J. Phys. A: Math. Theor. 48 (2015) 345202.
			
			\bibitem{BH08}
			D.C. Brody, L.P. Hughston, The Quantum Canonical Ensemble, J. Math. Phys. 39 (1998) 6502--6508.
			
			\bibitem{BH01}
			D.C. Brody, L.P. Hughston, Geometric Quantum Mechanics, J. Geom. Phys. 38 (2001) 19--53.
			
			
			\bibitem{CS06}
			R. Campoamor-Stursberg, Determinantal formulae for the Casimir operators of inhomogeneous Lie algebras, J. Phys. A: Math. Gen. 39 (2006) 2325–2337.
			
			\bibitem{CS15}
			R. Campoamor-Stursberg, Low dimensional Vessiot-Guldberg-Lie algebras of
			second-order ordinary differential equations, Symmetry 8, (2016), 15.
			\bibitem{CCM07}
			J.F. Cari\~nena, J. Clemente-Gallardo, G. Marmo, Geometrization of quantum mechanics, Theor. Math. Phys. 152 (2007) 894--903.
			
			\bibitem{CGL11}
			J.F. Cari\~nena, J. Grabowski, J. de Lucas, Superposition rules for higher order systems and their applications, J. Phys. A: Math. Theor. 45 (2012) 185202.
			
			\bibitem{CGLS13}
			J.F. Cari\~nena, J. Grabowski, J. de Lucas, C. Sard\'on, Dirac-Lie systems and Schwarzian equations, J. Differ. Equ. 257 (2014) 2303--2340.
			
			\bibitem{CGM00}
			J.F. Cari\~nena, J. Grabowski, G. Marmo, Lie-Scheffers systems: A geometric approach, Bibliopolis, Naples, 2000.
			
\bibitem{CGM01}
J.F. Cari\~nena, J. Grabowski, G. Marmo,
{ Some physical applications of systems of differential equations admitting a superposition rule},
Rep. Math. Phys. 48 (2001) 47--58.
			\bibitem{CGM07}
			J.F. Cari\~nena, J. Grabowski, G. Marmo, Superposition rules, Lie theorem, and partial differential equations, Reports Math. Phys. 60 (2007) 237--258.
				\bibitem{CL} J.F. Cari\~nena, J. de Lucas, Applications of Lie systems in dissipative Milne--Pinney equations, Int. J. Geom. Meth. Modern Phys. 6 (2009) 683--699. 
		
			\bibitem{Dissertations}
			J.F. Cari\~nena, J. de Lucas, Lie systems: theory, generalisations, and applications, Diss. Math. 479 (2011) 162.
			
			\bibitem{CLR09}
			J.F. Cari\~nena, J. de Lucas, A. Ramos, A geometric approach to time evolution operators of Lie quantum systems, Int. J. Theor. Phys. 48 (2009) 1379--1404.
			
			\bibitem{CLR08}
			J.F. Cari\~nena, J. de Lucas, M.F. Ra\~nada, Recent applications of the theory of Lie systems in Ermakov systems, SIGMA 4 (2008) 031.
			
			\bibitem{CLS13}
			J.F. Cari\~nena, J. de Lucas, C. Sard\'on, Lie-Hamilton Systems: Theory and Applications, Int. J. Geom. Methods Mod. Phys. 10 (2013) 1350047.
			
			\bibitem{CR03}
			J.F. Cari\~nena, A. Ramos, Applications of Lie systems in Quantum Mechanics and Control Theory, Banach Cent. Publ. 59 (2003) 143--162.
			
			\bibitem{CR04}
			J.F. Cari\~nena, A. Ramos, Lie systems and connections in fibre bundles: applications in Quantum Mechanics, in: 9th Int. Conf. Differ. Geom. Its Appl., Matfyzpress, Prague, 2005, pp. 437--452.
			
			\bibitem{CSR13}
			O. Casta\~nos, D. Schuch, O. Rosas-Ortiz, Generalized coherent states for time-dependent and nonlinear Hamiltonian operators via complex Riccati equations, J. Phys. A: Math. Theor. 46 (2013) 075304.
			
			\bibitem{CV14}
			J.N. Clelland, P.J. Vassiliou, A solvable string on a Lorentzian surface, Differ. Geom. Its Appl. 33 (2014) 177--198.
			
			\bibitem{Clem06}
			J. Clemente-Gallardo, On the relation between control systems and Lie systems, Monogr. Real Acad. Cienc. Exact. Fis. Quim. Nat. Zaragoza 29 (2006) 65--78.
			
			\bibitem{CM08}
			J. Clemente-Gallardo, G. Marmo, Basics of quantum mechanics, geometrization and some applications to quantum information, Int. J. Geom. Methods Mod. Phys. 5 (2008) 989--1032.
			
			\bibitem{Fi14}
			Z. Fiala, Evolution equation of Lie-type for finite deformations, time-discrete integration, and incremental methods, Acta Mech. 226 (2015) 17--35.
			
			\bibitem{Ru08}
			R. Flores-Espinoza, Monodromy factorization for periodic Lie systems and reconstruction phases, in: P. Kielanowski, A. Odzijewicz, M. Schlichenmaier, T. Voronov (Eds.), AIP Conf. Proc., AIP Publishing, New York, 2008, pp. 189--195.
			
			\bibitem{Ru10}
			R. Flores-Espinoza, Periodic first integrals for Hamiltonian systems of Lie type, Int. J. Geom. Methods Mod. Phys. 8 (2011) 1169--1177.
			
			\bibitem{FLV10}
			R. Flores-Espinoza, J. de Lucas, Y.M. Vorobiev, Phase splitting for periodic Lie systems, J. Phys. A: Math. Theor. 43 (2010) 205--208.
			
			\bibitem{Fu04}
			G. Fubini, Sulle metriche definite da una forma Hermitiana, Atti Del R. Ist. Veneto Di Sci. Lett. Ed Arti. 63 (1904) 501--513.
			
			\bibitem{Helg78}
			S. Helgason, Differential Geometry, Lie Groups, and Symmetric Spaces, Academic Press, London, 1978.
			
			\bibitem{HLS15AIMS}
			F.J. Herranz, J. de Lucas, C. Sard\'on, Jacobi-Lie systems: theory and low-dimensional classification, in: Proceeding 10th AIMS Conf., 2015, to be published.
			
			\bibitem{CRC94}
			N.H. Ibragimov, ed., CRC Handbook of Lie Group Analysis of Differential Equations, Volume I: Symmetries, Exact Solutions, and Conservation Laws, CRC Press, Boca Raton, 1993.
			
			\bibitem{Ib99}
			N.H. Ibragimov, Elementary Lie group analysis and ordinary differential equations, John Wiley and Sons, Chichester, 1999.
			
			\bibitem{Ib00}
			N.H. Ibragimov, Discussion of Lie's nonlinear superposition theory, in: Proc. Int. Conf. MOGRAN 2000, Ufa (Russia), 2000, pp. 3--5.
			
			\bibitem{Ib09}
			N.H. Ibragimov, Utilization of canonical variables for integration of systems of first-order differential equations, Arch. ALGA 6 (2009) 1--18.
			
			\bibitem{Ka98}
			R.S. Kaushal, Construction of exact invariants for time dependent classical dynamical systems, Int. J. Theor. Phys. 37 (1998) 1793--1856.
				\bibitem{LS16} M. de Le\'on, C. Sard\'on,
			A geometric Hamilton--Jacobi theory
for a Nambu--Poisson structure, arXiv:1604.08904.

			
			\bibitem{LM87}
			P. Libermann, C.M. Marle, Symplectic Geometry and Analytical Mechanics, D. Reidel Publishing Company, Dordrecht, 1987.
			
			\bibitem{LS}
			S. Lie, G. Scheffers, Vorlesungen \"uber continuierliche Gruppen mit geometrischen und anderen Anwendungen, B. G. Teubner, Leipzig, 1893.
			
			\bibitem{LV15}
			J. de Lucas, S. Vilari\~no, $k$-Symplectic Lie systems: theory and applications, J. Differ. Equ. 258 (2015) 2221--2255.
			
			\bibitem{Marle08}
			C.M. Marle, Calculus on Lie algebroids, Lie groupoids and Poisson manifolds, Diss. Math. 457 (2008).
			
			\bibitem{MT14}
			L. Menini, A. Tornamb\`e, Nonlinear superposition formulas for two classes of non-holonomic systems, J. Dyn. Control Syst. 20 (2014) 365--382.
			
			\bibitem{Olver}
			P.J. Olver, Applications of Lie groups to differential equations, Springer-Verlag, New York, 2000.
			
			\bibitem{JPOT}
			J.P. Ortega, T.S. Ratiu, Momentum maps and Hamiltonian reduction, Birkha\"user, Boston, 2004.
			
			\bibitem{Pi14}
			G. Pietrzkowski, Explicit solutions of the $\mathfrak{a}_1$-type Lie-Scheffers system and a general Riccati equation, J. Dyn. Control Syst. 18 (2012) 551--571.
			
			\bibitem{St05}
			E. Study, K\"urzeste Wege im komplexen Gebiet, Math. Ann. 60 (1905) 321--378.
			
			\bibitem{IV}
			I. Vaisman, Lectures on the geometry of Poisson manifolds, Birkha\"user Verlag, Basel, 1994.
			
			\bibitem{We83}
			A. Weinstein, The local structure of Poisson manifolds, J. Differ. Geom. 18 (1983) 523--€"557.

			\bibitem{PW}
			P. Winternitz, Lie groups and solutions of nonlinear differential equations, in: K.B. Wolf (Ed.), Nonlinear Phenomena, Lect. Notes Phys. 189, Springer-Verlag, Oactepec (Mexico), 1983, pp. 263--305.
			
			\bibitem{Wo64}
			J. Wolf, On the Classification of Hermitian Symmetric Spaces, Indiana Univ. J. 
			Math. Mech. 13  (1964) 489--495.
			
			
			
		
		\end{thebibliography}
\end{document}